\newtheorem{theorem}{Theorem}[section]
\newtheorem{lemma}[theorem]{Lemma}
\newtheorem{corollary}[theorem]{Corollary}
\theoremstyle{definition}
\newtheorem{definition}[theorem]{Definition}
\def\defeq{\stackrel{\mathrm{def}}{=}}
\newenvironment{fminipage}%
  {\begin{Sbox}\begin{minipage}}%
  {\end{minipage}\end{Sbox}\fbox{\TheSbox}}
\newenvironment{algbox}[0]{\vskip 0.2in
\noindent 
\begin{fminipage}{6.3in}
}{
\end{fminipage}
\vskip 0.2in
}
\newcommand{\xmin}[0]{x_{\mathrm{min}}}
\DeclareMathOperator*{\argmin}{arg\,min}
\newcommand{\OV}{\textsc{OrthogonalVectors}\xspace}
\newcommand{\MinDeg}{\textsc{MinDegreeOrdering}\xspace}
\def\prob#1#2{\mbox{\textnormal{Pr}}_{#1}\left[ #2 \right]}
\def\expec#1#2{{\mathbb{E}}_{#1}\left[ #2 \right]}
\renewcommand\aa{\boldsymbol{\mathit{a}}}
\newcommand{\variable}[1]{\mathit{#1}}
\newcommand{\Exp}{\normalfont\text{Exp}}
\newcommand{\E}{\mathbb{E}}
\begin{document}

\title{
On Computing Min-Degree Elimination Orderings
}

\author{
Matthew Fahrbach\thanks{Supported in part by a National Science
Foundation Graduate Research Fellowship under grant DGE-1650044.}
\\
Georgia Tech\\
\texttt{matthew.fahrbach@gatech.edu}
\and
Gary L. Miller\thanks{This material is based on work supported by the
National Science Foundation under Grant No. 1637523.}\\
CMU\\
\texttt{glmiller@cs.cmu.edu}
\and
Richard Peng\thanks{This material is based on work supported by the
National Science Foundation under Grant No. 1637566.}\\
Georgia Tech\\
\texttt{rpeng@cc.gatech.edu}\and
Saurabh Sawlani\footnotemark[3]\\
Georgia Tech\\
\texttt{~~~~~sawlani@gatech.edu~~~~}
\and
Junxing Wang\\
CMU\\
\texttt{junxingw@cs.cmu.edu}
\and
Shen Chen Xu\footnotemark[2]\\
Facebook\thanks{Part of this work was done while at CMU.}\\
  \texttt{shenchex@cs.cmu.edu}
}
\maketitle

\begin{abstract}
We study faster algorithms for producing the minimum degree ordering used to
speed up Gaussian elimination.
This ordering is based on viewing the non-zero elements of a symmetric
positive definite matrix as edges of an undirected graph,
and aims at reducing the additional non-zeros (fill) in the matrix
by repeatedly removing the vertex of minimum degree.
It is one of the most widely used primitives for pre-processing
sparse matrices in scientific computing. 

Our result is in part motivated by the observation that sub-quadratic time
algorithms for finding min-degree orderings are unlikely,
assuming the strong exponential time hypothesis (SETH).
This provides justification for the lack of provably efficient
algorithms for generating such orderings, and leads us to study speedups
via degree-restricted algorithms as well as approximations.
Our two main results are:
(1) an algorithm that produces a min-degree ordering whose maximum
degree is bounded by $\Delta$ in $O(m \Delta \log^3{n})$ time, and
(2) an algorithm that finds an $(1 + \epsilon)$-approximate marginal
min-degree ordering in $O(m \log^{5}n \epsilon^{-2})$ time.

Both of our algorithms rely on a host of randomization tools related to the
$\ell_0$-estimator by [Cohen `97].
A key technical issue for the final nearly-linear time algorithm are the
dependencies of the vertex removed on
the randomness in the data structures.
To address this, we provide a method for generating a pseudo-deterministic
access sequence, which then allows the incorporation of data structures
that only work under the oblivious adversary model.
\end{abstract}

\pagenumbering{gobble}

\vfill

\pagebreak

\pagenumbering{arabic}

\section{Introduction}
\label{sec:Introduction}

Many algorithms in numerical analysis and scientific computing benefit from
speedups using combinatorial graph
theory~\cite{NaumannS12:book,HendricksonP06}.  Such connections are due to the
correspondence between non-zero entries of matrices and edges of graphs.  The
minimum degree algorithm is a classic heuristic for minimizing the space and
time cost of Gaussian elimination, which solves a system of linear equations by
adding and subtracting rows to eliminate variables.  As its name suggests, it
repeatedly pivots on the variable involved in the fewest number of
equations~\cite{GeorgeL89}.\footnote{We will assume the system is symmetric positive
definite (SPD) and thus the diagonal will remain strictly positive, allowing for any
pivot order.}  There are many situations where this is suboptimal.
Nonetheless, it is still a widely used and effective heuristic in practice
\cite{Amestoy2004,DGLN04}.  It is integral to the direct methods for solving
linear systems exactly in LaPack~\cite{Lapack}, which is in turn called by the
``\textbackslash'' command for solving linear systems in MATLAB~\cite{Matlab17}.
It is also a critical part of the linear algebra suite in Julia~\cite{Julia12}.

While the best theoretical running times for solving such systems either
rely on fast matrix multiplication~\cite{LeGall14} or iterative
methods~\cite{SpielmanTengSolver:journal,KoutisMP12}, direct methods
and their speedups are preferred in many cases.
For such elimination-based methods,
performances better than the general $O(n^3)$ bound for
naive Gaussian elimination are known only when the non-zero graph
has additional separators~\cite{LiptonTarjan79,LiptonRT79,GilbertT87} or
hierarchical structure~\cite{PouransariCD17}.
Nonetheless, these methods are still preferable for a variety of reasons.
They only depend on the non-zero structure,
and have fewer numerical issues.
More importantly, direct methods also benefit more from the inherent
sparsity in many real-world input instances.
For an input matrix and a given elimination order of the variables,
the non-zero structure that arises over the course of the elimination
steps has a simple characterization graph
theoretically~\cite{Rose1973,RoseTL76,LiptonRT79,GilbertT87}.

This characterization of additional non-zero entries, known as fill,
is at the core of elimination trees, which
allow one to precisely allocate memory for the duration of the
algorithm in $n\alpha(n)$ time~\cite{GilbertNP94}.
The reliable performance of elimination-based methods has led to
the study of elimination-based methods for solving more structured
linear systems~\cite{KyngS16}.
However, recent hardness results seem to indicate that speedups
via additional numerical structure may be limited to families of
specific problems instead of all sparse matrices arising
in scientific computing and numerical analysis~\cite{KyngZ17:arxiv}.

Although computing an elimination ordering that minimizes the total
cost is NP-hard in general~\cite{Berman90,Yannakakis81},
the minimum degree heuristic is exceptionally useful in practice.
When the non-zeros of the matrix are viewed as edges of a graph,
eliminating a vertex is equivalent to creating a clique on its
neighborhood and then deleting this vertex.
With this view in mind, the traditional min-degree algorithm can be viewed as:
(1) find the vertex $v$ with minimum degree
(which we term the fill-degree to avoid confusion with
the original graph)
in $O(n)$ time;
(2) add a clique among all its neighbors in $O(n^2)$ time;
(3) remove it together with all its edges from the graph in $O(n)$ time.

This leads to a running time that is $O(n^3)$---as high as
the cost of Gaussian elimination itself.
Somewhat surprisingly, despite the wide use of the min-degree
heuristic in practice, there have been very few works on
provably faster algorithms for producing this ordering.
Instead, heuristics such as AMD
(approximate-minimum degree ordering)~\cite{AmestoyDD96}
aim to produce orderings similar to minimum-degree orderings in provably faster
times such as $O(nm)$ without degree pivot size bounds.

Our investigation in this paper revolves around the question of finding
provably more efficient algorithms for producing exact and approximate
min-degree orderings.
We combine sketching with implicit representations of the fill
structure to obtain provably $O(nm \log{n})$ time algorithms.
These algorithms utilize representations of intermediate non-zero
structures related to elimination trees in order to implicitly
examine the fill, which may be much larger.
We also uncover a direct but nonetheless surprising connection between
finding min-degree vertices and popular hardness assumptions.
In particular, we show that computing the vertex of minimum degree after
several specified pivot steps cannot be done faster than $O(n^{2})$ time,
assuming the widely-believed strong exponential time
hypothesis~\cite{Williams05}.

Nevertheless, we are able to extend various tools from sketching and sampling
to give several improved bounds for computing and approximating minimum degree
orderings.
We show that our use of sketching can be much more efficient
when the maximum degree is not too large.
This in turn enables us to use sampling to construct data structures that
accurately approximate the fill-degrees of vertices in graphs in $\text{polylog}(n)$
time, even under pivoting of additional vertices.
Leveraging such approximate data structures, we obtain an algorithm for
producing an approximate marginal minimum degree ordering, which at each step
pivots a vertex whose degree is close to minimum, in nearly-linear time.
Our main result is:
\begin{theorem}
\label{thm:main}
Given an $n \times n$ matrix $A$ with non-zero graph structure $G$
containing $m$ non-zeros, we can produce an $\epsilon$-approximate greedy
min-degree ordering in $O(m \log^{5}n \epsilon^{-2})$ time.
\end{theorem}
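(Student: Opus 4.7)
The plan is to reduce the approximate min-degree problem to the degree-bounded variant of the paper's earlier algorithm (the $O(m\Delta\log^3 n)$ result) via $\ell_0$-sketching, and then to handle the adaptivity issue that arises because the chosen pivot depends on the random sketches. The observation driving the reduction is that a Cohen-style $\ell_0$ estimator is completely determined by the $k = O(\epsilon^{-2}\log n)$ smallest independent $\Exp(1)$ variables placed on vertices and lying in the current fill-neighborhood of a vertex $v$. So rather than maintaining full fill-neighborhoods, I would maintain for each $v$ only the $k$ smallest such exponentials reachable through already-eliminated vertices. This is a degree-$k$ fill structure and can be maintained under pivots with the data structure underlying the first theorem, instantiated at $\Delta = O(\epsilon^{-2}\log n)$.

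The pivot rule at each round is to compute the estimator $\tilde d(v)$ for every un-eliminated vertex, group these values into geometrically spaced buckets of ratio $1+\epsilon$, and output any vertex in the lowest nonempty bucket. A standard Chernoff bound on the order statistics of $\Exp(1)$ variables shows that $\tilde d(v) \in (1\pm\epsilon)\, d^{\mathrm{fill}}(v)$ with high probability, so the chosen vertex has true fill-degree within a $(1+O(\epsilon))$ factor of the minimum, matching the approximate-greedy guarantee. Charging $n$ pivots and $O(\epsilon^{-2}\log n)$ independent sketches through the degree-$k$ data structure, and paying an additional $O(\log n)$ for bucketing and query dispatch, yields the target $O(m\log^{5} n\,\epsilon^{-2})$ running time.

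The main obstacle, and the step I expect to require the most care, is the adaptivity issue: the $\ell_0$-sketch is only correct when the sequence of queries it receives is independent of its internal randomness, but the pivot I choose at every step is a deterministic function of the sketch outputs and is thus correlated with the exponentials stored inside it. To resolve this I would build what the abstract calls a pseudo-deterministic access sequence: separate the randomness used to \emph{decide} the pivot from the randomness used inside the maintained sketches by querying a small, independently re-randomized ``decision'' batch of sketches only to pick the current pivot. Because the identity of the lowest-degree bucket is a $(1\pm\epsilon)$-stable function of the true fill-degrees, this decision coincides with high probability with a canonical choice determined by the input graph alone; feeding the canonical sequence of pivots back into a fresh ``replay'' batch of sketches then satisfies the obliviousness requirement, and standard $\ell_0$-estimation concentration applies across all $n$ steps, which completes the argument.
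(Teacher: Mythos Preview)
Your reduction to the degree-capped data structure at $\Delta = O(\epsilon^{-2}\log n)$ and the bucketing step are essentially what the paper does in its \textsc{ApproxDegreeDS} (Theorem~\ref{thm:ApproxDegreeDS}), so that part is fine. The gap is in your decorrelation step.

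Your claim that ``the identity of the lowest-degree bucket is a $(1\pm\epsilon)$-stable function of the true fill-degrees, so this decision coincides with high probability with a canonical choice determined by the input graph alone'' does not hold. When several vertices have fill-degrees within a $(1\pm\epsilon)$ factor of the minimum --- which is the generic situation --- the vertex you output from the lowest bucket is determined by the sketch randomness, not by the graph; different random seeds legitimately produce different pivots, all of which are valid $\epsilon$-approximate choices. So there is no canonical sequence to replay. Moreover, your ``independently re-randomized decision batch'' is left unspecified: if it is rebuilt from scratch at each of the $n$ steps it costs $\Omega(m)$ per step and you are back to $\Omega(nm)$; if it is maintained across steps it is again correlated with the pivots.

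The paper's fix is qualitatively different. It introduces a \emph{second}, local estimator $\textsc{EstimateDegree}$ (Theorem~\ref{thm:DegreeEstimation}) that samples directly in the component graph with fresh randomness on every call, and it defines the pivot as the minimizer of $(1-\delta_v)\cdot\textsc{EstimateDegree}(v)$ where $\delta_v\sim\hat\epsilon\cdot\Exp(1)$. This sequence is fixed by randomness that is independent of the maintained sketches, so the sketches face an oblivious adversary. The sketches are then used only to narrow each step down to $O(1)$ candidates per bucket (via order statistics of the exponentials), and the memoryless property of $\Exp(1)$ (Lemma~\ref{lem:CandidateChosen}) guarantees that every candidate on which $\textsc{EstimateDegree}$ is called is the actual pivot with constant probability, which is what makes the total cost of these calls $O(m\,\mathrm{polylog}\,n\,\epsilon^{-2})$ rather than $\Omega(n^2)$. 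Your proposal is missing both ingredients: the independent per-call estimator and the exponential-perturbation charging argument.
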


Our algorithms combine classical ideas in streaming algorithms
and data structures, such as $\ell_0$-samplers~\cite{Cohen97},
wedge sampling~\cite{KallaugherP17,EdenLRS17}, 
and exponential start-time clustering~\cite{MillerPX13,MillerPVX15}.
Until now these tools have not been rigorously studied in the context of
scientific computing due to their dependency on randomization.
However, we believe there are many other algorithms and heuristics in
scientific computing that can benefit from the use of these techniques.

Furthermore, our overall algorithm critically relies on dissociating the
randomnesses from the pivot steps, as the update is dependent on the randomness
in the data structures.
In Section~\ref{subsec:Correlation} we give an example of how such correlations
can ``amplify'' errors in the data structures.
To address this issue, we define a pseudo-deterministic sequence of
pivots based on a second degree-estimation scheme, which we discuss
in Section~\ref{subsec:OverviewDecorrelation}.

Our paper is organized as follows.
We will formalize the implicit representation of fill
and definitions of exact, capped, and approximate min-degree
orderings in Section~\ref{sec:Preliminaries}.
Then in Section~\ref{sec:Overview} we give an overview of our
results and discuss our main decorrelation technique
in Subsection~\ref{subsec:OverviewDecorrelation}.
Our main hardness results are in Section~\ref{sec:Hardness},
while the use of sketching and sampling to obtain exact and
approximate algorithms are in Sections~\ref{sec:Sketching}
and~\ref{sec:Decorrelation}, respectively.
Further details on the graph theoretic building blocks
are in Sections~\ref{sec:DegreeEstimation}
and~\ref{sec:DynamicGraphs}. They respectively cover
the estimation of fill-degree of a single vertex
and the maintenance of sketches as vertices are pivoted.

\section{Preliminaries}
\label{sec:Preliminaries}

We work in the pointer model, where function arguments are pointers to
objects instead of the objects themselves.
Therefore, we do not assume that passing an object of size $O(n)$ costs $O(n)$
time and space.
This is essentially the ``pass by reference'' construct in
high-level programming languages.

\subsection{Gaussian Elimination and Fill}
\label{subsec:PreliminariesGaussian}

Gaussian elimination is the process of repeatedly
eliminating variables from a system of linear equations,
while maintaining an equivalent system on the remaining variables.
Algebraically, this involves taking one equation involving
some target variable and subtracting (a scaled version of) this equation
from all others involving the target variable.
Since our systems are SPD, we can also apply these operations to the columns
and drop the variable, which gives the Schur complement.

A particularly interesting fact about Gaussian elimination is that the
numerical Schur complement is unique irrespective of the ordering of pivoting.
Under the now standard assumption that non-zero elements do not cancel each
other out~\cite{GeorgeL89}, this commutative property also holds for the
combinatorial non-zero structure.  Since the non-zero structure of a matrix
corresponds to a graph, we can define the combinatorial change to the non-zero
structure of the matrix as a graph theoretic operation.  We start with the
notation from Gilbert, Ng, and Peyton~\cite{GilbertNP94}.
For a symmetric matrix, they use
\[
G(A)
\]
to denote the undirected graph formed by its non-zero structure.

Gilbert, Ng, and Peyton~\cite{GilbertNP94} worked with a known
elimination ordering and treated the entire fill pattern statically.
Because we work with partially eliminated states, we will need to distinguish
between the eliminated and remaining vertices in $G$ by implicitly associating
vertices with two states:
\begin{itemize}
\item Eliminated vertices will be denoted using $x$ and $y$.
\item Remaining vertices will be denoted using $u$, $v$, and $w$.
\end{itemize}
Then we use the \emph{fill graph}
\[
  G^{+}
\]
to denote the graph on the remaining vertices, where we add an edge $\{u,v\}$
between any pair of remaining vertices $u$ and $v$
connected via a path of eliminated vertices.  We
can also iteratively form the fill graph $G^{+}$ from $G$ by repeatedly removing an
eliminated vertex $w$ and its incident edges, and then adding edges between
all of the neighbors of $w$ to form a clique.  This characterization of fill
means that we can readily compute the fill-degree of a single vertex in a partially
eliminated state without explicitly constructing the matrix.

\begin{lemma}
\label{lem:ComputeFill}
For any graph $G$ and vertex $v \in V$,
given an elimination ordering $S$ we can compute in $O(m)$ time
the value $\deg(v)$ in $G^{+}$ when $v$ is eliminated.
\end{lemma}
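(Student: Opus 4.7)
The plan is to reduce the computation of $\deg(v)$ in $G^{+}$ to a single graph search in $G$ itself, by leveraging the path characterization of fill edges stated in the preliminaries. Recall that $\{u,v\}$ is an edge in $G^{+}$ precisely when $u$ and $v$ are both remaining at the moment $v$ is eliminated and there is a path in $G$ from $u$ to $v$ whose intermediate vertices are all already eliminated. So to count $\deg(v)$, it suffices to count the number of distinct remaining vertices $u \neq v$ reachable from $v$ via a path whose interior lies entirely in the set $X$ of vertices eliminated before $v$ according to the ordering $S$.

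First I would scan $S$ once to mark each vertex as either eliminated (in $X$), remaining, or equal to $v$; this costs $O(n)$. Then I would run a BFS (or DFS) rooted at $v$ in $G$, but with the following expansion rule: the search is only allowed to \emph{continue through} eliminated vertices. Concretely, starting from $v$, we examine all neighbors of $v$ in $G$; each remaining neighbor $u \neq v$ is recorded as a neighbor of $v$ in $G^{+}$, while each eliminated neighbor is enqueued for further exploration. When we dequeue an eliminated vertex $x$, we scan its adjacency list in $G$, recording any remaining neighbor as a fill-neighbor of $v$ and enqueueing any not-yet-visited eliminated neighbor. A boolean array on $V$ is used both to prevent revisiting eliminated vertices and to prevent double-counting remaining neighbors.

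Correctness is immediate from the path characterization: a remaining vertex $u$ is recorded if and only if it is adjacent in $G$ to some vertex on the $X$-reachable frontier from $v$, which is exactly the condition that there is a $v$-to-$u$ path through eliminated vertices. For the running time, each vertex that enters the BFS queue is an element of $X \cup \{v\}$, and is processed at most once because of the visited flag. Processing such a vertex costs time proportional to its degree in $G$. Remaining vertices are only touched when examined from some adjacent eliminated (or equal to $v$) vertex, and each such touch is charged to the corresponding edge. Hence the total work is bounded by the sum of degrees of the processed vertices plus $O(n)$ for initialization, which is $O(m)$.

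The only subtle point, and arguably the main thing to be careful about, is ensuring that each fill-neighbor is counted exactly once: a single remaining vertex $u$ may be adjacent in $G$ to many eliminated vertices in the reachable component and would otherwise be overcounted. A second boolean array indexed by vertex (set the first time $u$ is recorded, checked on each subsequent encounter) resolves this in $O(1)$ per edge and keeps the total running time at $O(m)$.
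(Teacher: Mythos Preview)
Your proposal is correct and follows essentially the same approach as the paper: mark vertices as eliminated or remaining, run a graph search from $v$ that only expands through eliminated vertices and terminates at remaining ones, and count the distinct remaining vertices reached. The paper's proof is terser (it simply colors vertices red/green and runs a DFS that halts at green vertices), but your added care about double-counting and the explicit running-time accounting are perfectly in line with it.
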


\begin{proof}
Color the vertices in the sequence before $v$ red, and color all
remaining vertices green.
Run a depth-first search from $v$ that terminates at green vertices
$u \ne v$.
Let $D$ be the set of green vertices at which the search terminated.
It follows from the definition of $G^{+}$ that $\deg(v) = |D|$.
\end{proof}

This kind of path finding among eliminated vertices adds
an additional layer of complexity to our structures.
To overcome this, we contract eliminated vertices into their connected
components, leading to the notion of the \emph{component graph}.
We use
\[
  G^{\circ}
\]
to denote such a graph where we contract all edges $\{x,y\}$
between eliminated vertices $x$ and $y$.
We will denote the vertices corresponding to such components by $c$.
Note that $G^{\circ}$ is a quasi-bipartite graph, because the contraction rule
implies there are no edges between the component vertices.
It is also useful to denote the neighborhood of different kinds of vertices in
a component graph:
\begin{itemize}
\item  $N_{remaining}(c)$ or $N_{remaining}(u)$:
For a component $c$ or a remaining vertex $u$ in the component graph $G^{\circ}$,
we use $N_{remaining}(\cdot)$ to denote the neighbors 
that are remaining vertices.

\item $N_{component}(u)$: For a remaining vertex $u$,
this is the set of component vertices adjacent to $u$.

\item $N_{fill}(u)$: For a remaining vertex $u$,
this denotes the neighbors of $u$ in $G^{+}$,
which is 
\[
\left(
  \bigcup_{c \in N_{component}\left(u\right)}
    N_{remaining}\left( c \right)
\right)
\cup
N_{remaining} \left( u \right)
\cup
\left\{ u \right\}.
\]
\end{itemize}
Note that the fill-degree of a remaining vertex $u$ (its degree in $G^{+}$)
is precisely $|N_{fill}(u)|$.
Additionally, we use the restricted degrees:
\begin{itemize}
\item $d_{remain}(c)$ or $d_{remain}(u)$
to denote the size of $N_{remaining}(c)$ or $N_{remaining}(u)$,
respectively.
\item $d_{component}(u)$ to denote the size of $N_{component}(u)$
for some remaining vertex $u$.
\end{itemize}

\subsection{Min-Degree Orderings: Greedy, Capped, and Approximate}
For an elimination ordering
\[
  u_1, u_2, \dots, u_n,
\]
we define $G_i$ as the graph with vertices $u_1,u_2, \dots, u_{i}$
marked as eliminated
and $u_{i + 1}, u_{i+2},\dots, u_{n}$ marked as remaining.
Furthermore, we say such a permutation is a minimum degree permutation if at
each step $i$, the vertex $u_i$ has the minimum fill-degree in the non-zero
structure graph $G_{i - 1}$.
Concretely,
\begin{align}
\deg^{G_{i - 1}^{+}} \left( u_{i} \right)
=
\min_{v \in V\left( G_{i - 1}^{+} \right) }
\left\{
  \deg^{G_{i - 1}^{+}} \left( v \right)
\right\}.
\label{eq:MinDegree}
\end{align}

Because the performance of our algorithm degrades over time as the minimum
degree increases, we define the notion of a $\Delta$-capped minimum degree
ordering, where degrees are truncated to $\Delta$ before making a comparison.
We first define $\Delta$-capped equality where $\Delta$ is an integer.
\begin{definition}
\label{def:DeltaCappedEq}
  We use the notation $p =_{\Delta} q$ to denote $\min\{p, \Delta\} = \min\{q, \Delta\}$.
\end{definition}
\noindent
Now we can modify the definition of minimum degree in Equation~\ref{eq:MinDegree}
to specify that the elimination sequence $u_1, u_2, \dots, u_{n}$ satisfies the
$\Delta$-capped minimum degree property at each time step:
\begin{align}
\deg^{G_{i - 1}^{+}} \left( u_{i} \right)
=_{\Delta}
\min_{v \in V\left( G_{i - 1}^{+} \right) }
\left\{
  \deg^{G_{i - 1}^{+}} \left( v \right)
\right\}.
\label{eq:CappedMinDegree}
\end{align}

Our algorithm for finding the minimum ($\Delta$-capped) degrees is randomized,
so we need to be careful to not introduce dependencies between different steps
when several remaining vertices are of minimum degree.
To bypass this problem, we require that the lexicographically least vertex
be eliminated at each step in the event of a tie.
This simple condition is critical for arguing that our randomized routines do
not introduce dependencies as the algorithm progresses.

Lastly, our notion of approximating the min-degree ordering is based on finding
the vertex whose fill-degree is approximately minimum in the current graph
$G^+$.  This decision process has no look-ahead, and therefore does not in any
way approximate the minimum possible total fill.
\begin{definition}
\label{def:ApproxMinDegree}
An ordering of vertices $u_1, u_2, \dots, u_n$ is a $(1 +
  \epsilon)$-approximate greedy min-degree ordering if for all steps $1 \le i \le n$
  we have
\begin{align}
\deg^{G_{i - 1}^{+}} \left( u_{i} \right)
\leq
\left( 1 + \epsilon \right)
\min_{v \in V\left( G_{i - 1}^{+} \right) }
\left\{
  \deg^{G_{i - 1}^{+}} \left( v \right)
\right\}.
\label{eq:ApproxMinDegree}
\end{align}
\end{definition}

\subsection{Randomized Tools}
\label{subsection:WhereMistakesAreMade}

All of our algorithms are randomized, and
their analyses involve tools such as the union bound,
concentration bounds, and explicit calculations and
approximations of expected values.
We say an event happens with high probability (w.h.p.) if
for any constant $c > 0$ there is a setting of constants
(hidden by big-$O$ notation) so that this event occurs 
with probability at least $1-1/n^c$.
We also make extensive applications of backward
analysis~\cite{Siedel93}, which calculates the probabilities
of events locally using the current state of the data structures.

Our final algorithm for producing $\epsilon$-approximate
marginal min-degree orderings relies heavily on properties
of the exponential distribution in order to decorrelate updates
to the data structures and the results that it produces.
Properties of the exponential random variable are formalized
in Section~\ref{sec:Decorrelation}, and
we discuss its role in our algorithm in the overview
in Section~\ref{subsec:OverviewDecorrelation}.

The analysis of our algorithms critically hinges on viewing
all randomness as being generated before-hand, based on the
(potential) index in which the procedure gets called.
This is opposed to having a single source of randomness that
we query sequentially as the procedures are invoked.
For procedures such as the fill-degree estimator in
Section~\ref{subsec:DegreeEstimation_Matrix}, this method
leads to a simplified analysis by viewing the output of
a randomized sub-routine as a fixed distribution.
Such a view of randomization is also a core idea in our decorrelation
routine, which defines a random distribution on $n$ elements,
but only queries $O(1)$ of them in expectation.
This view is helpful for arguing that the randomness we 
query is independent of the indices that we ignored.

\subsection{Related Works}
\label{subsec:Related}

\subsubsection*{Fill from Gaussian Elimination and Pivot Orderings}

The study of better pivoting orderings is one of the
foundational questions in combinatorial scientific computing.
Work by George~\cite{George73} led to the study of
nested dissection algorithms, which utilize separators
to give provably smaller fill bounds for planar~\cite{RoseTL76,LiptonRT79}
and separable graphs~\cite{GilbertT87,AlonY10}.
One side effect of such a study is the far better
(implicit) characterization of fill entries
discussed in Section~\ref{subsec:PreliminariesGaussian}.
This representation was used to compute the total amount
of fill of a specific elimination ordering~\cite{GilbertNP94}.
It is also used to construct elimination trees,
which are widely used in combinatorial scientific computing
to both pre-allocate memory and optimize cache behaviors~\cite{Liu90}.

\subsubsection*{Finding Low Fill-in Orderings}

The ability to compute total fill for a given ordering raises
the natural question of whether orderings with near-optimal
fills can be computed.
NP-hardness results for finding the minimum fill-in
ordering~\cite{Yannakakis81,Berman90} were followed by
works for approximating the minimum total fill~\cite{NatanzonSS00},
as well as algorithms~\cite{KaplanST99,FominV13}
and hardness results for parameterized variants~\cite{WuAPL14,Bliznets16,CaoS17}.

Partially due to the higher overhead of these methods,
the minimum degree method remains one of the most widely used
methods for producing orderings with small fill~\cite{GeorgeL89}.
Somewhat surprisingly, we were not able to find prior works that
compute the exact minimum degree ordering in times faster
than $O(n^3)$, or ones that utilize the implicit representation
of fill provided by elimination trees.\footnote{
We use speculative language here due to the vastness
of the literature on variants of minimum degree algorithms.
}
On the other hand, there are various approximate schemes for
producing min-degree like orderings.
These include multiple minimum degree (MMD)~\cite{Liu85}
and an approximate minimum degree algorithm (AMD),
the latter of which is used in MATLAB~\cite{AmestoyDD96}.
While both of these methods run extremely well in practice,
theoretically they have tight performances of $O(n^2m)$ for MMD
and $O(nm)$ for AMD~\cite{HeggernesEKP01}.
Furthermore, AMD can be viewed as a different version of the
min-degree heuristic, as it is not always guaranteed to produce a vertex of
approximate minimum degree.

\subsubsection*{Estimating and Sketching Sizes of Sets}

The core difficulty of our algorithms is in estimating the cardinality
of sets (neighborhoods of eliminated components or component vertices
in component graphs $G^{\circ}$) under union and deletion of elements.
Many cardinality estimation algorithms have been proposed in the streaming
algorithm literature using similar ideas~\cite{FlajoletM85,CormodeM05}.
These algorithms often trade off accuracy for space, where as we trade
space for accuracy and efficiency in updates and queries.

Also closely related is another size-estimation framework for reachability
problems by Cohen~\cite{Cohen97}.
This work utilized $\ell_{0}$-estimators, which propagate random sketch values
along neighborhoods to estimate the size of reachable sets.
Our sketching method in Section~\ref{sec:Sketching} propagates the exact same
set of values.
However, we need to maintain this propagation under vertex pivots,
which is akin to contracting edges in the component graph.
This leads to a layer of intricacies that we resolve using amortized analysis
in Section~\ref{sec:DynamicGraphs}.

\subsubsection*{Removing Dependencies in Randomized Algorithms}

Lastly, our use of size estimators is dynamic---the choice of pivots, which in turn affects the subsequent graph eliminate
states, is a result of the randomness used to generate the results of previous
steps.
The independence between the access sequence and randomness is
a common requirement in recent works on data structures
that maintain spanning trees and matchings~\cite{BaswanaGS15,KapronKM13,Solomon16}.
There this assumption is known as the \emph{oblivious adversarial model},
which states that the adversary can choose the graph and the sequence of
updates, but it cannot choose updates adaptively in response to the randomly
guided choices of the algorithm.

There have been recent works that re-inject randomness to preserve
``independence'' of randomized dimensionality-reduction
procedures~\cite{LeeS15}.
The amount of ``loss'' in randomness has been characterized
via mutual information in a recent work~\cite{KapralovNPWWY17}.
Their bounds require an additional factor of $k$ of randomness
in order to handle $k$ adversarially injected information,
which as stated is too much for handling $n$ pivots adversarially.
Our work also has some tenuous connections to
recent works that utilize matrix martingales to analyze repeated introductions
of randomness in graph algorithms~\cite{KyngS16,KyngPPS17}.
However, our work utilizes more algorithmic tools than the martingale-based ones.

\section{Overview}
\label{sec:Overview}

The starting point of our investigation uses sketching
to design an efficient data structure for maintaining
fill-degrees under pivot operations.
This corresponds to edge contractions in the component graph and
is based on the observation that $\ell_0$-estimators propagate well along
edges of graphs.  For any $n \times n$ matrix with $m$ non-zero entries, this
algorithm takes $O(nm)$ time.

In our attempts to improve the running time of an exact algorithm,
we came to the somewhat surprising realization that
it is hard to compute the minimum degree in
certain partially eliminated graphs in time $O(n^{2 - \theta})$, for
any $\theta > 0$,
assuming the strong exponential time hypothesis.
We extend this observation to give super-linear hardness for computing minimum
degree orderings.

This hardness result for exact minimum degree sequences then motivated
us to parameterize the performance of min-degree algorithms in a new way.
Inspired by the behavior of AMD, we parameterize the performance of our
algorithm in terms of intermediate degrees.
Letting the minimum degree of the $i$-th pivot be $\Delta_i$
and the number of edges at that time be $m_i$,
we improve the performance of our algorithm to $O(\max_{i} m_i \Delta_i)$.
For many important real-world graphs such as grids and cube meshes, this bound
is sub-quadratic.
We then proceed to give a nearly-linear time algorithm for computing an
$\epsilon$-approximate marginal min-degree ordering, where at each step the
eliminated vertex has fill degree close to the current minimum.

\subsection{Sketching the Fill Graph}
\label{subsec:OverviewSketching}

We first explain the connection between computing fill-degrees and estimating
the size of reachable sets.
Assume for simplicity that no edges exist between the remaining vertices.
Consider duplicating the remaining vertices so that
each remaining vertex $u$ splits into $u_1, u_2$, and
any edge $\{u, x\}$ in the component graph becomes two directed edges
$(u_1 \rightarrow x)$
and
$(x \rightarrow u_2)$.
Then the fill-degree of~$u$ is the number of remaining
vertices $v_2$ reachable from $u_1$.
Estimating the size of reachable sets is a well-studied
problem for which Cohen~\cite{Cohen97} gave a nearly-linear time algorithm using
$\ell_0$-estimators.
Adapting this framework to our setting for fill
graphs (without duplication of vertices) leads to the
following $\ell_0$-sketch structure.
\begin{definition}
\label{def:Sketch}
An $\ell_0$-sketch structure consists of:
\begin{enumerate}
\item Each remaining vertex $u$ generating a random number $x_u$.
\item Each remaining vertex $u$ then computing the minimum $x_v$
among its neighbors in $G^{+}$ (including itself),
which is equivalent to
\[
\min_{v \in N_{reachable}\left( u \right)}
  x_{v}.
\]
\end{enumerate}
\end{definition}

In Section~\ref{sec:DynamicGraphs} we demonstrate that a copy
of this structure can be maintained efficiently through any
sequence of pivots in nearly-linear time.
As the priorities $x_u$ are chosen independently and uniformly at random, we
effectively assign each vertex $u$ a random vertex from its reachable set
$N_{reachable}(u)$.
Therefore, if we maintain $O(n \log{n})$ independent copies of this
$\ell_0$-sketch data structure, by a coupon-collector argument each vertex has
a list of all its distinct neighbors.
Adding together the cost of these $O(n \log{n})$ copies
leads to an $O(mn\log^{2}n)$ time algorithm for computing
a minimum degree sequence, which to the best of our knowledge is the fastest
such algorithm.

\subsection{SETH-Hardness of Computing Min-Degree Elimination Orderings}
\label{subsec:OverviewHardness}

Our hardness results for computing the minimum
fill degree and the min-degree ordering are based on
the \emph{strong exponential time hypothesis} (SETH),
which states that for all $\theta > 0$
there exists a~$k$ such that solving $k$-SAT requires
$\Omega(2^{(1-\theta)n})$ time.
Many hardness results based on SETH, including ours, go through the \OV problem
and make use of the following result.

\begin{theorem}[\cite{Williams05}]
\label{thm:HardnessOV}
Assuming SETH,
for any $\theta > 0$, there does not exist an $O(n^{2 - \theta})$ time algorithm
that takes $n$ binary vectors with $\Theta(\log^{2}{n})$ bits
and decides if there is an orthogonal pair.
\end{theorem}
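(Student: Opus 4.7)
The plan is to prove this via the ``split and list'' reduction from $k$-SAT to \OV, the standard approach since Williams' original work. Suppose for contradiction that for some $\theta > 0$ there is an $O(N^{2 - \theta})$ time algorithm for \OV on $N$ binary vectors of dimension $\Theta(\log^2 N)$. The goal is to leverage this to solve $k$-SAT in time $O(2^{(1 - \theta')n})$ for some $\theta' > 0$, contradicting SETH.

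Given a $k$-SAT instance $\varphi$ with $n$ variables and $m$ clauses (WLOG $m = \text{poly}(n)$, since having too many clauses makes SAT easier), I would split the variables into two halves $A$ and $B$ of size $n/2$. For each of the $2^{n/2}$ assignments $\alpha$ to $A$, build a vector $u_\alpha \in \{0,1\}^m$ by setting $u_\alpha[i] = 1$ exactly when clause $C_i$ is \emph{not} satisfied by $\alpha$; symmetrically build $v_\beta$ for each assignment $\beta$ to $B$. The key identity is that $(\alpha, \beta)$ satisfies $\varphi$ if and only if every clause is satisfied by at least one side, which is exactly the condition $u_\alpha[i] \cdot v_\beta[i] = 0$ for all $i$, i.e.\ $u_\alpha \perp v_\beta$. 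Hence $\varphi$ is satisfiable iff the bipartite instance $(\{u_\alpha\}, \{v_\beta\})$ contains an orthogonal pair. Bipartite \OV reduces to standard \OV by concatenating the two sets and tagging each vector with one extra coordinate to forbid same-side pairs, so this only loses a constant factor.

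Setting $N = 2^{n/2}$, the instance has $2N$ vectors in dimension $m = \text{poly}(n) = \text{polylog}(N)$, which we can pad up to exactly $\Theta(\log^2 N)$ bits without affecting orthogonality. Constructing the vectors takes $\tilde{O}(N \cdot m)$ time, and the hypothetical \OV algorithm then runs in $O(N^{2 - \theta}) = O(2^{(1 - \theta/2)n})$ time. The total is $O(2^{(1 - \theta')n})$ for $\theta' = \theta/2 - o(1) > 0$, contradicting SETH (which holds for some $k$-SAT, and $k$ is absorbed into the dimension).

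The main subtle point is the dimension bookkeeping: the reduction natively gives dimension $\text{poly}(n)$, and one must verify this fits into the $\Theta(\log^2 N) = \Theta(n^2)$ budget stated in the theorem (which it comfortably does for any fixed $k$), while also being large enough to carry the reduction. The only other potential obstacle is handling the bipartite-vs-monolithic formulation of \OV, but as sketched above this is a standard constant-overhead reduction and not genuinely difficult.
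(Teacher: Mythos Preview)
The paper does not prove this theorem; it is quoted as a known result from \cite{Williams05} and used as a black box, so there is no in-paper proof to compare against. Your outline is the standard split-and-list reduction and the overall shape is right.

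There is, however, a real gap in the dimension bookkeeping. You assert that $m = \text{poly}(n)$ ``comfortably'' fits into the $\Theta(\log^2 N) = \Theta(n^2)$ budget ``for any fixed $k$.'' This is false: a $k$-CNF on $n$ variables can have up to $\Theta(n^k)$ distinct clauses, and for $k \ge 3$ this exceeds $n^2$. The standard repair is the Sparsification Lemma of Impagliazzo--Paturi--Zane: for every $\varepsilon > 0$ and $k$, a $k$-CNF on $n$ variables can be written as the OR of at most $2^{\varepsilon n}$ many $k$-CNFs each with $O_{k,\varepsilon}(n)$ clauses. After sparsification the dimension becomes $O(n) = O(\log N)$, which genuinely fits inside $\Theta(\log^2 N)$ after padding, and the extra $2^{\varepsilon n}$ multiplicative cost is absorbed by taking $\varepsilon$ small relative to $\theta$. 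Without this step your reduction only yields hardness at dimension $\Theta((\log N)^k)$, not $\Theta(\log^2 N)$ as the theorem states.
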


\noindent
We remark that \OV is often stated as deciding if there exists
a pair of orthogonal vectors from two different sets~\cite{Williams2015hardness},
but we can reduce the problem to a single set by appending $[1; 0]$ to all
vectors in the first set and $[0; 1]$ to all vectors in the second set.

Our hardness observation for computing the minimum degree of a vertex in the
fill graph of some partially eliminated state is a direct reduction to \OV.
We give a bipartite graph construction that demonstrates how \OV can be
interpreted as deciding if a union of cliques covers a clique on the remaining
vertices of a partially eliminated graph.

\begin{lemma}
\label{lem:HardnessSingleStep}
Assuming SETH, for any $\theta > 0$,
there does not exist an $O(m^{2 - \theta})$ time algorithm
that takes any partially eliminated graph $G$ and computes the
minimum fill degree in $G^{+}$.
\end{lemma}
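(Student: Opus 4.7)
The plan is to reduce \OV to the problem of computing the minimum fill degree in a partially eliminated graph. Given an \OV instance consisting of $n$ binary vectors $v_1, \ldots, v_n$ in $\{0,1\}^d$ with $d = \Theta(\log^2 n)$, I will build a bipartite graph $G$ with the following parts: a set of remaining vertices $u_1, \ldots, u_n$, one for each vector, and a set of eliminated vertices $x_1, \ldots, x_d$, one for each coordinate. The only edges are $\{u_i, x_k\}$ whenever $v_i[k]=1$. The ``partially eliminated state'' is defined by marking every $x_k$ as eliminated and every $u_i$ as remaining; this is clearly realizable as an intermediate state of some elimination sequence since the $x_k$'s are pairwise non-adjacent, so any order among them is valid.

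The key structural claim is that, by the path-contraction characterization of fill from Section~\ref{subsec:PreliminariesGaussian}, two remaining vertices $u_i$ and $u_j$ are adjacent in $G^{+}$ if and only if they share a common eliminated neighbor, i.e., there exists a coordinate $k$ with $v_i[k] = v_j[k] = 1$. Equivalently, $u_i$ and $u_j$ are \emph{non-adjacent} in $G^{+}$ exactly when $\langle v_i, v_j\rangle = 0$. Consequently, if every pair of vectors has non-zero inner product, every $u_i$ has fill-degree exactly $n-1$, while the existence of an orthogonal pair $(v_i, v_j)$ forces $\deg^{G^{+}}(u_i) \le n-2$. Thus testing whether the minimum fill-degree is at most $n-2$ decides the \OV instance. (A couple of routine sanity cases, such as zero vectors or duplicates, can be dispatched by a linear-time preprocess that discards them before the reduction.)

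Finally, I will translate running times. The constructed graph has $m \le nd = O(n \log^2 n)$ edges and is produced in $O(nd)$ time. If minimum fill-degree could be computed in $O(m^{2-\theta})$ time for some $\theta > 0$, then \OV would be solvable in
\[
O\bigl((n\log^2 n)^{2-\theta}\bigr) \;=\; O\bigl(n^{2-\theta}\,\text{polylog}(n)\bigr) \;=\; O\bigl(n^{2-\theta'}\bigr)
\]
for any $\theta' \in (0,\theta)$ and sufficiently large $n$, contradicting Theorem~\ref{thm:HardnessOV} and hence SETH.

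The step I expect to require the most care is the correctness of the fill-graph claim: I must argue that the definition of $G^{+}$ via paths through eliminated vertices coincides with the ``union of cliques'' picture used above, and that no spurious fill edges arise among the $u_i$'s from other sources (there are no direct $u_i$--$u_j$ edges in $G$, and eliminated vertices form an independent set, so each fill edge must come from a single coordinate-gadget $x_k$). Everything else is a direct bookkeeping argument on edge counts and polynomial slack in $\theta$.
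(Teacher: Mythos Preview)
Your proposal is correct and follows essentially the same approach as the paper: the same bipartite construction with vectors as remaining vertices and coordinates as eliminated vertices, the same characterization of fill edges via shared coordinates, and the same running-time translation $O((n\log^2 n)^{2-\theta}) = O(n^{2-\theta'})$. Your additional remarks on realizability of the eliminated state and preprocessing for degenerate vectors are sound extra details the paper omits.
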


\begin{proof}
Consider an $\OV$ instance with $n$ vectors
$
  \aa(1), \aa(2), \dots, \aa(n) \in \{0,1\}^d.
$
Construct a bipartite graph $G = (V_{vec}, V_{dim}, E)$
such that each vertex in $V_{vec}$ corresponds to a vector $\aa(i)$
and each vertex in $V_{dim}$ uniquely corresponds to a dimension
$1 \leq j \leq d$.
For the edges, we connect vertices $i \in V_{vec}$
with $j \in V_{dim}$ if and only if $\aa(i)_j = 1$.

Consider the graph state with all of $V_{dim}$ eliminated
and all of $V_{vec}$ remaining.
We claim that there exists a pair of orthogonal vectors
among $\aa(1), \aa(2),\dots,\aa(n)$ if and only if
there exists a remaining vertex $v \in V(G^{+})$
with $\deg(v) < n - 1$.
Let $u, v \in V_{vec}$ be any two different vertices,
and let $\aa(u)$ and $\aa(v)$ be their corresponding vectors.
The vertices $u$ and $v$ are adjacent in $G^{+}$ if and only if there
exists a dimension $1 \leq j \leq d$
such that $\aa(u)_j = \aa(v)_j = 1$.

Suppose there exists an $O(m^{2 - \theta})$ time algorithm
for finding the minimum degree in a partially eliminated graph
for some $\theta > 0$.
Then for $d = \Theta(\log^2 n)$, we can use this algorithm to compute the
vertex with minimum fill degree in the graph described above in time
\[
  O\left(m^{2-\theta}\right)
  = O\left(\left(n \log^2 n\right)^{2-\theta}\right)
  = O\left(n^{2 - \theta / 2}\right),
\]
which contradicts SETH by Theorem~\ref{thm:HardnessOV}.
\end{proof}

In Section~\ref{sec:Hardness},
we extend this observation to show that
an $O(m^{4/3 - \theta})$ algorithm for computing
the min-degree elimination ordering does not exist,
assuming SETH.
This is based on constructing a graph where the
bipartite graph in the proof of Lemma~\ref{lem:HardnessSingleStep} appears in
an intermediate step.
The main overhead is adding more vertices and edges to
force the vertices in $V_{dim}$ to be eliminated first.
To do this, we first split such vertices into $\Theta(n)$
stars of degree $O(\sqrt{n})$.
Then we fully connect~$V_{vec}$ to an additional clique of size
$\Theta(\sqrt{n})$
to ensure that the (split) vertices in $V_{dim}$
are the first to be pivoted.
There are $O(n^{3/2}d)$ edges in this construction,
which leads to the $m^{4/3 - \theta}$-hardness.
However, we believe this is suboptimal and that 
$m^{2 - \theta}$-hardness is more likely.

\subsection{$\Delta$-capped and Approximately Marginal
Min-Degree Ordering}

This lower bound assuming SETH suggests that it is unlikely
to obtain a nearly-linear, or even sub-quadratic,
time algorithms for the min-degree ordering of a graph.
As a result, we turn our attention towards approximations
and output-sensitive algorithms.

Our first observation is that the size of
$N_{reachable}(u)$ can be bounded by $\Delta$,
so $O(\Delta \log{n})$ copies of the sketches as discussed
in Section~\ref{subsec:OverviewSketching} suffice for ``coupon collecting''
all  $\Delta$ distinct values instead of $O(n \log{n})$ copies.
This leads to bounds that depend on the maximum intermediate
fill-degrees, which on large sparse graphs are often significantly
less than $\Theta(n)$.
We also show how to maintain $O(\log{n})$ copies of the data structure and use
the $(1/e)$-th order statistic to approximate the number of entries in the set.
This leads to procedures that maintain approximate minimum
degree vertices for fixed sequences of updates.
This type of estimation is the same as using
$\ell_0$-estimators to approximate the size of reachable sets~\cite{Cohen97}.

This procedure of repeatedly pivoting out the approximate minimum
degree vertices given by sketching yields a nearly-linear time algorithm for
producing an $\epsilon$-approximate greedy min-degree ordering.
Initially, however, we were unable to analyze it because the input sequence is
not oblivious to the randomness of the data structure.
In particular, the choice of pivots is dependent on the randomness of the
sketches.
Compared to the other recent works that analyze sequential
randomness in graph sparsification~\cite{KyngS16,KyngPPS17}, our accumulation
of dependencies differs in that it affects the order in which vertices are
removed, instead of just the approximations in matrices.

\subsection{Correlation Under Non-Oblivious Adversaries}
\label{subsec:Correlation}

The general issue of correlations (or dependencies) between the
randomness of a data structure and access patterns to it can be remarkably
problematic.
We consider a simple example where deciding future updates based on the output
of previous queries results in a continual amplification of errors.
This can be understood as adversarially correlating the update sequence with
results of the randomness.
Consider the data structure in Figure~\ref{fig:ProbelmaticDS}
for maintaining a sequence of sets
\[
S_1, S_2, \dots, S_m \subseteq \left\{1,2,\dots,n\right\}
\]
under insertion/deletions and returns the one with minimum size 
up to an additive error of $\epsilon n$.
\begin{figure}[H]
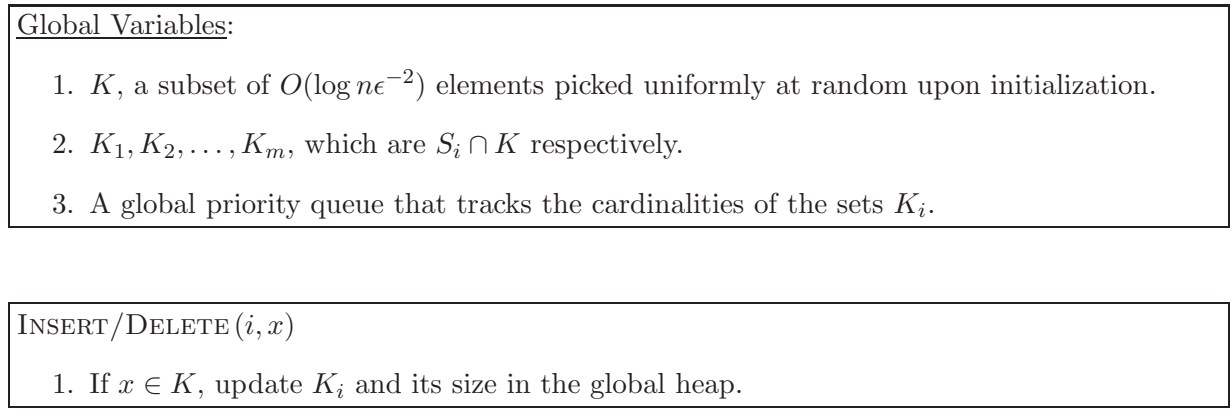

\begin{algbox}
  \underline{Global Variables}:
\begin{enumerate}
\item $K$, a subset of $O(\log{n} \epsilon^{-2} )$ elements
picked uniformly at random upon initialization.
\item $K_1, K_2, \dots, K_m$, which are $S_i \cap K$ respectively.
\item A global priority queue that tracks the cardinalities of the sets $K_i$.
\end{enumerate}
\end{algbox}
  \vspace{-0.05cm}
\begin{algbox}
$\textsc{Insert/Delete}\left(i, x\right)$ 
\begin{enumerate}
\item If $x \in K$, update $K_i$ and its size in the global heap.
\end{enumerate}
\end{algbox}
  \caption{Instance of a randomized data structure that can be adversarially correlated.}
\label{fig:ProbelmaticDS}
\end{figure}
For a non-adaptive sequence fixed ahead of time and a single set $S$,
Chernoff bounds give a result that is an $\epsilon n$
approximation with high probability.
Therefore, we can utilize this to build a data structure that maintains
a series of sets under insertion/deletion and returns a set of approximate
minimum cardinality (up to an additive $\epsilon n$).
Furthermore, to remove ambiguity, we assume this data structure
breaks ties lexicographically when the intersection of two sets with $K$ have
equal cardinality. 
With a similar invocation of Chernoff bounds, we can show that
this augmented data structure is correct under the oblivious adversary model.
As we maintain $k = O( \log{n} \epsilon^{-2})$ elements from each set $K_i$,
the total space usage of this data structure is $O(m \log{n} \epsilon^{-2})$.

On the other hand, an adaptive adversary can use the results of previous queries
to infer the set of secret keys $K$ in $O(n)$ queries.
Consider the following sequence of updates:
\begin{enumerate}
\item Start with two sets, $S_1$ and $S_2$, both initially equal to $\{1,2, \dots, n\}$.
\item For $x = 1,2, \dots, n$:
\begin{enumerate}
\item Delete $x$ from $S_2$.
\item If $S_2$ is the set of approximate minimum size (the one with
the smallest cardinality $|K_i|$),
insert $x$ back into $S_2$.
\end{enumerate}
\end{enumerate}
At the end of this sequence of updates, the only elements in $S_2$
are those in $K$, which is a substantially worse result than what we can
guarantee under the oblivious adversary model.

Our use of sketching to find a minimum degree vertex
clearly does not perform updates that are this adversarial, but
it does act on the minimum value generated by the randomized
routine so the final result can be reasonably inaccurate.
Moreover, any accounting of correlation (in the standard sense)
allows for the worst-case type of adaptive behavior described above.
In the next subsection, we describe an algorithmic approach to fix this issue.

\subsection{Decorrelating Sketches and Updates}
\label{subsec:OverviewDecorrelation}

Our correlation removal method is motivated by a third routine that
estimates the fill-degree of a remaining vertex in time
that is close to the degree of the vertex.
We then define an approximate, greedy min-degree sequence using
this routine.
At each step we choose the pivot vertex to be the minimizer of 
\[
\left( 1 - \frac{\epsilon \Exp\left( 1 \right) }{O\left( \log{n} \right)} \right)
\cdot \textsc{EstimateDegree}\left(u, \frac{\epsilon}{O\left( \log{n} \right)} \right),
\]
which is the $\epsilon$-decayed minimum over all the estimates
returned by the degree estimation routine.

We then utilize an $\ell_0$-estimation structure to maintain
approximate degrees throughout this update procedure.
By doing this, the randomness in the $\ell_0$-estimation data
structure is no longer correlated with the updates.
This sequence is defined with the randomness that is independent of the
$\ell_0$-estimators, and (after removing the probability of incorrectness)
may as well be considered deterministic.
On the other hand, evaluating such a sequence using only calls to
$\textsc{EstimateDegree}$ is expensive: it requires one call per
vertex, leading to a total of at least $\Omega(n^2)$.
Here we reincorporate the $\ell_0$-estimation data structure 
via the following observations about the initial perturbation
term involving the random variable $\Exp(1)$.
\begin{enumerate}
\item For a set of vertices whose degrees are
  within $1 \pm \epsilon/O(\log{n})$ of each other, 
it suffices to randomly select and consider $O(1)$ of them 
(by generating the highest order statistics for exponential random variables in
decreasing order).
\item By the memoryless property of the exponential distribution,
if we call $\textsc{EstimateDegree}$, with constant probability
it will be for the pivoted vertex.
Therefore, we can ``charge'' the cost of these evaluations to the overall
edge count and retain the nearly-linear time bounds.
\end{enumerate}

At a high level, we improve a data structure that only works
under the oblivious adversary model by providing it with a fixed
input using a second, more local, size-estimation routine.
Our generation of this ``fixed'' update sequence can still benefit
from the approximate bucketing created in the data structure.
The key idea is that any dependencies on the $\ell_0$-sketch structure stop
after these candidates are generated---their answers only depend on the
randomness of the separate size-estimation procedures.

This approach has close connections to pseudo-deterministic
algorithms~\cite{GatG11,Goldwasser12,GoldreichGR13},
which formalize randomized algorithms whose output sequences are fixed.
Such pseudo-deterministic update sequences seem particularly useful for
expanding the settings in which data structures designed for the oblivious
adversary model can be used.
We hope to formalize such connections in the near future.
However, the lack of a counterexample for directly using $\ell_0$-sketching
structures, or a proof of its correctness, suggests that some ideas are still
missing for the min-degree problem.


\section{SETH-Hardness of Computing Min-Degree Orderings}
\label{sec:Hardness}

We showed in Section~\ref{subsec:OverviewHardness} that computing the minimum
fill degree of a partially eliminated graph cannot be done in
$O(m^{2-\theta})$ time, for any $\theta > 0$, assuming the strong exponential
time hypothesis (SETH).
In this section, we augment this result to show that an exact
linear-time algorithm for computing min-degree elimination orderings is
unlikely.
In particular, our main hardness result is:

\begin{theorem}
\label{thm:HardnessOrdering}
Assuming SETH,
for any $\theta > 0$, there does not exist
an $O(m^{4/3 - \theta})$ time algorithm for producing a 
min-degree elimination ordering.
\end{theorem}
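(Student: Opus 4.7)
The plan is to reduce $\OV$ to computing a min-degree elimination ordering, by embedding the bipartite gadget of Lemma~\ref{lem:HardnessSingleStep} inside a larger graph whose min-degree algorithm is \emph{forced} to first pivot every ``dimension''-side vertex, so that the fill-degrees on the ``vector''-side at the moment the gadget finishes reveal orthogonality. To achieve the stated edge bound $m = \widetilde O(n^{3/2})$, each high-degree dimension vertex is replaced by a ``star-split'' gadget: several low-degree copies plus one hub.

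Concretely, given \OV vectors $\aa(1),\dots,\aa(n) \in \{0,1\}^d$ with $d = \Theta(\log^2 n)$, I would set $\sigma = \lceil\sqrt n\rceil$ and construct $G$ as follows. Introduce vector vertices $v_1,\dots,v_n$ and a clique $C = \{c_1,\dots,c_\sigma\}$, with every $v_i$ joined to every $c_k$. For each dimension $j$, partition $S_j = \{v_i : \aa(i)_j = 1\}$ into groups of size at most $\sigma$, add a \emph{copy} vertex $x_{j,g}$ joined exactly to its group, and add a \emph{hub} $h_j$ joined to all copies $x_{j,\cdot}$ and to all of $C$. I would name vertices so that the lex order is copies, then hubs, then vectors, then clique. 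A direct count gives $\sum_j |S_j| + \sum_j k_j + d\sigma + n\sigma + \binom{\sigma}{2} = O(nd) + O(d\sqrt n) + O(d\sqrt n) + O(n^{3/2}) + O(n) = \widetilde O(n^{3/2})$ edges.

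The core of the argument is a three-phase analysis of the min-degree ordering on $G$. In phase~1 every copy is pivoted first: each copy has fill-degree $\le \sigma + 1$, while every hub has degree $\ge \sigma + 1$, every vector $\ge \sigma + 1$, and every clique vertex has degree $\Theta(n)$; since two distinct copies are never in each other's component-graph neighborhoods (they meet only through a hub), the copies' fill-degrees remain $\le \sigma + 1$ throughout phase~1 while all other degrees only grow, so the lex-tiebroken minimum is always some surviving copy. In phase~2 every hub is pivoted: after phase~1, hub $h_j$ has fill-degree exactly $\sigma + |S_j|$, and after a harmless preprocessing of the \OV instance (discarding dimensions with $|S_j| \le 1$, handling the trivially-YES cases $\|\aa(i)\|_0 = 0$, and insisting on group sizes in $[2,\sigma]$) every hub remains below or lex-tied with every vector until all hubs are gone. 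The key structural observation is that pivoting $x_{j,1},\dots,x_{j,k_j}$ and then $h_j$ has the same fill effect on $V_{vec}$ as pivoting the original dimension vertex of Lemma~\ref{lem:HardnessSingleStep}: each copy elimination cliques its group together with $h_j$, and the subsequent hub elimination stitches these partial cliques into a full clique on $S_j$.

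Phase~3 thus begins on a graph whose fill-graph on $V_{vec} \cup C$ is exactly the gadget of Lemma~\ref{lem:HardnessSingleStep} plus the clique $C$ and all $V_{vec} \times C$ edges: each $v_i$ has fill-degree $\sigma + |\{v_{i'}\neq v_i : \aa(i)\cdot\aa(i')\neq 0\}|$ and each $c_k$ has fill-degree $\sigma + n - 1$, so an orthogonal pair exists iff some vector's fill-degree is strictly less than $\sigma + n - 1$. The reduction finishes by running the hypothesized $O(m^{4/3-\theta})$ ordering algorithm, reading off the vertex pivoted immediately after phases~1 and~2, and computing its fill-degree via Lemma~\ref{lem:ComputeFill} in $O(m)$ additional time; the total cost is $O(m^{4/3-\theta}) = O(n^{2 - 3\theta/2}\,\mathrm{polylog}(n))$, contradicting Theorem~\ref{thm:HardnessOV} under SETH. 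The hard part will be the phase~2 bookkeeping: the degree-tie cases at the hub/vector boundary (for instance, hubs with $|S_j|=2$ versus vectors sparsely placed in small groups) must all be resolved in favor of hubs by the fixed lex order, and the preprocessing plus group-sizing constraints must be chosen precisely so that no vector's fill-degree ever strictly undercuts a surviving hub.
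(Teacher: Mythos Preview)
Your phase~2 analysis has a gap that is not a matter of tie-breaking but a fatal ordering failure. After all copies are pivoted, hub $h_j$ has fill-degree $\sigma + |S_j|$, and $|S_j|$ can be as large as $n$. But at that same moment a vector $v_i$ has fill-degree at most
\[
\sigma \;+\; \|\aa(i)\|_0 \;+\; \sum_{j:\aa(i)_j=1}\bigl(|\text{group of }v_i\text{ at }j| - 1\bigr)
\;\le\; (d+1)\sigma \;=\; O(\sqrt n\,\log^2 n),
\]
since each eliminated copy is an isolated component whose only remaining neighbors are its group and its hub; thus $v_i$ sees only $C$, one hub per nonzero coordinate, and at most $\sigma-1$ group-mates per such coordinate. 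Consequently any hub with $|S_j| > d\sigma$ is \emph{strictly} beaten by every vector, and the min-degree algorithm pivots a vector next, not that hub. This is not a boundary case lex order can fix; it is an $\widetilde O(\sqrt n)$ versus $\Theta(n)$ gap, and no ``harmless'' preprocessing of the \OV instance forces all $|S_j| \le d\sigma$. The obstruction is intrinsic to the star-split idea: for $h_j$ to stitch the groups into a clique on $S_j$, it must at some moment be fill-adjacent to all of $S_j$, hence have fill-degree at least $|S_j|$.

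The paper avoids the hub entirely. For each dimension $j$ it builds a \emph{covering set system} (Lemma~\ref{lem:CoveringSetSystem}): $O(n)$ subsets $I_1,\dots,I_k\subseteq[n]$, each of size at most $10\sqrt n$, such that every pair $(i_1,i_2)\in[n]^2$ lies in some $I_t$. One dimension-vertex is created per subset, adjacent to $\{v_i : i\in I_t,\ \aa(i)_j=1\}$. Every dimension-vertex thus has degree at most $10\sqrt n$, while a padding clique of size $20\sqrt n$ attached to $V_{vec}$ keeps every vector and pad vertex above $20\sqrt n$; so all dimension-vertices are pivoted first in a single uncontested phase. The pair-covering property then does the stitching your hub was meant to do: if $\aa(i_1)_j=\aa(i_2)_j=1$, the eliminated subset-vertex for an $I_t\ni i_1,i_2$ makes $v_{i_1},v_{i_2}$ fill-adjacent. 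The idea you are missing is to replace one high-degree hub by a quadratic family of low-degree covering vertices, so that no intermediate vertex ever needs fill-degree proportional to $|S_j|$.
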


The main idea of our construction is to modify the bipartite graph
in Subsection~\ref{subsec:OverviewHardness} so that a minimum degree
ordering has the effect of necessarily eliminating
the $d$ vertices in $V_{dim}$ before any vector vertex in $V_{vec}$.
This allows us to use a minimum degree ordering on the graph to efficiently
solve an \OV instance.
The main bottleneck in our initial approach is that vertices in $V_{dim}$
can have degree as large as $n$,
so requiring that they are removed first is difficult.
We address this by breaking these vertices apart into $\Theta(n)$
vertices, each with degree $O(\sqrt{n})$,
using the following construction which we call a \emph{covering set system}.

\begin{lemma}
\label{lem:CoveringSetSystem}
Given any positive integer $n$,
we can construct in $O(n^{3/2})$ time
a covering set system of the integers $[n] = \{1,2, \dots, n\}$.
This system is collection of subsets $I_1, I_2, \dots, I_k \subseteq [n]$
such that:
\begin{enumerate}
  \item The number of subsets $k = O(n)$.
  \item The cardinality $|I_j| \le 10 \sqrt{n}$, for all $1 \leq j \leq k$.
  \item For each $(i_1, i_2) \in [n]^2$ there exists a subset $I_j$ such that
  $i_1, i_2 \in I_j$.
\end{enumerate}
\end{lemma}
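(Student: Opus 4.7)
The plan is to use a simple block-pair construction. First I would partition $[n]$ into $t = \lceil \sqrt{n} \rceil$ consecutive blocks $B_1, B_2, \dots, B_t \subseteq [n]$, each of size at most $\lceil \sqrt{n} \rceil$. Then for every unordered pair of block indices $1 \le a \le b \le t$, I would include the set $I_{a,b} \defeq B_a \cup B_b$ in the covering system (with $I_{a,a} = B_a$ handling the singleton case).

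The three required properties then fall out essentially by inspection. For the count, the number of sets is $\binom{t}{2} + t = O(t^2) = O(n)$, which gives property (1). For the size, each $I_{a,b}$ has cardinality at most $2\lceil \sqrt{n} \rceil \le 2\sqrt{n} + 2$, which is at most $10 \sqrt{n}$ for all $n \ge 1$, giving property (2). For coverage, given any pair $(i_1, i_2) \in [n]^2$, let $a$ and $b$ be the indices of the (not necessarily distinct) blocks containing $i_1$ and $i_2$ respectively; then $\{i_1, i_2\} \subseteq I_{\min(a,b), \max(a,b)}$, giving property (3).

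For the running time, constructing the blocks themselves takes $O(n)$ time, and then we explicitly write down $O(n)$ sets each of size at most $2\lceil\sqrt{n}\rceil$, for a total construction cost of $O(n \cdot \sqrt{n}) = O(n^{3/2})$, matching the claimed bound.

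There is no real obstacle here: the only subtlety is handling the case when $\sqrt{n}$ is not an integer, which is resolved uniformly by replacing $\sqrt{n}$ with $\lceil \sqrt{n} \rceil$ in the block decomposition and absorbing the resulting constants into the slack between $2\sqrt{n}$ and the stated bound of $10\sqrt{n}$. This looser size bound of $10\sqrt{n}$ in the lemma statement suggests the authors may use a slightly different construction (perhaps one tuned to give additional slack for downstream arguments in the proof of Theorem~\ref{thm:HardnessOrdering}), but any such variant will still follow the same "pair of blocks covers every pair of elements" template and achieve the same $O(n)$ / $O(\sqrt{n})$ / $O(n^{3/2})$ trade-offs.
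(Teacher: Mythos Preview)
Your block-pair construction is correct and actually simpler than the paper's. The paper takes a different route: it picks the smallest prime $p > \sqrt{n}$ (invoking Bertrand's postulate to get $p < 4\sqrt{n}$), arranges $[p^2]$ as a $p \times p$ grid, and takes as the covering sets all $p^2$ affine lines $D(a,b) = \{(x,y) : y \equiv ax+b \pmod p\}$ together with the $p$ row sets $R(a) = \{(x,y) : x \equiv a\}$. Any two grid points with distinct $x$-coordinates lie on a unique line $D(a,b)$, and any two with the same $x$-coordinate lie in a common $R(a)$. This gives $p^2 + p = O(n)$ sets each of size exactly $p \le 4\sqrt{n}$.

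The two constructions achieve identical asymptotics; yours is more elementary (no primes, no Bertrand) and gives sets of size at most $2\lceil\sqrt{n}\rceil$, while the paper's gives sets of size exactly $p$. Since the downstream argument in Lemma~\ref{lem:ConstructionElimOrdering} only needs the upper bound $|I_j| \le 10\sqrt{n}$ to separate the degrees of $V_{dim}$ from those of $V_{vec} \cup V_{pad}$, your construction would plug in just as well. Your closing guess that the paper uses a ``pair of blocks'' template is the one place you are off: the paper's affine-plane construction is genuinely different in flavor, though equivalent in effect.
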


Next we pad each of the vertices in $G_{vec}$ with $\Omega(\sqrt{n})$
edges to ensure that they are eliminated after the vertices
introduced by the covering set systems.
We outline this construction in Figure~\ref{fig:HardnessOrderingConstruction}.

\begin{figure}[H]

\begin{algbox}
\begin{enumerate}
\item Create one vertex per input vector $\aa(1),\aa(2),\dots,\aa(n)$,
and let these vertices be $V_{vec}$.
\item For each dimension $1 \leq j \leq d$:
  \begin{enumerate}
    \item Construct a covering set system for $[n]$.
  \item Create a vertex in $V_{dim}$ for each subset in this covering set
  system.
  \item For each vector $\aa(i)$ such that $\aa(i)_j = 1$,
    add an edge between its vertex in $V_{vec}$ and every vertex corresponding
    to a subset in this covering system that contains $i$.
  \end{enumerate}
\item Introduce $20 \sqrt{n}$ extra vertices called $V_{pad}$:
  \begin{enumerate}
  \item Connect all pairs of vertices in $V_{pad}$.
  \item Connect every vertex in $V_{pad}$ with every vertex in $V_{vec}$.
  \end{enumerate}
\end{enumerate}

\end{algbox}

\caption{Construction for reducing \OV to \MinDeg.}

\label{fig:HardnessOrderingConstruction}
\end{figure}

\begin{lemma}
\label{lem:size}
Let $G$ be the graph
produced by the construction in Figure~\ref{fig:HardnessOrderingConstruction}
for an instance of \OV with $n$ vectors of dimension $d$.
We have $|V| = O(nd)$ and $|E| = O(n^{3/2}d)$.
\end{lemma}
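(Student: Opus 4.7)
The plan is to bound $|V|$ and $|E|$ by summing the contributions from each of the three construction steps in Figure~\ref{fig:HardnessOrderingConstruction}, invoking the size guarantees of the covering set system (Lemma~\ref{lem:CoveringSetSystem}) where needed. This is essentially a direct counting argument, so there is no significant obstacle beyond careful bookkeeping of per-dimension contributions.

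For the vertex count, I would first note that $|V_{vec}| = n$, one vertex per input vector. For $V_{dim}$, each of the $d$ dimensions yields one vertex per subset in its covering set system; by part 1 of Lemma~\ref{lem:CoveringSetSystem} the number of subsets is $k = O(n)$, so $|V_{dim}| = O(nd)$. Finally, $|V_{pad}| = 20\sqrt{n}$. Summing gives $|V| = n + O(nd) + O(\sqrt{n}) = O(nd)$ as required.

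For the edge count, the contribution between $V_{vec}$ and $V_{dim}$ is the dominant term and requires the bound on subset cardinality. For a fixed dimension $j$, an edge is added for each incidence $(i, I_\ell)$ with $i \in I_\ell$ and $\aa(i)_j = 1$; the number of such incidences is at most $\sum_{\ell=1}^{k} |I_\ell|$, which by parts 1 and 2 of Lemma~\ref{lem:CoveringSetSystem} is $O(n) \cdot 10\sqrt{n} = O(n^{3/2})$. Summing over the $d$ dimensions yields $O(n^{3/2} d)$ edges from this step. The edges inside $V_{pad}$ number $\binom{20\sqrt{n}}{2} = O(n)$, and the edges between $V_{pad}$ and $V_{vec}$ number $20\sqrt{n} \cdot n = O(n^{3/2})$. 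Combining these contributions gives $|E| = O(n^{3/2} d) + O(n^{3/2}) + O(n) = O(n^{3/2} d)$, completing the bound.
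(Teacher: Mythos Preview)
Your proposal is correct and follows essentially the same approach as the paper's proof: both argue by directly summing the vertex and edge contributions from $V_{vec}$, $V_{dim}$, and $V_{pad}$, invoking the $O(n)$ subset count and $10\sqrt{n}$ cardinality bound from Lemma~\ref{lem:CoveringSetSystem} to handle the $V_{vec}$--$V_{dim}$ edges. Your bookkeeping of the incidences $(i, I_\ell)$ per dimension is slightly more explicit than the paper's one-line $d \cdot 10\sqrt{n} \cdot O(n)$, but the argument is the same.
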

\begin{proof}
The number of vertices in $G$ is
\[
  \left|V\right|
  =
  20 \sqrt{n} + n + d \cdot O\left(n\right)
  = O\left(nd\right).
\]
Similarly, an upper bound on the number of edges in $G$ is
\[
 \left|E\right|
  = \binom{20 \sqrt{n} }{2} + 20 \sqrt{n} \cdot n
  + d \cdot 10\sqrt{n} \cdot O\left(n\right) 
  = 
  O\left(n^{3/2} d\right),
\]
where the terms on the left-hand side of the final equality correspond to edges
contained in $V_{pad}$,
the edges between $V_{pad}$ and $V_{vec}$,
and edges between $V_{vec}$ and $V_{dim}$, respectively.
\end{proof}

\begin{lemma}
\label{lem:ConstructionElimOrdering}
Consider a graph $G$ constructed from an \OV instance
as described in Figure~\ref{fig:HardnessOrderingConstruction}.
For any min-degree ordering of $G$, the first
vertices to be eliminated are those in $V_{dim}$.
The fill degree of the next eliminated vertex is $\min_{v \in V_{vec}} \deg(v)$.
\end{lemma}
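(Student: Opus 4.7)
The plan is to track the fill-degrees of the three vertex sets $V_{dim}$, $V_{vec}$, $V_{pad}$ throughout the elimination, argue that $V_{dim}$ strictly attains the minimum degree until it is exhausted, and then analyze the residual fill graph once all of $V_{dim}$ has been pivoted.

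For the first claim, I would bound the initial fill-degrees in $G$: each $x \in V_{dim}$ has degree at most $10\sqrt{n}$ because every covering subset has size at most $10\sqrt{n}$ and $x$ is adjacent only to $V_{vec}$; each $u \in V_{vec}$ has degree at least $20\sqrt{n}$ from its edges to $V_{pad}$ alone; and each $w \in V_{pad}$ has degree exactly $(20\sqrt{n} - 1) + n$. The key structural observation, and the main point of subtlety, is that $V_{dim}$ is an independent set in $G$, so when a $V_{dim}$ vertex $x$ is pivoted the clique added on $N(x) \subseteq V_{vec}$ never creates a new fill edge incident to another remaining $V_{dim}$ vertex. Therefore the remaining $V_{dim}$ vertices still have fill-degree at most $10\sqrt{n}$, while fill-degrees in $V_{vec}$ and $V_{pad}$ only grow under elimination and thus remain at least $20\sqrt{n}$. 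So the minimum is strictly attained inside $V_{dim}$ at every step until $V_{dim}$ is exhausted, which forces any min-degree ordering to eliminate all of $V_{dim}$ first.

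For the second claim, I would examine the fill graph $G^{+}$ once all of $V_{dim}$ has been eliminated. Each $w \in V_{pad}$ still has fill-degree exactly $n + 20\sqrt{n} - 1$, since it belongs to the pad clique and is adjacent to every $V_{vec}$ vertex. For $u \in V_{vec}$, the fill edges to other $V_{vec}$ vertices are exactly those pairs that shared a common (now eliminated) $V_{dim}$ neighbor; combined with the covering set system property, which guarantees that for each pair of vector indices some subset $I_j$ contains both, this gives that $u$ and $v$ are fill-adjacent iff $\aa(u) \cdot \aa(v) \neq 0$ (the covering property is exactly what is needed here, since without it a non-orthogonal pair could fail to share any $V_{dim}$ neighbor). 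Hence
\[
\deg^{G^{+}}(u) = 20\sqrt{n} + \left|\{v \in V_{vec}\setminus\{u\} : \aa(u) \cdot \aa(v) \neq 0\}\right| \leq n + 20\sqrt{n} - 1 = \deg^{G^{+}}(w),
\]
so the overall minimum fill-degree among remaining vertices is attained in $V_{vec}$ (possibly tied with $V_{pad}$) and equals $\min_{v \in V_{vec}} \deg(v)$, completing the proof.
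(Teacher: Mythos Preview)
Your approach matches the paper's closely, but one step is imprecise. You write that ``fill-degrees in $V_{vec}$ and $V_{pad}$ only grow under elimination.'' This is not true for $V_{vec}$: when a vertex $x \in V_{dim}$ adjacent to some $u \in V_{vec}$ is eliminated, $u$ loses the fill-edge to $x$, and if $u$ was $x$'s only neighbor (which the construction does not rule out), the fill-degree of $u$ strictly drops by one. The reason the bound $\deg^{G^+}(u) \geq 20\sqrt{n}$ survives is simply that the $20\sqrt{n}$ direct edges from $u$ to $V_{pad}$ are never removed, since no $V_{pad}$ vertex is eliminated during this phase; this is exactly how the paper argues it (and similarly, $V_{pad}$ degrees stay fixed at $(20\sqrt{n}-1)+n$ because $V_{pad}$ has no edges into $V_{dim}$ at all). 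With that correction your argument goes through.

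Your handling of the second claim is more detailed than the paper's: you explicitly characterize the fill-edges inside $V_{vec}$ via the covering-set property, whereas the paper just observes that every remaining vertex has fill-degree at most $n + 20\sqrt{n} - 1$ and so the next pivot's degree is either $\min_{v\in V_{vec}}\deg(v)$ or equal to it by virtue of all degrees being $n+20\sqrt{n}-1$. Both routes are valid.
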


\begin{proof}
Let the graph be $G = (V, E)$, where $V$ is partitioned into
\[
  V_{vec} \cup V_{dim} \cup V_{pad}
\]
as described in Figure~\ref{fig:HardnessOrderingConstruction}.
Initially, for every vertex $v_{pad} \in V_{pad}$ we have
\[
  \deg\left( v_{pad} \right)
  =
  \left(20 \sqrt{n} - 1\right) + n.
\]
For every vertex $v_{vec} \in V_{vec}$ we have
\[
  \deg\left(v_{vec}\right)
  =
  20 \sqrt{n} + \left| E\left( v_{vec}, V_{dim} \right) \right|
  \geq
  20 \sqrt{n},
\]
and for every vertex $v_{dim} \in V_{dim}$ we have
\[
  \deg\left(v_{dim} \right) \le 10 \sqrt{n}.
\]

Pivoting out a vertex in $V_{dim}$ does not increase the degree of any other
vertex in $V_{dim}$, because no two vertices in $V_{dim}$ are adjacent.
As these vertices are pivoted, we still maintain
\[
  \deg(v) \ge 20 \sqrt{n},
\]
for all $v \in V_{vec}$.
Therefore, the first vertices to be pivoted must be all $v \in V_{dim}$.
After all vertices in $V_{dim}$ have been pivoted,
the next vertex must have fill degree $\min_{v \in V_{vec}} \deg(v)$,
because either a vertex in $V_{vec}$ will be eliminated
or all remaining vertices have fill degree $20\sqrt{n} + n - 1$.
\end{proof}

\begin{proof}[Proof of Theorem~\ref{thm:HardnessOrdering}.]
Suppose for some $\theta > 0$ there exists an $O(m^{4/3 - \theta})$ time algorithm
for \MinDeg.
Construct the graph $G=(V,E)$ with covering sets as described in
Figure~\ref{fig:HardnessOrderingConstruction}.
For $d = \Theta(\log^2 n)$,
it follows from Lemma~\ref{lem:size} that
$|V|=O(n \log^2 n)$
and
$|E| = O(n^{3/2} \log^2 n)$.
Therefore by the assumption, we can obtain
a min-degree ordering of $G$ in time
\[
  O\left(m^{4/3 - \theta}\right)
  = O\left( \left(n^{3/2} \log^2 n\right)^{4/3 - \theta}\right)
  = O\left(n^{2 - \theta}\right).
\]

By Lemma~\ref{lem:ConstructionElimOrdering}, the state
of the elimination steps after the first $|V_{dim}|$ vertices
have been pivoted is essentially identical to the
partially eliminated state from Lemma~\ref{lem:HardnessSingleStep}.
Then by Lemma~\ref{lem:ComputeFill},
we can compute the degree of the next vertex
to be eliminated in $O(m) = O(n^{2 - \delta})$ time.
Checking whether the degree of that vertex is $20\sqrt{n} + n - 1$
allows us to solve \OV in
time $O(n^{2-\theta})$, which contradicts SETH.
\end{proof}

It remains to efficiently construct the covering set systems as defined in
Lemma~\ref{lem:CoveringSetSystem}, which we can interpret as a strategy for
covering all the edges of $K_n$ with $O(n)$ $K_{10\sqrt{n}}$ subgraphs.
We also note that our construction of covering set systems is related to
existence results for the covering problem with fixed-size
subgraphs~\cite{Chee2013covering,Caro1998covering}.

\begin{proof}[Proof of Lemma~\ref{lem:CoveringSetSystem}.]
Let $p = \textsc{NextPrime}(\sqrt{n})$.
Bertrand's postulate asserts that
$p < 4\sqrt{n}$,
so we can compute $p$ in $O(n)$ time.
Clearly we have $[n] \subseteq [p^2]$, so it suffices to find a covering for $[p^2]$.
Map the elements of $[p^2]$ to the coordinates of a $p \times p$ array in the
canonical way so that
\begin{align*}
  1 &\mapsto (0,0)\\
  2 &\mapsto (0, 1)\\
  &\hspace{0.22cm}\vdots\\
  p^2 &\mapsto (p-1,p-1).
\end{align*}
For all $(a,b) \in \{0,1,\dots,p-1\}^2$,
define
\[
D\left(a,b\right)
\defeq
\left\{
  \left(x,y\right) \in \left\{0,1,\dots,p-1\right\}^2
  : y \equiv ax + b \pmod{p}
\right\}
\]
to be the diagonal subsets of the array,
and define
\[
R\left(a\right)
\defeq
\left\{
  \left(x,y\right) \in \left\{0,1,\dots,p-1\right\}^2 :
  x \equiv a \pmod{p}
\right\}
\]
to be the row subsets of the array.
Let the collection of these subsets be
\[
  S
  =
  \left\{
    D\left(a,b\right) : a,b \in \left\{0,1,\dots,p-1\right\}
  \right\}
  \cup
  \left\{
     R\left(a\right) : a \in \left\{0,1,\dots,p-1\right\}
  \right\}.
\]

The construction clearly satisfies the first two conditions.
Consider any $(a,b) \in [p^2]^2$ and their coordinates
in the array $(x_1,y_1)$ and $(x_2,y_2)$.
If $x_1 = x_2$, then $(x_1,y_1), (x_2,y_2) \in R(x_1)$.
Otherwise, 
it follows that $(x_1,y_1)$ and $(x_2,y_2)$ are solutions to the line
\[
y
\equiv
  \frac{y_1 - y_2}{x_1 - x_2} \cdot
\left( x - x_1 \right) + y_1 \pmod{p},
\]
so the third condition is satisfied.
\end{proof}

\section{Sketching Based Algorithms for Computing Degrees}
\label{sec:Sketching}

Let us recall a few relevant definitions from Section~\ref{sec:Preliminaries} for convenience.
For a given vertex elimination sequence
\[
  u^{(1)}, u^{(2)}, \ldots, u^{(n)},
\]
let ${G^+}^{(t)}$ denote the fill graph obtained by pivoting
vertices $u^{(1)}, u^{(2)}, \ldots, u^{(t)}$.
Let $\delta^{(t)}$ denote the minimum degree of a vertex in ${G^+}^{(t)}$.
An $\ell_0$-sketch data structure consists of the following:
\begin{itemize}
  \item Each remaining vertex $u$ generates a random number $x_u$.
  \item Each remaining vertex $u$ computes the vertex with the
  minimum $x_v$ value among its neighbors in ${G^+}^{(t)}$ and itself
  (which we call the minimizer of $u$).
\end{itemize}

In this section we show that
if an $\ell_0$-sketch data structure can be maintained efficiently
for a dynamic graph, then we can use a set of copies of this data structure
to find the vertex with minimum fill degree 
at each step and pivot out this vertex.
Combining this with data structures for efficiently propagating
sketch values from Section~\ref{sec:DynamicGraphs} gives a faster
algorithm for computing minimum degree orderings on graphs.
We use this technique in three different cases.

First, we consider the case where the minimum degree at each step is bounded.
In this case,
we choose a fixed number of copies of the $\ell_0$-sketch data structure
and look at the minimizers over all the copies.
\begin{theorem}
	\label{thm:BoundedDegreeAlgo}
	There is an algorithm \textsc{DeltaCappedMinDegree} that,
	when given a graph with a lexicographically-first
	min-degree ordering such that the minimum degree is always bounded by $\Delta$,
	outputs the ordering with high probability in expected time $O(m \Delta \log^3 n)$
	and uses space $O(m \Delta \log n )$.
\end{theorem}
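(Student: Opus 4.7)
The plan is to run $K = \Theta(\Delta \log n)$ independent copies of the dynamic $\ell_0$-sketch data structure (whose efficient maintenance under pivots is the subject of Section~\ref{sec:DynamicGraphs}). At each step, for every remaining vertex $u$ we aggregate the $K$ minimizers reported for $u$ across the copies; counting the number of distinct values in this multiset (via hashing) yields $|N_{fill}(u)|$ exactly whenever coupon-collecting has succeeded for $u$. Given these exact degrees, we pivot the vertex of minimum fill-degree, breaking ties by lexicographic order, update all $K$ sketches, and repeat.

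For correctness, fix a time step $t$ and a remaining vertex $u$ in ${G^+}^{(t)}$, and recall that by assumption $|N_{fill}(u)| \le \Delta$. In each of the $K$ independent copies, the random priorities $x_v$ are i.i.d., so the minimizer of $u$ is a uniformly random element of $N_{fill}(u)$. A standard coupon-collector argument shows that for $K = c\Delta \log n$ with a sufficiently large constant $c$, all of $N_{fill}(u)$ appears as a minimizer across the $K$ copies with probability at least $1 - n^{-10}$. A union bound over all $n$ vertices and at most $n$ pivot steps ensures that every computed degree is exact w.h.p. Since the lex-first min-degree ordering is uniquely determined by the input graph once degrees are exact, the realized pivot sequence is the target ordering w.h.p., and we operate in the oblivious adversary model for each individual sketch copy.

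For the running time, we invoke the amortized guarantee from Section~\ref{sec:DynamicGraphs}: maintaining a single $\ell_0$-sketch through the entire sequence of $n$ pivots takes expected time $O(m \log^2 n)$, where one $\log n$ factor typically comes from heap or hash operations and another from the amortized analysis of neighborhood contraction. Summing over the $K = O(\Delta \log n)$ independent copies gives a total sketch-maintenance cost of $O(m \Delta \log^3 n)$. The per-step work of computing degrees also fits in this budget: each update in a sketch changes only a few minimizer values, and we can incrementally update a per-vertex multiset count along with a global min-heap keyed by degree, so identifying the next pivot amortizes against the sketch updates themselves. The space is $O(m)$ per copy for the graph data plus $O(n)$ for priorities and minimizers, giving $O(m \Delta \log n)$ overall.

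The main obstacle is precisely the dependency issue flagged in Section~\ref{subsec:Correlation}: the choice of pivot at step $t$ is derived from the sketches' randomness, which ordinarily breaks the oblivious-adversary assumption that Section~\ref{sec:DynamicGraphs} relies on. Here we bypass the issue cleanly because with $K = \Theta(\Delta\log n)$ copies the degrees are recovered \emph{exactly} w.h.p., so conditional on this high-probability event the realized pivot sequence depends only on the deterministic lex-first tie-breaking rule, not on the random priorities; thus each sketch copy is fed a fixed sequence and the amortized analysis applies. A minor additional point to verify is that counting distinct minimizers and breaking ties lexicographically can be done without re-examining all $K$ sketches per step, which is handled by maintaining an incremental distinct-count per vertex and a single global priority queue over the $(\deg(u), u)$ pairs.
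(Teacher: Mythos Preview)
Your approach is essentially the paper's: maintain $K=\Theta(\Delta\log n)$ independent $\ell_0$-sketch copies (Theorem~\ref{thm:DataStructureMain}), use coupon-collecting to recover fill-degrees (Lemma~\ref{lem:approxdegree}), and incrementally track distinct-minimizer counts in a global priority structure. The paper uses balanced BSTs where you use hashing, and your explicit handling of the oblivious-adversary issue---the pivot sequence becomes the fixed lex-first ordering conditional on exact recovery---is a point the paper leaves implicit, but otherwise the arguments line up.

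One slip to fix: you assert ``by assumption $|N_{fill}(u)|\le\Delta$'' for \emph{every} remaining $u$, but the hypothesis only says the \emph{minimum} fill-degree at each step is at most $\Delta$; other remaining vertices can have arbitrarily large fill-degree, and for those the coupon-collector bound does not give count equal to degree. This does not break the algorithm, but the correctness argument needs a patch: the distinct-minimizer count never exceeds the true degree, so the minimum count is at most $\delta\le\Delta$; vertices with degree in $[\delta,\Delta]$ have count equal to degree w.h.p.\ by Lemma~\ref{lem:approxdegree}; and for a vertex with degree $d>\Delta$ a separate balls-in-bins concentration bound (the number of nonempty bins among $K=c\Delta\log n$ uniform throws into $d$ bins exceeds $\Delta$ w.h.p.) shows its count stays strictly above $\delta$, so it is never selected. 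The paper's own proof also elides this point.
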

	
Next, we eliminate the condition on the minimum degrees
and allow the time and space bounds of the algorithm to be output sensitive.
In this case, we adaptively increase the number of copies of the $\ell_0$-data
structure.

\begin{theorem}
  \label{thm:OutputSensitiveAlgo}
  There is an algorithm \textsc{OutputSensitiveMinDegree} that,
  when given a graph with a lexicographically-first
  min-degree ordering $\delta^{(1)}, \delta^{(2)}, \ldots, \delta^{(n)}$,
  outputs this ordering with high probability in expected time $ O(m \cdot \max_{1 \leq t \leq n} \delta^{(t)} \cdot \log^3 n)$
  and uses space $O(m \cdot \max_{1 \leq t \leq n} \delta^{(t)} \cdot \log n)$.
\end{theorem}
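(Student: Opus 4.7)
My plan is to reduce Theorem~\ref{thm:OutputSensitiveAlgo} to Theorem~\ref{thm:BoundedDegreeAlgo} via a geometric doubling schedule on the capped-degree parameter. Let $\Delta^{\star} := \max_{1 \leq t \leq n} \delta^{(t)}$; this quantity is unknown at the outset, so I would run attempts with $\Delta = 1, 2, 4, \ldots$ and stop at the first one that completes. Within each attempt with parameter $\Delta$, I invoke \textsc{DeltaCappedMinDegree} on the original graph but instrument it with a failure-detection check at every pivot step, described below.

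Concretely, each attempt maintains the $O(\Delta \log n)$ copies of the $\ell_0$-sketch data structure already used by Theorem~\ref{thm:BoundedDegreeAlgo}. At each step I identify the vertex $v$ with smallest seen-degree over all copies, breaking ties lexicographically. If its seen-degree is at most $\Delta$, coupon collection on $O(\Delta \log n)$ random neighbor samples guarantees w.h.p.\ that $v$'s entire fill-neighborhood has been recovered, so $v$ is the true lex-first min-degree vertex and I pivot it. If instead no remaining vertex has seen-degree at most $\Delta$, I abort the attempt, discard the sketches, double $\Delta$, and restart from scratch on the original graph. Once $\Delta \geq \Delta^{\star}$, no step can abort, so the attempt terminates successfully by Theorem~\ref{thm:BoundedDegreeAlgo}; a union bound over the $O(\log \Delta^{\star}) = O(\log n)$ attempts preserves the high-probability guarantee, and the lexicographic tie-breaking combined with the ``randomness allocated in advance'' viewpoint of Section~\ref{subsection:WhereMistakesAreMade} ensures the final successful attempt outputs the canonical ordering.

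For the running time, summing $O(m \cdot 2^i \log^3 n)$ over $i = 0, 1, \ldots, \lceil \log_2 \Delta^{\star} \rceil + 1$ is a geometric series dominated by the final term, giving $O(m \Delta^{\star} \log^3 n)$ in expectation, and the space bound $O(m \Delta^{\star} \log n)$ is dominated by the final attempt because earlier sketches are discarded before the next attempt begins. The main technical care goes into the failure-detection signal: I need the $O(\Delta \log n)$ sketch copies to coupon-collect any fill-neighborhood of size at most $\Delta$ with probability at least $1 - n^{-c-2}$, so that by a union bound over vertices and steps the instrumented attempt never falsely aborts when $\Delta \geq \Delta^{\star}$ and never falsely ``succeeds'' on a spurious pivot when the true minimum exceeds $\Delta$. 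Choosing the hidden constant inside $O(\Delta \log n)$ large enough handles both directions, and independence of the $O(\log n)$ restarts prevents amplification of the failure probability.
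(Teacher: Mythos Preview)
Your proposal is correct, but it takes a different route from the paper. The paper does \emph{not} restart: it runs a single elimination pass and grows the number of sketch copies on the fly. Concretely, it keeps a running cap $C$ (monotone nondecreasing across steps), maintains $O(C\log n)$ copies, and at each step checks whether the computed minimum seen-degree $\delta_C$ satisfies $\delta_C \le C/2$; if not, it doubles $C$, adds the new copies, and rechecks. The factor-two slack in the test ``$\delta_C > C/2$'' is what replaces your two-sided failure-detection argument: it only needs that a vertex with fill-degree $> C$ has seen-degree $> C/2$ under $O(C\log n)$ copies, which is a slightly looser concentration statement than the one you flag. The final bound then follows because $C$ never exceeds $2\Delta^\star$, so the analysis of Theorem~\ref{thm:BoundedDegreeAlgo} applies with $\Delta = 2\Delta^\star$.

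Your restart-based scheme is a perfectly valid alternative and has the virtue of being a clean black-box reduction to Theorem~\ref{thm:BoundedDegreeAlgo}: each attempt is an independent invocation with fresh randomness, and the geometric sum over attempts is transparent. The paper's incremental scheme avoids the constant-factor waste of re-eliminating the same prefix in each attempt, at the cost of needing to argue that new sketch copies can be spun up mid-run on the current component graph (which its data structure supports). One small point worth tightening in your write-up: your abort condition can never \emph{falsely} abort since seen-degree is always at most fill-degree, so the only direction needing the coupon-collector/concentration care is the ``spurious success'' side you already identified; and for correctness of the final output it suffices that the first attempt with $\Delta \ge \Delta^\star$ succeeds, since an earlier attempt that wrongly completes is itself a low-probability event absorbed by the union bound.
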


Lastly, we modify the algorithm to compute the
approximate minimum degree at each step.
In this case, we use $O(\log n \epsilon^{-2})$ copies
of the data structure and use the reciprocal of 
the $(1-1/e)$-th percentile among the $x_v$ values of its minimizers
as an effective approximate of the vertex degree.

\begin{theorem}
  \label{thm:ApproxDegreeDS}
  There is a data structure $\textsc{ApproxDegreeDS}$
  that supports the following two operations:
  \begin{enumerate}
    \item $\textsc{ApproxDegreeDS\_Pivot}(u)$, which pivots a remaining vertex $u$.
    \item $\textsc{ApproxDegreeDS\_Report}()$, which provides
    balanced binary search tree (BST) containers
      \[
        V_1, V_2, \dots, V_{B}
      \]
    such that all vertices in the bucket $V_{i}$ have degrees in the range
    \[
    \left[
    \left( 1 + \epsilon \right)^{i - 2},
    \left( 1 + \epsilon \right)^{i + 2}
    \right].
    \]
  \end{enumerate}
  
  \noindent
  The memory usage of this data structure is $O(m \log{n} \epsilon^{-2})$.
  Moreover, if the pivots are picked independently from the
  randomness used in this data structure
  (i.e., we work under the oblivious adversary model) then:
  \begin{enumerate}
    \item The total cost of all the calls to
       $\textsc{ApproxDegreeDS\_Pivot}$
      is bounded by $O(m \log^3{n}
      \epsilon^{-2})$.
    \item
      The cost of each call to $\textsc{ApproxDegreeDS\_Report}$
    is bounded by $O(\log^2{n} \epsilon^{-1})$.
  \end{enumerate}
\end{theorem}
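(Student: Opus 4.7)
The plan is to maintain $K = \Theta(\epsilon^{-2} \log n)$ independent copies of the $\ell_0$-sketch data structure described in Definition~\ref{def:Sketch}, using the dynamic maintenance machinery from Section~\ref{sec:DynamicGraphs}. In the $k$-th copy, each vertex $v$ is assigned an independent uniform priority $x_v^{(k)} \in [0,1]$, and for every remaining vertex $u$ we maintain its minimizer value $y_u^{(k)} = \min_{v \in N_{fill}(u)} x_v^{(k)}$.

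First I would establish that the $(1-1/e)$-th order statistic of $y_u^{(1)}, \dots, y_u^{(K)}$ yields an accurate degree estimator. If $|N_{fill}(u)| = d$, then each $y_u^{(k)}$ has CDF $1-(1-x)^d$. Let $\hat{y}_u$ denote the $\lceil (1-1/e) K\rceil$-th smallest value, and define the estimated degree $\widetilde{d}(u) = 1/\hat{y}_u$. Solving $1-(1-x)^d = 1-1/e$ gives the target quantile $x^* = 1 - e^{-1/d} = \Theta(1/d)$. A standard Dvoretzky--Kiefer--Wolfowitz (or Chernoff) argument for the empirical quantile of $K = \Theta(\epsilon^{-2} \log n)$ i.i.d.\ samples shows $\hat{y}_u \in [(1-\epsilon/10)x^*, (1+\epsilon/10)x^*]$ with probability $1 - n^{-c}$, giving $\widetilde{d}(u) \in [(1-\epsilon)d, (1+\epsilon)d]$. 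A union bound over all $n$ vertices and $n$ time steps makes this hold throughout the entire pivot sequence; here we crucially invoke the oblivious adversary assumption, which allows us to treat the full sample space as fixed at initialization.

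Next, to maintain the sketches dynamically, I would invoke the data structure of Section~\ref{sec:DynamicGraphs} which, per single sketch, processes the entire pivot sequence in amortized $O(m \log^2 n)$ time under the oblivious model and emits a notification every time some $y_u^{(k)}$ changes. Running $K$ copies in parallel gives total pivot cost $O(m \log^3 n \epsilon^{-2})$. For each vertex $u$, I would keep the multiset $\{y_u^{(k)}\}$ in a balanced BST so that $\hat{y}_u$ can be recomputed in $O(\log K) = O(\log n)$ per notification. To realize the reporting interface, I would bucket each vertex $u$ into index $i(u) = \lfloor \log_{1+\epsilon} \widetilde{d}(u) \rfloor$; since the estimator is within a $(1\pm\epsilon)$ factor of the true degree, any vertex placed in bucket $i$ has true degree in $[(1+\epsilon)^{i-2}, (1+\epsilon)^{i+2}]$, as required. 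The buckets themselves are balanced BSTs stored in a top-level BST keyed by $i$, and each time $\widetilde{d}(u)$ crosses a power-of-$(1+\epsilon)$ threshold we remove and reinsert $u$ at cost $O(\log n)$, which is absorbed into the $O(m\log^3 n \epsilon^{-2})$ pivot budget.

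For \textsc{ApproxDegreeDS\_Report}, we simply traverse the top-level BST and return pointers (not copies) to the nonempty buckets. There are at most $O(\log_{1+\epsilon} n) = O(\epsilon^{-1} \log n)$ such buckets, and locating each via the top-level balanced tree costs $O(\log n)$, yielding the claimed $O(\log^2 n \epsilon^{-1})$ bound. Memory usage is dominated by the $K$ sketch copies, each of which needs $O(m)$ space (this is established in Section~\ref{sec:DynamicGraphs}), for a total of $O(m \log n \epsilon^{-2})$.

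The main obstacle is the quantile concentration step: I must argue that with only $K = \Theta(\epsilon^{-2}\log n)$ samples, the empirical $(1-1/e)$-quantile of the minimizers is uniformly $(1\pm\epsilon)$-close to $x^* = 1-e^{-1/d}$, and that this bound holds simultaneously for all $n^2$ (vertex, time) pairs. This requires the oblivious adversary assumption in an essential way, because the choice of pivots (and hence which vertices are being estimated at which time) must be decoupled from the randomness $\{x_v^{(k)}\}$; otherwise the kind of adversarial amplification described in Section~\ref{subsec:Correlation} could defeat the concentration. A secondary obstacle is verifying that the bucket-reinsertion bookkeeping is subsumed by the sketch update cost from Section~\ref{sec:DynamicGraphs}, rather than introducing an additional factor of $K$ per pivot.
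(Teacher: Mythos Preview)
Your proposal is correct and follows essentially the same approach as the paper: maintain $K=\Theta(\epsilon^{-2}\log n)$ independent $\ell_0$-sketches via the machinery of Section~\ref{sec:DynamicGraphs}, use the $\lfloor K(1-1/e)\rfloor$-ranked minimizer as a degree estimator (the paper's Lemma~\ref{lem:minValue}, proved by Hoeffding rather than DKW, but equivalent), and bucket by powers of $(1+\epsilon)$. The only implementation difference is that the paper keeps a single balanced BST over all vertices keyed by the quantile $q(u)$ and \emph{splits} it at the $B=O(\epsilon^{-1}\log n)$ thresholds inside \textsc{Report}, whereas you maintain the buckets explicitly and migrate vertices on each update; both give the same bounds, and your ``secondary obstacle'' about reinsertion bookkeeping is indeed absorbed since each of the $O(m\log n)$ notifications per copy triggers at most one remove/insert.
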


\subsection{Computing Exact Min-Degree}
We consider the case where the minimum degree in each of the
fill graphs ${G^+}^{(t)}$ is at most $\Delta$.  In this case, we maintain $k =
O(\Delta \log n)$ copies of the $\ell_0$-sketch data structure.  By the
coupon-collector argument, any vertex with degree at most $\Delta$ has a list
of all its distinct neighbors with high probability.  This implies that for
each $1 \leq t \leq n$, we can obtain the exact min-degree in ${G^+}^{(t)}$
with high probability.  Figure~\ref{fig:DeltaCappedGlobalVar} gives a brief
description of the data structures we will maintain for this version of the
algorithm.

\begin{figure}[H]
  \begin{algbox}
    \underline{Global Variables}:
    graph $G$ that undergoes pivots,
    degree cap $\Delta$.

    \begin{enumerate}
      \item $k$, the number of copies set to $O(\Delta \log{n})$.
      \item $k$ copies of the $\ell_0$-sketch data structure
      \[
        \variable{DynamicL0Sketch}^{(1)},
        \variable{DynamicL0Sketch}^{(2)},
        \dots,
        \variable{DynamicL0Sketch}^{(k)}.
      \]

      \item For each vertex $u$, a balanced binary search
      tree $\variable{minimizers}(u)$ that stores the minimizers
      of $u$ across all $k$ copies of the data structure.

      \item A balanced binary tree $\variable{bst\_size\_of\_minimizers}$
      on all vertices $u$ with the key of $u$ set to the
      number of different elements in $\variable{minimizers}(u)$.
    \end{enumerate}
  \end{algbox}
  \caption{Global variables for the $\Delta$-capped min-degree algorithm
  \textsc{DeltaCappedMinDegree}.}
  \label{fig:DeltaCappedGlobalVar}
\end{figure}

Note that if we can efficiently maintain the data structures in
Figure~\ref{fig:DeltaCappedGlobalVar}, simply finding the minimum element in
$\variable{bst\_size\_of\_minimizers}$ gives us the vertex with minimum degree.
Theorem~\ref{thm:DataStructureMain} shows that this data structure can
indeed be maintained efficiently.

\begin{theorem}
  \label{thm:DataStructureMain}
  Given i.i.d.\ random variables $x_v$ associated with each vertex $v \in V({G^+}^{(t)})$,
  there is a data structure $\variable{DynamicL0Sketch}$ that, for each vertex
  $u$, maintains the vertex with minimum~$x_v$ among itself and its neighbors
  in ${G^+}^{(t)}$.
  This data structure supports the following methods:
  \begin{itemize}
  \item $\textsc{QueryMin}(u)$, which returns $\xmin(N^{G^{(t)}}_{fill}(u))$
  for a remaining vertex $u$ in $O(1)$ time.
  \item $\textsc{PivotVertex}(u)$, which pivots a remaining vertex $u$
  and returns the list of all remaining vertices $v$ whose values
  of $\xmin(N^{G^{(t)}}_{fill}(v))$ have changed just after this pivot.
  \end{itemize}
  The memory usage of this data structure is $O(m)$.
  Moreover, for any choice of $x$ values for vertices:
  \begin{enumerate}
  \item The total cost of all the pivots is $O(m \log^2{n})$.
  \item 
  For $1 \le t \le n$,
  the total size of all lists returned by $\textsc{PivotVertex}(u^{(t)})$
  is $O(m \log{n})$.
  \end{enumerate}
\end{theorem}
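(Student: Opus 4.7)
The plan is to design a dynamic data structure on the component graph $G^\circ$ and then bound its amortized cost with a small-to-large merging argument. For each component $c$, I maintain a balanced BST $T_c$ keyed on the $x$-values of the vertices in $N_{remaining}(c)$, so that $\xmin(c)$ is the leftmost element of $T_c$ and is readable in $O(1)$. For each remaining vertex $u$, I maintain a balanced BST $Q_u$ whose entries are $x_u$ itself, the values $x_w$ for $w \in N_{remaining}(u)$, and the values $\xmin(c)$ for $c \in N_{component}(u)$; $Q_u$ is augmented with a root-min field so that $\xmin(N_{fill}(u))$ is stored explicitly. This gives $\textsc{QueryMin}$ in $O(1)$ as required.

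To implement $\textsc{PivotVertex}(u)$, I use the observation from Section~\ref{subsec:PreliminariesGaussian} that pivoting $u$ fuses $u$ and all of $N_{component}(u)$ into a single new component $c^*$ with $N_{remaining}(c^*) = N_{fill}(u)\setminus\{u\}$. I form $T_{c^*}$ by a small-to-large merge: the largest $T_c$ among $\{T_c : c \in N_{component}(u)\}$ is relabelled as $T_{c^*}$ and every other element (from the smaller component trees and from $N_{remaining}(u)$) is inserted into it in $O(\log n)$, with $u$ itself finally deleted. For each vertex $v$ touched by the merge, I update $Q_v$ by removing its stale entries (its old component pointers, or the entry for $u$ if $v$ was a direct remaining neighbor of $u$) and inserting the new entry keyed on $\xmin(c^*)$; if $Q_v$'s stored minimum changes, $v$ is appended to the output list.

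The amortized bounds come from two complementary accountings. The only way fresh elements enter a component-side BST is through $N_{remaining}(u)$ at a pivot step, and because each edge of the input graph $G$ can be ``consumed'' this way at most a constant number of times, the total count of fresh insertions across all pivots is $O(m)$. Small-to-large then guarantees that each inserted element is moved $O(\log n)$ more times (its host set at least doubles on each move), yielding $O(m \log n)$ BST operations in total; each costs $O(\log n)$ for the BST update and the attendant $Q_v$ update, giving the $O(m\log^2 n)$ total time bound. The $O(m \log n)$ total-output-list bound follows because every vertex we add to the output is a vertex whose $Q_v$ was actually modified in that pivot, and the number of such modifications is the same $O(m \log n)$ quantity.

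The hard part will be handling the ``unmoved'' remaining neighbors of the base component that is relabelled to $c^*$: if $\xmin(c^*)$ strictly decreases because one of the inserted elements carries a smaller $x$, then every $Q_w$ for $w$ in the base's old remaining neighborhood still stores the stale $\xmin$, and those $w$'s may silently have a changed $\xmin(N_{fill}(w))$. I plan to resolve this by storing $\xmin(c)$ in each $Q_w$ only through one level of indirection (a pointer to $c$ whose current value is pulled through the BST augmentation), and by observing that $\xmin(c^*)$ strictly decreases only when a smaller element is inserted during the merge, an event already counted among the $O(m \log n)$ small-to-large moves. Each such global drop then triggers a bubble-up in the $Q_w$'s reachable through the corresponding component pointer, and I charge each bubble-up to the same inserted element. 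Verifying that this charging preserves both the $O(m \log^2 n)$ work bound and the $O(m \log n)$ list-size bound, and that the output list exactly captures the vertices whose $\xmin(N_{fill}(\cdot))$ has changed, is the technical core of the proof.
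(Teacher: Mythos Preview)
Your small-to-large argument correctly bounds the number of element \emph{moves} between component BSTs, but it does not bound the number of \emph{notifications} sent to the base component's untouched neighbors, and your proposed fix for that case does not work. When a single inserted element carries a new global minimum into a base component with $K$ remaining neighbors, all $K$ of those neighbors have their $\xmin(N_{fill}(\cdot))$ change and must appear in the output list; charging all $K$ of those updates to the one inserted element is not an $O(1)$-per-move charge. Concretely, take a caterpillar $p_1\!-\!p_2\!-\!\cdots\!-\!p_k$ with a pendant leaf $\ell_i$ at each $p_i$, pivot $p_1,p_2,\dots,p_k$ in order, and set the keys so that $x_{\ell_1}>x_{\ell_2}>\cdots>x_{\ell_k}$ with all $x_{p_j}$ large. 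At step $i$ the freshly attached $\ell_i$ becomes the component minimum and forces every previously attached leaf $\ell_1,\dots,\ell_{i-1}$ onto the output list, giving $\Theta(k^2)$ total output on a graph with $m=\Theta(k)$ edges. So no deterministic $O(m\,\mathrm{polylog}\,n)$ bound of the kind you are aiming for is possible.

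The paper's argument is therefore genuinely different: it does not use small-to-large on the component trees, but instead exploits the randomness of the $x$-values. When melding two components with remaining-neighbor counts $a$ and $b$, the side that must be informed is the one \emph{not} containing the overall minimum; since each of the $a+b$ keys is equally likely to be that minimum, the expected notification cost is at most $2ab/(a+b)\le 2\min(a,b)$. This expected cost is then amortized against the potential $\Phi=\sum_{c}D(c)\log D(c)$, where $D(c)$ is the sum of original-graph degrees of the vertices inside the component $c$, and the potential is shown to vary by at most $O(m\log n)$ over the whole sequence. Despite the phrase ``for any choice of $x$ values'' in the theorem statement, the bound the paper actually proves---and, by the caterpillar example, the only bound that can hold---is in expectation over the i.i.d.\ keys, with the pivot sequence required to be independent of them.
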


\noindent
This theorem relies on data structures described in Section~\ref{sec:DynamicGraphs},
so we defer the proof to the end of that section.

Now consider a vertex $w$ with fill degree $d \leq \Delta$.
By symmetry of the $x_u$ values, each vertex in $| N_{fill}(w)|$
is the minimizer of $w$ with probability $1 / d$.
As a result, maintaining $O(\Delta \log{n})$ copies of the $\ell_0$-sketch data structure would
ensure that we have an accurate estimation of the minimum fill degree.
The pseudocode for this routine is given in Figure~\ref{fig:DeltaCapped}.
The probability guarantees are formalized in Lemma~\ref{lem:approxdegree},
which is essentially a restatement of \cite[Theorem 2.1]{Cohen97}.

\begin{lemma} \label{lem:approxdegree}
  For a remaining vertex $w$ with fill degree $d \leq \Delta$,
  with high probability we have
  \[
    \variable{bst\_size\_of\_minimizers}[w]=d.
  \]
\end{lemma}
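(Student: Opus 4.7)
The plan is to reduce the lemma to a coupon-collector calculation across the $k = O(\Delta \log n)$ independent copies of the $\ell_0$-sketch data structure. Fix a remaining vertex $w$ with fill degree $d \leq \Delta$, and recall that by definition $N_{fill}(w)$ has cardinality exactly $d$ and contains $w$ itself. In each copy $i \in \{1,\dots,k\}$, the random values $\{x_v^{(i)}\}_{v \in V}$ are i.i.d., so by symmetry the minimizer of $w$ in copy $i$, namely $\argmin_{v \in N_{fill}(w)} x_v^{(i)}$, is uniformly distributed over $N_{fill}(w)$. Moreover, since the $k$ copies use independent randomness, the $k$ minimizers of $w$ form $k$ independent uniform samples (with replacement) from $N_{fill}(w)$.

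The goal is then to show that these $k$ samples cover all of $N_{fill}(w)$ with high probability. For any fixed $v \in N_{fill}(w)$, the probability that no copy selects $v$ as the minimizer of $w$ is
\[
  \left(1 - \tfrac{1}{d}\right)^{k}
  \le \exp\!\left(-\tfrac{k}{d}\right)
  \le \exp\!\left(-\tfrac{k}{\Delta}\right).
\]
Setting $k = c\,\Delta \log n$ for a sufficiently large constant $c$ makes this at most $n^{-c}$. Taking a union bound over the $d \le n$ elements of $N_{fill}(w)$ bounds the probability that any element is missed by at most $n^{-(c-1)}$, which is high probability by choosing $c$ large.

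On the event that every $v \in N_{fill}(w)$ appears as the minimizer of $w$ in at least one copy, the balanced BST $\variable{minimizers}(w)$ contains precisely the set $N_{fill}(w)$, so its size equals $d$. Since $\variable{bst\_size\_of\_minimizers}[w]$ is defined as $|\variable{minimizers}(w)|$ (counting distinct elements), we conclude $\variable{bst\_size\_of\_minimizers}[w] = d$ with high probability, as claimed.

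The main subtlety, rather than an obstacle, is justifying that the minimizers across copies are genuinely independent uniform samples from $N_{fill}(w)$; this rests on the independence of the $x$-values across the $k$ copies (as stated in Figure~\ref{fig:DeltaCappedGlobalVar}) and on the fact that, within a single copy, the i.i.d.\ assumption on $\{x_v\}$ makes the identity of the minimizer exchangeable over $N_{fill}(w)$. Once this is in place, the argument is just coupon collecting with the right constants, and the tuning of $k = O(\Delta \log n)$ is exactly what the coupon-collector bound requires in order to drive the per-vertex failure probability below any desired inverse polynomial.
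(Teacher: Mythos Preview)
Your proposal is correct and follows essentially the same approach as the paper: both argue that in each copy the minimizer of $w$ is uniform over $N_{fill}(w)$, bound the probability a fixed element is missed in all $k=O(\Delta\log n)$ copies by $(1-1/d)^k \le n^{-O(1)}$, and then union bound over the at most $n$ elements of $N_{fill}(w)$. Your write-up is slightly more explicit about the independence and symmetry justifications, but the argument is the same coupon-collector calculation.
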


\begin{proof}
The only case where $\variable{bst\_size\_of\_minimizers}[w] \neq d$
is when at least one neighbor of $w$ is not chosen 
in $\variable{minimizers}(w)$.
Let $w'$ be an arbitrary neighbor of $w$ in the fill graph $G^+$.
The probability of $w'$ not being chosen in any of the $k$ copies is
\[
  \left( 1-\dfrac{1}{d} \right)^{k}.
\]
Now, using the assumption that $d \leq \Delta$ and $k = O(\Delta \log n)$, we have
\begin{align*}
  \prob{x_1, x_2, \dots, x_n \sim [0,1)}{w' \text{ not selected in any copy} } & \leq \left( 1-\dfrac{1}{\Delta} \right)^{O(\Delta \log n)}\\
  & \leq e^{-O(\log n)}\\
  &\leq n^{-O(1)}.
\end{align*}

Using a union bound over all neighbors,
we can upper bound the probability that at least one of them is left out by
\[
  \left| N_{fill}(w) \right| \cdot n^{-O(1)} \leq n^{-O(1)},
\]
which completes the proof.
\end{proof}

\begin{figure}[H]

\begin{algbox}
$\textsc{DeltaCappedMinDegree}(G, \Delta)$

\underline{Input}: graph $G=(V,E)$, threshold $\Delta$.

\underline{Output}: exact lexicographically-first min-degree ordering $u^{(1)}, u^{(2)}, \ldots, u^{(n)}$.

\begin{enumerate}
\item For $t=1$ to $|V|$:
  \begin{enumerate}
    \item Set $u^{(t)} \leftarrow \min(\variable{bst\_size\_of\_minimizers})$.
    \item $\textsc{DeltaCappedMinDegree\_Pivot}(u^{(t)})$.
    \end{enumerate}
\end{enumerate}
\end{algbox}

\begin{algbox}
$\textsc{DeltaCappedMinDegree\_Pivot}(u)$

\underline{Input}:
vertex to be pivoted $u$.

\underline{Output}:
updated global state.

\begin{enumerate}

\item For copies $1 \leq i \leq k$:
\begin{enumerate}
\item $\left(v_1, v_2, \ldots, v_{l} \right)
\leftarrow \variable{DynamicL0Sketch}^{(i)}.\textsc{PivotVertex}(u)$,
  the set of vertices in copy $i$ whose minimizers changed after we pivot out $u$.
\item For each $1 \leq j \leq l$:
\begin{enumerate}
\item Update the corresponding values to copy $i$
in $\variable{minimizers}(v_j)$.
\item Update the entry corresponding to $v_j$ in
$\variable{bst\_size\_of\_minimizers}$ with the new
size of $\variable{minimizers}(v_j)$.
\end{enumerate}
\end{enumerate}
\end{enumerate}

\end{algbox}

\caption{Pseudocode for $\Delta$-capped exact min-degree algorithm,
which utilizes the global data structures for \textsc{DeltaCappedMinDegree}
defined in Figure~\ref{fig:DeltaCappedGlobalVar}.}
\label{fig:DeltaCapped}
\end{figure}

\begin{proof}[Proof of Theorem~\ref{thm:BoundedDegreeAlgo}.]
We prove the space bound first.
By Theorem~\ref{thm:DataStructureMain},
each of the $k$ copies of the data structure use $O(m)$ memory.
Each copy of $\variable{minimizers}$ can take space up to
$O(k \log k)$, and $\variable{bst\_size\_of\_minimizers}$ can use
up to $O(n \log n)$ space.
Therefore, total space used is
\[
  O(mk+nk\log k + n\log n) = O(mk) = O(m \Delta \log{n}).
\]

We now analyze the running time.
Theorem~\ref{thm:DataStructureMain} gives a direct cost
of $O(m \log^2{n})$ across all pivots, and in turn
a total cost of $O(m \Delta \log^3{n})$ across all copies.
Furthermore, this implies that the sum of $l$ (the length
of the update lists in $v$) across all steps is at most $O(m \log{n})$.
Each of these updates may lead to one BST update,
so the total overhead is $O(m \log^2{n})$,
which is a lower order term.
\end{proof}

\subsection{Output-Sensitive Running Time}

If we do away with the condition that minimum fill degrees
are bounded above by $\Delta$,
the number of copies of the $\ell_0$-sketch data structure
needed depends on the actual values of the minimum fill degree
at each step.
Therefore, to be more efficient, we can adaptively maintain the required number
of copies of the $\ell_0$-sketch data structure.

For the graph ${G^+}^{(t)}$, we need to have at least
$\Omega(\delta^{(t)} \log n)$ copies of the $\ell_0$-sketch data structure.
However, we do not know the values of $\delta^{(t)}$ a priori.
Therefore, consider the following scheme that adaptively keeps a sufficient
number of copies of the sketch structures:
\begin{enumerate}
	\item Let $C = \delta^{(t-1)}$.
  We will ensure that we have $O(C \log n)$ copies at all times.
  (Note that this is initially true.)
	\item Let $\delta_C$ be the ``computed'' minimum degree
	in ${G^+}^{(t)}$ using $O(C \log n)$ copies of the data structure.
	\item If $\delta_C > C/2$, set $C \leftarrow 2C$ and repeat.
\end{enumerate}

The core idea of the above routine is that if the ``computed'' min-degree is at
most $C/2$, then with high probability the actual min-degree is at most $C$.
Then, because we have $O(C \log n)$ copies of the data structure, the correctness
of the algorithm follows.

\begin{proof}[Proof of Theorem~\ref{thm:OutputSensitiveAlgo}.]
The proof follows analogously to that of Theorem~\ref{thm:BoundedDegreeAlgo},
except that our upper bound for the minimum degrees can be simply given by
$\Delta = 2\cdot \max_{1 \leq t \leq n} \delta^{(t)}$.
With this, the claimed space and time bounds follow. 
\end{proof}

\subsection{Computing Approximate Min-Degree}
\label{subsec:ApproxDegreeDS}
To avoid artificial conditions such as bounds on minimum fill degree
and to make running times independent of the output,
we modify the algorithm to obtain an approximate min-degree vertex at each step.
To do this, we reduce the number of copies of $\variable{DynamicL0Sketch}$
and use the reciprocal of the $(1-1/e)$-th percentile
of a set to approximate its size.\footnote{Note that we use $e$ to refer to the base of the natural logarithm; it should not be confused with edges in a graph.}

However, there is a subtle issue with the randomness that this algorithm uses.
A necessary condition for the algorithm to succeed as intended is that
each step must be independent of its past decisions.
Therefore, we must remove any dependencies between previous and current queries.
Section~\ref{subsec:Correlation} gives an example of such a correlation between
steps of an algorithm.
To circumvent this problem, we need to decorrelate the sketches we construct
and the updates to the data structure from pivoting vertices.
Section~\ref{sec:Decorrelation} tackles this issue.
Rather than simply selecting a vertex with approximate min-degree,
this algorithm requires access to all vertices
whose estimated degree is within a certain range of values.
It follows that this version of the algorithm utilizes such a data structure,
as opposed to the previous two versions which just output the vertex to be
pivoted.

Figure~\ref{fig:ApproxDegreeDSGlobalVar}
gives a description of the data structures for this version of the algorithm.

\begin{figure}[H]
  
  \begin{algbox}
   \underline{Global Variables}: graph $G$,
    error tolerance $\epsilon > 0$.
    
    \begin{enumerate}
      \item Number of copies, $k = O( \log{n} \epsilon^{-2} )$.
 
      \item $k$ copies of the $\ell_0$-sketch data structure
      \[
       \variable{DynamicL0Sketch}^{(1)},
      \variable{DynamicL0Sketch}^{(2)},
      \dots,
      \variable{DynamicL0Sketch}^{(k)}.
      \]

      \item For each vertex $u$, a balanced binary search
      tree $\variable{minimizers}(u)$ that stores the minimizers
      of the $\ell_{0}$-sketch at $u$ across all $k$ copies,
      and maintains the element in $\variable{minimizers}(u)$ with rank
      \[
        \left\lfloor k \left(1- \dfrac{1}{e} \right)\right\rfloor.
      \]

      \item A balanced binary tree $\variable{bst\_quantile}$
      over all vertices $u$ whose key is the
      $\left\lfloor k \left(1-1/e\right)\right\rfloor$-ranked element in $\variable{minimizers}(u)$.
      
    \end{enumerate}
  \end{algbox}

\caption{Global variables and data structures for $\textsc{ApproxDegreeDS}$,
which returns (implicit) partitions of vertices into buckets with $\epsilon$-approximate degrees.}
\label{fig:ApproxDegreeDSGlobalVar}
\end{figure}

To achieve our goal of using fewer copies of the data structure, we use a
sampling-based algorithm.  In particular, we make use of the following lemma.

\begin{restatable}{lemma}{minValue}
\label{lem:minValue}
Suppose that we have $k$ copies of the $\ell_0$-sketch data structure
for some $k \geq \Omega(\log{n} \epsilon^{-2})$.
Let $w$ be a vertex with degree $d > k$,
and let $q(w)$ denote the
$\lfloor k \left(1-1/e\right)\rfloor$-ranked
element in $\variable{minimizers}(w)$.
Then, with high probability, we have
\[
\frac{1 - \epsilon}{d} \leq q(w) \leq \frac{1 + \epsilon}{d}.
\]
\end{restatable}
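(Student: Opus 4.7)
The plan is to identify the distribution of the values stored in $\variable{minimizers}(w)$ and then apply a standard concentration bound for empirical quantiles. For each copy $i \in \{1,\ldots,k\}$, the entry contributed to $\variable{minimizers}(w)$ is the value
\[
Y_i = \min_{v \in N_{fill}(w)} x_v^{(i)},
\]
where the $x_v^{(i)}$ are i.i.d.\ uniform on $[0,1)$ across both $v$ and $i$. Thus $Y_1, \ldots, Y_k$ are i.i.d., each the minimum of $d$ i.i.d.\ uniform $[0,1)$ random variables. Their common CDF is $F(t) = 1 - (1-t)^d$, and $q(w)$ is by definition the order statistic $Y_{(j)}$ with $j = \lfloor k(1 - 1/e) \rfloor$. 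The goal reduces to showing $Y_{(j)} \in [(1-\epsilon)/d,(1+\epsilon)/d]$ with high probability.

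Next, I would analyze $F$ at the two candidate thresholds $t_{\pm} = (1 \pm \epsilon)/d$. Using $\ln(1-x) = -x - x^2/2 - \cdots$ gives
\[
(1 - (1\pm\epsilon)/d)^d = \exp\!\bigl(-(1\pm\epsilon) - O(1/d)\bigr),
\]
and expanding the exponential around $e^{-1}$ yields
\[
F(t_{\pm}) = \left(1 - \tfrac{1}{e}\right) \pm \tfrac{\epsilon}{e} + O(\epsilon^2) + O(1/d).
\]
Provided $\epsilon$ is below a small absolute constant and $d > k \geq \Omega(\epsilon^{-2} \log n)$, the $O(\epsilon^2)$ and $O(1/d)$ terms are each dominated by $\epsilon/(4e)$, so
\[
F(t_{+}) \geq 1 - \tfrac{1}{e} + \tfrac{\epsilon}{2e}, \qquad F(t_{-}) \leq 1 - \tfrac{1}{e} - \tfrac{\epsilon}{2e}.
\]

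The final step is standard quantile concentration. Let $N_{\pm} = |\{i : Y_i \leq t_{\pm}\}|$, which is a sum of i.i.d.\ Bernoullis with mean $kF(t_{\pm})$. An additive Chernoff bound, together with $k = \Omega(\epsilon^{-2}\log n)$ with sufficiently large hidden constant, gives $|N_{\pm} - kF(t_{\pm})| \leq \epsilon k/(4e)$ with probability $1 - n^{-\Omega(1)}$. Combining with the previous display yields $N_{-} < j \leq N_{+}$, which by the definition of order statistics means $t_{-} \leq Y_{(j)} \leq t_{+}$, finishing the proof after a union bound over the two events.

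The main obstacle is purely bookkeeping the lower order terms in $F(t_{\pm})$: the $O(\epsilon^2)$ error from expanding $e^{-(1 \pm \epsilon)}$ and the $O(1/d)$ error from truncating $\ln(1-x)$ must both be dominated by the $\epsilon/e$ signal that separates $F(t_{\pm})$ from $1 - 1/e$. The hypothesis $d > k$ is precisely what controls the latter, since it gives $1/d \leq \epsilon^2/\log n \ll \epsilon$. There is no deeper conceptual difficulty: after this step the argument is a textbook reduction to the concentration of an empirical quantile around the true quantile of a distribution whose CDF has a non-vanishing slope in the target window.
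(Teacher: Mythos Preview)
Your proposal is correct and follows essentially the same approach as the paper: both arguments reduce to a Hoeffding/Chernoff bound on the number of copies whose minimizer falls on the wrong side of the thresholds $(1\pm\epsilon)/d$, after establishing via the Taylor expansion of $\log(1-x)$ that $(1-(1\pm\epsilon)/d)^d$ is separated from $1/e$ by $\Theta(\epsilon)$. The only cosmetic difference is that the paper counts indicators of $\{Y_i \geq t_{\pm}\}$ and compares their sum to $k/e$, whereas you count indicators of $\{Y_i \leq t_{\pm}\}$ and compare to $j = \lfloor k(1-1/e)\rfloor$; these are complementary formulations of the same bound.
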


\noindent
Lemma~\ref{lem:minValue} is simply a restatement of 
\cite[Propositions 7.1 and 7.2]{Cohen97}.
However, \cite{Cohen97} assumes that the random variables
$x_u$ are drawn from the exponential distribution (and hence also their minimum),
whereas we assume that $x_u$ is independently drawn from the uniform distribution.
When $d$ is large though, the minimums of $d$
elements from both distributions are almost identically distributed.  For the
sake of completeness, we provide the proof for the $x_u$ variables being
uniformly distributed in Appendix~\ref{sec:SketchingProofs}.

This leads to the following result for providing implicit access to all
vertices with approximately the same degree, which is crucial for our overall
nearly-linear time algorithm in Section~\ref{sec:Decorrelation}.
We give its pseudocode in Figure~\ref{fig:approxdegree}.

\begin{figure}[H]
  
  \begin{algbox}
    
    $\textsc{ApproxDegreeDS\_Pivot}(u)$
    
    \underline{Input}: vertex to be pivoted, $u$.
    
    \underline{Output}: updated global state.
    
    \begin{enumerate}
      \item For each copy $1\leq i \leq k$:
         \begin{enumerate}
            \item $\left(v_1, v_2, \ldots, v_{l} \right)
            \leftarrow \variable{DynamicL0Sketch}^{(i)}.\textsc{PivotVertex}(u)$,
             the set of vertices in copy $i$ whose minimizers changed after we pivot out $u$.
            \item For each $1 \leq j \leq l$:
              \begin{enumerate}
                \item Update the value of corresponding to copy $i$ in $\variable{minimizers}(v_j)$,
which in turn updates its $\lfloor k(1 - 1/e) \rfloor$-ranked quantile.
                \item Update the entry corresponding to $v_j$ in
$\variable{bst\_quantile}$ with the new value of the
$\lfloor k(1 - 1/e) \rfloor$-ranked quantile of $\variable{minimizers}(v_j)$.
\end{enumerate}
\end{enumerate}
    \end{enumerate}
    
  \end{algbox}
  
  \begin{algbox}
    
    $\textsc{ApproxDegreeDS\_Report()}$
    
    \underline{Output}: approximate bucketing of the vertices
    by their fill-degrees.
    
    \begin{enumerate}
      \item For each $i$ from $0$ to $B = O(\log{n} \epsilon^{-1})$:
      \begin{enumerate}
        \item Set $S_i$ to be the split binary tree
        from $\variable{bst\_quantile}$ that contains all
        nodes with $\lfloor k(1 - 1/e) \rfloor$-ranked quantiles in the range
      \[
        \left[\left(1 + \epsilon\right)^{-i - 1}, \left(1 + \epsilon\right)^{-i}\right].
      \]
      \end{enumerate}
     \item Return the tuple
       $(S_1, S_2, \dots, S_{B})$.
    \end{enumerate}
    
  \end{algbox}
  
  \caption{Pseudocode for data structure that returns pointers
  to binary trees containing partitions of remaining vertices into
  sets with $\epsilon$-approximate degrees.
Its corresponding global variables are defined in Figure~\ref{fig:ApproxDegreeDSGlobalVar}.}
  \label{fig:approxdegree}
\end{figure}

%

When interacting with $\textsc{ApproxDegreeDS\_Report}()$,
note that the maximum degree is $O(n)$,
so we have $B = O(\log{n} \epsilon^{-1})$.
Therefore, the data structure can simply return pointers
to ``samplers'' for the partition $V_1, V_2, \ldots, V_B$.

\begin{proof}[Proof of Theorem~\ref{thm:ApproxDegreeDS}.]
By construction,
all vertices in $V_i$ have their $\lfloor k(1 - 1/e) \rfloor$-ranked quantile
in the range 
\[
\left[(1+\epsilon)^{-i-1},(1+\epsilon)^{-i}\right].
\]
Subsequently from Lemma~\ref{lem:minValue},
the fill-degree of a vertex $w \in V_i$
is in the range \[
\left[(1-\epsilon)(1+\epsilon)^{i}, (1+\epsilon)(1+\epsilon)^{i+1}\right],
\]
which is within the claimed range for $\epsilon \leq 1/2$.

The proof of time and space bounds is again analogous to that of
Theorem~\ref{thm:BoundedDegreeAlgo}.
Substituting in the new number of copies $k = O(\log{n} \epsilon^{-2})$ instead
of $\Delta$ proves the space complexity.

The main difference in these data structures is that
we now need to store information about the
$\lfloor k(1 - 1/e) \rfloor$-ranked quantile.
These can be supported in $O(\log{n})$ time by augmenting the balanced binary
search trees with information about sizes of the subtrees
in standard ways (e.g. \cite[Chapter 14]{CLRS3}).
A $O(\log{n})$ time splitting operation is also standard to
  most binary search tree data structures (e.g. treaps~\cite{Seidel96}).
\end{proof}

\noindent
Note that there may be some overlaps between the allowed ranges of the buckets;
vertices on the boundary of the buckets may be a bit ambiguous.

An immediate corollary of Theorem~\ref{thm:ApproxDegreeDS} is that
we can provide access to approximate minimum-degree vertices
for a fixed sequence of updates by always returning
some entry from the first non-empty bucket.
\begin{corollary}
\label{cor:ApproxMindegreeOblivious}
For a fixed sequence of pivots $u^{(1)}, u^{(2)}, \dots, u^{(n)}$,
we can find $(1 + \epsilon)$-approx min-degree vertices
in each of the intermediate states in 
$O(m \log^{3}n\epsilon^{-2})$ time.
\end{corollary}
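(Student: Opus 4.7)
The plan is to invoke Theorem~\ref{thm:ApproxDegreeDS} essentially as a black box, after a $\Theta(1)$ rescaling of the precision parameter. Because the pivot sequence $u^{(1)}, u^{(2)}, \dots, u^{(n)}$ is fixed \emph{before} the data structure is instantiated, it is independent of the internal randomness used by $\textsc{ApproxDegreeDS}$, so the oblivious adversary hypothesis of Theorem~\ref{thm:ApproxDegreeDS} is satisfied and its stated cost bounds apply.

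Concretely, I would initialize $\textsc{ApproxDegreeDS}$ with precision $\epsilon' = \Theta(\epsilon)$ chosen so that $(1+\epsilon')^{4} \le 1+\epsilon$, and then for each step $t = 1, 2, \dots, n$: (i) call $\textsc{ApproxDegreeDS\_Report}()$ to obtain buckets $V_1, V_2, \dots, V_B$ with $B = O(\log n\,\epsilon^{-1})$; (ii) scan $i = 1, 2, \dots$ to find the smallest index with $V_i$ non-empty and return an arbitrary vertex from it (e.g.\ the root of the BST handle) as the $(1+\epsilon)$-approximate min-degree vertex at step $t$; (iii) invoke $\textsc{ApproxDegreeDS\_Pivot}(u^{(t)})$ to advance the data structure.

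For correctness, if $v^\star$ is the true minimum-degree vertex at step $t$ with degree $\delta^{(t)}$ and lies in some bucket $V_{i^\star}$, then the smallest non-empty bucket has index $i \le i^\star$. Every vertex in $V_i$ has fill-degree at most $(1+\epsilon')^{i+2} \le (1+\epsilon')^{i^\star+2}$, while $\delta^{(t)} \ge (1+\epsilon')^{i^\star - 2}$. The returned vertex therefore has degree at most $(1+\epsilon')^{4}\,\delta^{(t)} \le (1+\epsilon)\,\delta^{(t)}$.

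For the running time, Theorem~\ref{thm:ApproxDegreeDS} directly gives total pivot cost $O(m\log^{3}n\,\epsilon^{-2})$ across the $n$ invocations of $\textsc{ApproxDegreeDS\_Pivot}$, and $n$ calls to $\textsc{ApproxDegreeDS\_Report}$ at $O(\log^{2}n\,\epsilon^{-1})$ each, plus an $O(\log n\,\epsilon^{-1})$ scan per step to find the smallest non-empty bucket. All subordinate terms are dominated by $O(m\log^{3}n\,\epsilon^{-2})$. The main (minor) obstacle is simply being careful that the buckets are accessed via pointer handles rather than copied out, which is already ensured by the pointer model adopted in Section~\ref{sec:Preliminaries}, and that the oblivious adversary hypothesis is not accidentally violated---which it is not, because the sequence is supplied exogenously.
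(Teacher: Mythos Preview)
Your proposal is correct and follows exactly the approach the paper indicates: the paper states this as an ``immediate corollary'' of Theorem~\ref{thm:ApproxDegreeDS}, obtained by ``always returning some entry from the first non-empty bucket,'' and you have simply spelled out the details (the $\Theta(1)$ rescaling of $\epsilon$, the bucket-index correctness argument, and the dominance of the report/scan costs by the pivot cost). There is nothing to add.
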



\section{Generating Decorrelated Sequences}
\label{sec:Decorrelation}

In this section we show our nearly-linear
$(1 + \epsilon)$-approximate min degree algorithm.
The algorithm crucially uses the \textsc{ApproxDegreeDS}
data structure constructed in Section~\ref{subsec:ApproxDegreeDS}.

\begin{theorem}
\label{thm:ApproxMinDegree}
There is an algorithm $\textsc{ApproxMinDegreeSequence}$
that produces an $\epsilon$-approximate greedy min-degree sequence
in expected $O(m \log^5 n \epsilon^{-2})$ time with high probability.
\end{theorem}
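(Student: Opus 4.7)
The plan is to decouple the choice of pivots from the randomness of the \textsc{ApproxDegreeDS} data structure, since that structure is only proven correct under the oblivious adversary model (Theorem~\ref{thm:ApproxDegreeDS}). I will define a target pivot sequence whose distribution depends only on a second, independent source of randomness (namely exponential perturbations together with a single-vertex estimator \textsc{EstimateDegree} from Section~\ref{sec:DegreeEstimation}), so that from the perspective of \textsc{ApproxDegreeDS} the sequence of \textsc{ApproxDegreeDS\_Pivot} calls it receives is a fixed input. Then I will show how to actually compute that sequence quickly by exploiting the approximate-degree buckets returned by \textsc{ApproxDegreeDS\_Report}.

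Concretely, at each step I define the next pivot to be the minimizer over remaining $u$ of
\[
f(u) \;\defeq\; \left(1 - \frac{\epsilon\,\Exp_u(1)}{C \log n}\right)\cdot \textsc{EstimateDegree}\!\left(u,\, \frac{\epsilon}{C\log n}\right),
\]
where each $\Exp_u(1)$ is an independent exponential random variable drawn once per $(u,\text{step})$ pair up front, and \textsc{EstimateDegree} has its own pre-drawn randomness per $(u,\text{step})$. Since $\textsc{EstimateDegree}$ is a $(1\pm\epsilon/(C\log n))$-approximation to $\deg^{G^+}(u)$ w.h.p.\ and the multiplicative perturbation $1-\epsilon\,\Exp_u(1)/(C\log n)$ is within $(1\pm O(\epsilon))$ with overwhelming probability, the minimizer of $f$ has true fill-degree at most $(1+\epsilon)\min_v \deg^{G^+}(v)$, as required by Definition~\ref{def:ApproxMinDegree}. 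Crucially the distribution of this sequence is independent of the randomness inside \textsc{ApproxDegreeDS}.

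The efficiency argument uses two facts. First, \textsc{ApproxDegreeDS\_Report} returns $O(\log n\,\epsilon^{-1})$ buckets $V_1,\dots,V_B$ whose widths are within a $(1+\epsilon)^{\pm 2}$ factor, so the minimizer of $f$ almost surely lies in the first $O(1)$ non-empty buckets, after $O(\log n)$ refinement. Within such a bucket, all candidate true degrees agree up to a $1\pm O(\epsilon)$ factor, so sorting by $f$ is essentially sorting by the exponential perturbation $\Exp_u(1)$. Using the standard generation of the top order statistics of independent exponentials in decreasing order, in expectation only $O(1)$ vertices need to have $\Exp_u(1)$ revealed and \textsc{EstimateDegree} invoked before the minimizer of $f$ is determined. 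Second, by the memoryless property of $\Exp(1)$ restricted to the remaining vertices, when the chosen pivot $u^\star$ is removed and we advance to the next step, with constant probability $u^\star$ itself would have been among the top-order-statistic candidates; this lets us amortize the cost of each \textsc{EstimateDegree} call against the pivoted vertex, charging the $O(\deg(u^\star)\log^2 n\,\epsilon^{-1})$ cost of a single invocation to edges removed by the pivot. Summing over the $n$ pivots this yields $O(m\log^2 n\,\epsilon^{-1})$ work for estimation, and the $O(m\log^3 n\,\epsilon^{-2})$ total cost of \textsc{ApproxDegreeDS} under a fixed update sequence (Corollary~\ref{cor:ApproxMindegreeOblivious}) dominates.

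The main obstacle is making the amortization rigorous: specifically, showing that the ``$O(1)$ expected candidates'' statement survives even after we condition on the perturbations already examined at earlier steps, and that the memoryless-charging argument is not broken by the fact that the same $\Exp_u$ variable may be peeked at across several steps before $u$ is pivoted. I will handle this by (i) redrawing $\Exp_u$ fresh at each step for every remaining $u$ (so across-step independence is automatic, with $n$ variables per step but only $O(1)$ examined in expectation by the top-order-statistic generation), and (ii) coupling the bucket identified by \textsc{ApproxDegreeDS\_Report} at step $t$ with the bucket that the pseudo-deterministic sequence would enter, using the approximation guarantees of Theorem~\ref{thm:ApproxDegreeDS} together with a union bound over the $n$ steps. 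Combining these pieces gives the claimed $O(m\log^5 n\,\epsilon^{-2})$ expected running time.
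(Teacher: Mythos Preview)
Your high-level plan is exactly the paper's: define the pivot sequence as the minimizer of an exponentially perturbed \textsc{EstimateDegree} value (so the sequence is a function only of randomness independent of \textsc{ApproxDegreeDS}), generate $O(1)$ candidates per bucket via the top order statistics of the exponentials, and charge each \textsc{EstimateDegree} call to the pivot via the memoryless property. So the strategy is right.

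There is one genuine inconsistency in the parameters that breaks the argument as written. You run \textsc{ApproxDegreeDS} at accuracy $\epsilon$ (buckets of multiplicative width $(1+\epsilon)^{O(1)}$) while your perturbation scale is $\widehat\epsilon = \epsilon/(C\log n)$. Your sentence ``within such a bucket all candidate true degrees agree up to $1\pm O(\epsilon)$, so sorting by $f$ is essentially sorting by $\Exp_u(1)$'' is then false: the degree variation inside a bucket is $\Theta(\epsilon)$, which dominates the $\Theta(\widehat\epsilon)$ perturbation, so the largest few $\Exp_u$ need not contain the minimizer of $f$. The paper fixes this by running \textsc{ApproxDegreeDS} at accuracy $\widehat\epsilon$, so that bucket width and perturbation scale match; this is exactly where the extra two factors of $\log n$ come from, yielding $O(m\log^3 n\,\widehat\epsilon^{\,-2}) = O(m\log^5 n\,\epsilon^{-2})$ rather than the $O(m\log^3 n\,\epsilon^{-2})$ you quote from Corollary~\ref{cor:ApproxMindegreeOblivious}. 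Your mysterious phrase ``after $O(\log n)$ refinement'' suggests you sensed this, but it needs to be made explicit as a change to the data-structure accuracy.

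Two smaller points. First, your amortization statement is phrased backwards: ``with constant probability $u^\star$ would have been among the candidates'' is trivially true since $u^\star$ is chosen from the candidates. What you need (and what the paper proves as Lemma~\ref{lem:CandidateChosen}/Corollary~\ref{corollary:trimGlobalCandidates}) is the converse: any vertex that survives into the trimmed candidate list is, with constant probability, the actual $\epsilon$-decayed minimizer, hence the pivot. That is what lets you bound $\Pr[u\text{ is a candidate}]\le O(1)\cdot\Pr[u\text{ is pivoted}]$ and run the induction on $\sum_u d_{remain}(u)$. Second, the per-call cost of \textsc{EstimateDegree} at accuracy $\widehat\epsilon$ is $O(d_{component}(u)\log^2 n\,\widehat\epsilon^{\,-2})$, not the $O(\deg(u^\star)\log^2 n\,\epsilon^{-1})$ you wrote; with the correct exponent this contributes $O(m\log^4 n\,\epsilon^{-2})$, which is still dominated by the data-structure cost.
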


The algorithm is based on the degree approximation
routines using sketching, as described
in Theorem~\ref{thm:ApproxDegreeDS}.
Theorem~\ref{thm:ApproxDegreeDS} provides us access
to vertex buckets, where the $i$-th bucket contains vertices
with fill degrees in the range $[(1+\epsilon)^{i-2},(1+\epsilon)^{i+2}]$.
At any point, reporting any member of the first non-empty bucket gives an
approximate minimum degree choice.
However, such a choice must not have dependencies on the randomness used to
generate this step, or more importantly, subsequent steps.

To address this issue, we use an additional layer of randomization,
which decorrelates the $\ell_0$-sketch data structures
and the choice of vertex pivots.
Figure~\ref{fig:ApproxMinDegreeSequence} contains the pseudocode
for the top-level algorithm to compute a nearly-linear
$(1 + \epsilon)$-approximate minimum degree sequence.
The algorithm makes calls the following routines and data structures:
\begin{itemize}
	\item \textsc{ApproxDegreeDS}: Access to buckets of vertices
	with approximately equal degrees (Section~\ref{subsec:ApproxDegreeDS}).
	\item \textsc{ExpDecayedCandidates}: Takes a set
	whose values are within $1 \pm \epsilon$ of each other,
	randomly perturbs its elements, and returns this ($\epsilon$-decayed) set.
	\item \textsc{EstimateDegree}: Gives an $\epsilon$-approximation
	to the fill-degree of any given vertex (Section~\ref{sec:DegreeEstimation}).
	The formal statement is given in Theorem~\ref{thm:DegreeEstimation}.
\end{itemize}

\begin{theorem}
	\label{thm:DegreeEstimation}
	There is a data structure that maintains a component graph $G^{\circ}$
	under (adversarial) vertex pivots in a total of $O(m \log^2{n})$ time,
	and supports the operation $\textsc{EstimateDegree}(G^{\circ}, u, \epsilon)$,
	which given a vertex $u$ and
	error threshold $\epsilon > 0$,
	returns with high probability an $\epsilon$-approximation to the fill-degree of
	$u$ by making $O(d_{component}^{G^{\circ}} (u) \log^2{n} \epsilon^{-2})$ oracle
	queries to $G^{\circ}$.
\end{theorem}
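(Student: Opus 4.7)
The plan is to maintain the component graph $G^{\circ}$ explicitly so that pivots can be performed in $O(m \log^2 n)$ total time, and then to estimate $|N_{fill}(u)|$ by a Cohen-style $\ell_0$-sketch argument that extracts small priorities from priority-sorted adjacency lists of the components neighboring $u$.

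First, I would store $G^{\circ}$ by keeping, for every component $c$, a balanced BST of $N_{remaining}(c)$ keyed by an independently chosen random priority $x_v \in [0,1]$ assigned to each remaining vertex $v$, together with the companion list $N_{component}(u)$ for every remaining $u$. When a remaining vertex $w$ is pivoted, its $d_{component}(w)$ adjacent components together with the singleton $\{w\}$ merge into a single new component $c^{\star}$, and every remaining neighbor of $w$ updates its $N_{component}$ pointer to $c^{\star}$. Performing the merge using a union-by-size strategy (merging the smaller BSTs into the largest) guarantees that every edge is relocated $O(\log n)$ times in total, and since every relocation requires one $O(\log n)$-cost BST insert or delete, the maintenance amortizes to $O(m \log^2 n)$.

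For $\textsc{EstimateDegree}(u,\epsilon)$, I would apply the classical $\ell_0$-estimator. Set $k = \Theta(\epsilon^{-2} \log n)$. For each $c \in N_{component}(u)$, extract the $k$ smallest priorities from the priority-keyed BST of $c$ in $O(k \log n)$ time, and do the same for the direct remaining neighbors of $u$. Merge these $O(d_{component}(u) \cdot k)$ candidates via a priority-queue-based $d_{component}(u)$-way merge, deduplicating with a hash table, until the $k$ globally smallest \emph{distinct} priorities in $N_{fill}(u)$ have been obtained; report $(k-1)/X_{(k)}$, where $X_{(k)}$ is the $k$-th such priority. The total number of oracle accesses is $O(d_{component}(u) \cdot k \log n) = O(d_{component}(u) \log^2 n \, \epsilon^{-2})$, and the $(1 \pm \epsilon)$-approximation guarantee follows from the analysis already used for Lemma~\ref{lem:minValue} and Theorem~\ref{thm:ApproxDegreeDS}.

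The main obstacle will be giving a clean small-to-large charging once a single pivot can simultaneously merge many components rather than just two. I expect to handle this by attributing the $O(\log n)$ relocations of each edge to doublings of the component in which it sits: the component containing a given edge can double only $O(\log n)$ times over the whole run, and each doubling costs $O(\log n)$ for the underlying BST update. A secondary subtlety is that a remaining vertex $v$ may lie in $N_{remaining}(c)$ for several $c \in N_{component}(u)$ at once, so the $k$-way merge must deduplicate on the fly; this is benign for correctness because the minimum of a multiset equals the minimum of its support, so deduplication only affects running time and not the statistical properties of the $\ell_0$-estimator.
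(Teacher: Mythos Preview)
Your maintenance of $G^{\circ}$ via small-to-large merging of balanced BSTs is exactly what the paper does in Lemma~\ref{lem:DegreeEstimationDS}, so that half of the argument is fine.

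The $\textsc{EstimateDegree}$ half, however, diverges from the paper and contains a real gap. The paper does \emph{not} reuse the $\ell_0$-sketch idea here. Instead it recasts the query as counting non-zero columns of the $0/1$ matrix whose rows are the components in $N_{component}(u)\cup\{u\}$ and whose columns are the remaining vertices, and estimates this count by repeatedly (i) sampling a uniformly random non-zero entry $(i,j)$ and (ii) estimating $1/\textsc{ColumnSum}_A(j)$ by counting how many uniformly random rows $i'$ must be tried until $A_{i',j}=1$ (Lemma~\ref{lem:NonZeroColumnEstimator} and Figure~\ref{fig:EstimateNonZeroColumns}). Crucially, every call to $\textsc{EstimateDegree}$ draws \emph{fresh} samples; nothing randomized is stored persistently beyond the component graph itself.

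This distinction is exactly the point of the theorem. The paper states explicitly after Definition~\ref{def:PerturbedMinDegreeSequence} that ``the randomness of this degree estimator is regenerated at each step, thus removing all dependencies.'' Your priorities $x_v$ are fixed once at initialization and reused across all calls, so the guarantee ``w.h.p.\ the output is a $(1\pm\epsilon)$-approximation'' holds only when the set $N_{fill}(u)$ being queried is independent of the $x_v$'s. In the surrounding algorithm the pivot at step $t$ is chosen from earlier $\textsc{EstimateDegree}$ outputs, so the graph state at step $t$ is correlated with your priorities; you cannot then union-bound over the (exponentially many) reachable states, and the per-call high-probability guarantee no longer follows. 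In effect your construction is another copy of the $\variable{DynamicL0Sketch}$ structure from Section~\ref{sec:Sketching}, whereas the entire purpose of Theorem~\ref{thm:DegreeEstimation} in the decorrelation scheme of Section~\ref{sec:Decorrelation} is to supply an estimator whose randomness is \emph{independent} of those sketches and of all previous pivots. A fixed-priority min-sketch does not provide that.
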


\begin{figure}[H]
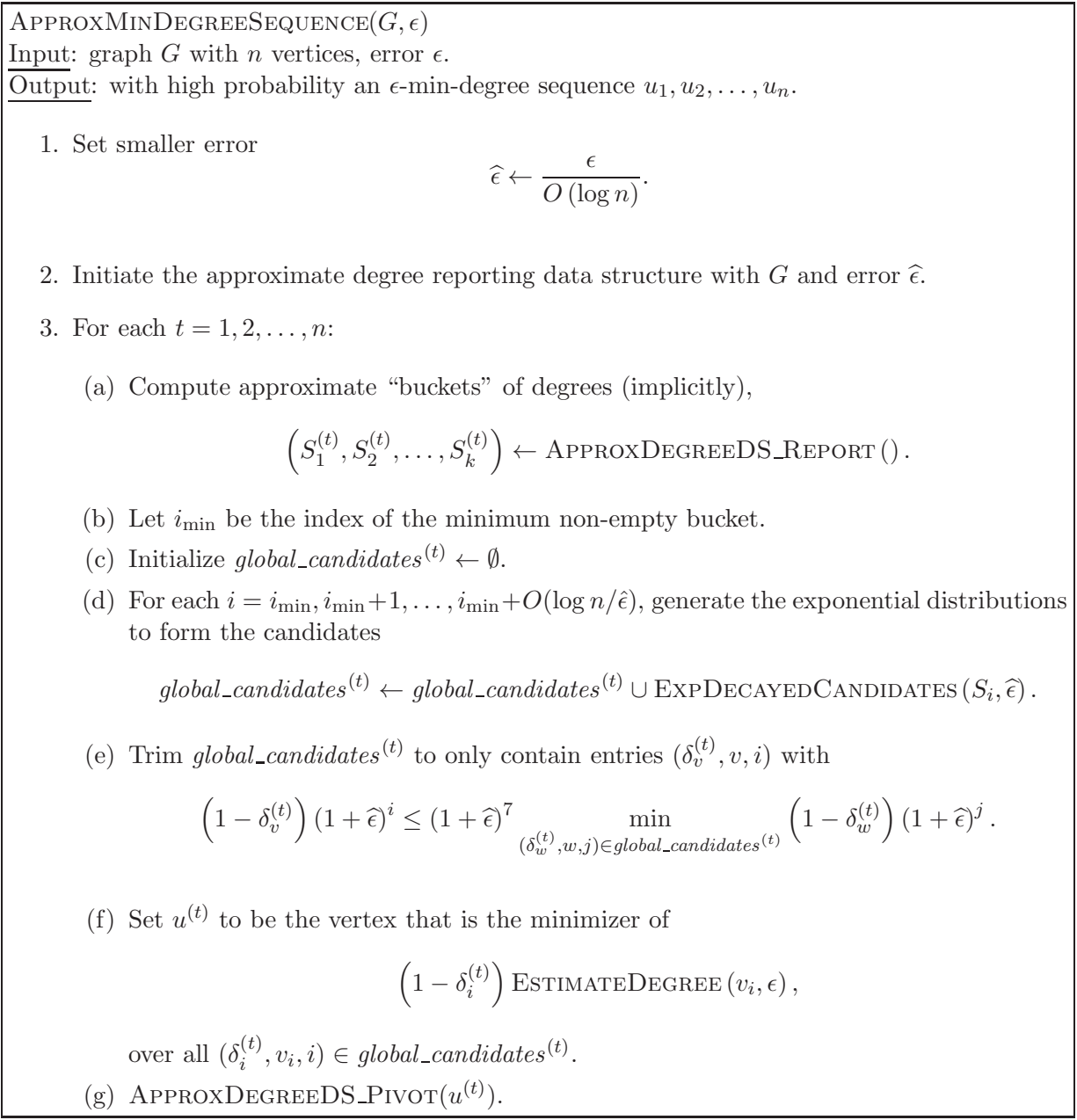

	
	\begin{algbox}
		
		$\textsc{ApproxMinDegreeSequence}(G, \epsilon)$
		
		\underline{Input}: graph $G$ with $n$ vertices, error $\epsilon$.
		
		\underline{Output}: with high probability an $\epsilon$-min-degree
		sequence $u_1, u_2, \dots, u_n$.
		
		\begin{enumerate}
			
			\item Set smaller error
			\[
			\widehat{\epsilon}
			\leftarrow
			\frac{\epsilon}{O\left(\log{n}\right)}.
			\] \label{algline:smallerror}
			
			\item 
			Initiate the approximate degree reporting data structure with $G$
			and error $\widehat{\epsilon}$.
			
			\item For each $t = 1, 2, \dots, n$:
			
			\begin{enumerate}
				
				\item Compute approximate ``buckets'' of degrees (implicitly),
				\[
				\left( S^{(t)}_1 , S^{(t)}_2, \dots, S^{(t)}_k \right)
				\leftarrow
				\textsc{ApproxDegreeDS\_Report}\left( \right).
				\]
				
				\item Let $i_{\min}$ be the index of the minimum non-empty bucket.
				
				\item Initialize $\variable{global\_candidates}^{(t)} \leftarrow \emptyset$.
				
				\item For each $i=i_{\min}, i_{\min} + 1,\dots, i_{\min} + O( \log{n}/\hat{\epsilon})$,
				generate the exponential distributions to form the candidates
				\[
				\variable{global\_candidates}^{(t)}
				\leftarrow \variable{global\_candidates}^{(t)} \cup\\ 
				\textsc{ExpDecayedCandidates}\left( S_{i}, \widehat{\epsilon} \right).
				\]
				
				\item Trim $\variable{global\_candidates}^{(t)}$ to only contain
				entries $(\delta_v^{(t)}, v, i)$ with
				\[
				\left( 1 - \delta^{\left(t\right)}_{v} \right)
				\left( 1 + \widehat{\epsilon} \right)^{i}
				\leq
				\left( 1 + \widehat{\epsilon} \right)^{7}
				\min_{(\delta_w^{(t)}, w, j) \in \variable{global\_candidates}^{(t)}}
				\left( 1 - \delta^{\left(t\right)}_{w} \right)
				\left( 1 + \widehat{\epsilon} \right)^{j}.
				\] \label{algline:trimGlobalCandidates}
				\item Set $u^{(t)}$ to be the vertex that is the minimizer of
				\[
				\left( 1 - \delta^{\left(t\right)}_{i} \right)
				\textsc{EstimateDegree}\left( v_{i}, \epsilon \right),
				\]
				over all $(\delta^{\left(t\right)}_{i}, v_{i}, i) \in \variable{global\_candidates}^{(t)}$.
				\item $\textsc{ApproxDegreeDS\_Pivot}(u^{(t)})$.
			\end{enumerate}
			
		\end{enumerate}
		
	\end{algbox}
	
	\caption{Pseudocode for approximate min-degree algorithm.}
	\label{fig:ApproxMinDegreeSequence}
\end{figure}

The most important part of the algorithm is arguably the use of exponential
distributions to form candidates in a way that it is completely decorrelated
with the randomness used to generate the $\ell_0$-sketch data structure and in
the choice of previous vertex pivots.  The following subsection summarizes some
of the desirable properties of exponential random variables that we exploit in
our algorithm.

\subsection{Exponential Random Variables}
The exponential distribution has been well-studied.
In particular, we use properties of its order statistics, which
arise in the study of fault tolerance and
distributed graph decomposition~\cite{MillerPX13}.
For a parameter $\lambda$, this distribution
is defined by the probability density function (PDF)
\begin{align*}
  f_{\Exp}(x; \lambda) =
  \begin{cases}
    \lambda \exp\left(-\lambda x\right) & \text{if $x \ge 0$,}\\
    0 & \text{otherwise.}
  \end{cases}
\end{align*}
We denote this distribution by $\Exp(\lambda)$,
and also make use of its cumulative density function (CDF)
\begin{align*}
  F_{\Exp}(x; \lambda) &=
  \begin{cases}
    1 - \exp\left(-\lambda x\right) & \text{if $x \ge 0$,}\\
    0 & \text{otherwise}.
  \end{cases}
\end{align*}

A crucial fact about the exponential distribution is that it is memoryless.
That is, if we condition on $\Exp(\lambda) \geq t$,
then $\Exp(\lambda) - t$ follows the same distribution.
A substantial part of our analysis relies on the \emph{order statistics} of
exponential random variables.
Given $n$ random variables $X_1, X_2, \dots, X_n$, the $i$-th
order statistic is  the value of the $i$-th minimum random variable.
A useful property of exponential distributions is that the difference
between its order statistics also follows an exponential distribution,
which we exploit when sampling exponential random variables in decreasing order.

\begin{lemma}[{\cite{Feller71:book}}]
\label{lem:OrderStatisticExp}
Let $X_{(i)}^n$ denote the $i$-th order statistic of $n$
i.i.d.\ random variables drawn from the distribution $\Exp(\lambda)$.
Then, the $n$ variables $X_{(1)}, X_{(2)}- X_{(1)},\dots, X_{(n)} - X_{(n - 1)}$
are independent, and the density of $X_{(k + 1)} - X_{(k)}$
is given by the distribution $\Exp((n - k)\lambda)$.
\end{lemma}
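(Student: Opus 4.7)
The plan is to prove the lemma by computing the joint density of the spacings $Y_k := X_{(k)} - X_{(k-1)}$ (with the convention $X_{(0)} := 0$) via a change of variables, and checking that it factorizes into a product of independent exponential densities.

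First I would recall the standard formula for the joint density of order statistics of i.i.d.\ continuous random variables: if $X_1, \dots, X_n$ are i.i.d.\ with density $f$, then $(X_{(1)}, \dots, X_{(n)})$ has joint density $n! \prod_{i=1}^{n} f(x_i)$ supported on $\{0 \le x_1 < x_2 < \dots < x_n\}$. Specializing to the exponential density $f(x) = \lambda e^{-\lambda x}$ gives
\[
p_{\mathrm{ord}}(x_1, \dots, x_n) = n!\, \lambda^n \exp\!\left(-\lambda \sum_{i=1}^{n} x_i\right).
\]

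Next I would apply the change of variables $(X_{(1)}, \dots, X_{(n)}) \mapsto (Y_1, \dots, Y_n)$, which is a bijection from the ordered simplex onto $\{y_k \ge 0\}$ whose Jacobian matrix is lower triangular with unit diagonal, so its determinant equals $1$. The key algebraic identity is $\sum_{j=1}^{n} X_{(j)} = \sum_{k=1}^{n} (n-k+1)\, Y_k$, obtained by counting how many partial sums $X_{(j)} = Y_1 + \dots + Y_j$ each $Y_k$ appears in. This transforms the joint density into
\[
p_Y(y_1, \dots, y_n) = n!\, \lambda^n \exp\!\left(-\lambda \sum_{k=1}^{n} (n-k+1)\, y_k\right)
\]
on $\{y_k \ge 0\}$. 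Pairing the combinatorial factor $n! = \prod_{k=1}^{n} (n-k+1)$ with the exponential yields the factorization
\[
p_Y(y_1, \dots, y_n) = \prod_{k=1}^{n} (n-k+1)\, \lambda\, \exp\!\bigl(-(n-k+1)\, \lambda\, y_k\bigr),
\]
which is precisely a product of $\Exp((n-k+1)\lambda)$ densities. This simultaneously establishes independence of the spacings and yields $X_{(k+1)} - X_{(k)} \sim \Exp((n-k)\lambda)$ after the reindexing $k \mapsto k+1$.

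There is essentially no obstacle here---the result is textbook, and the only bookkeeping is the identity $\sum_{j} X_{(j)} = \sum_{k}(n-k+1)\, Y_k$ together with the factorial regrouping $n! = \prod_{k=1}^n (n-k+1)$. An alternative route via the memoryless property (decompose $X_{(1)} = \min_i X_i \sim \Exp(n\lambda)$, condition on its value and on the argmin, and apply memorylessness to reduce to an independent $(n-1)$-sample problem) is arguably more conceptual but requires slightly more care in handling the discrete argmin variable. Since the Jacobian computation is fully self-contained and matches the cited reference~\cite{Feller71:book}, that is the route I would present.
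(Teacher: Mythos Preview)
Your proof is correct, but it takes a different route from the one the paper sketches. The paper does not give a formal proof (the lemma is cited from Feller) but outlines the memoryless-property argument you mention as an alternative: compute the CDF of $X_{(1)}^n$ directly as $1-\exp(-n\lambda x)$, then condition on the value of $X_{(1)}^n$ and invoke memorylessness so that the remaining $n-1$ shifted samples are again i.i.d.\ $\Exp(\lambda)$, reducing to the same problem with one fewer variable. Your change-of-variables argument is more self-contained and gives independence and all the marginal laws in one stroke from the factorization of the joint density, with no conditioning or induction; the paper's approach is lighter on computation but, as you note, requires a little more care to justify that conditioning on both the value and the index of the minimum leaves the other $n-1$ variables i.i.d.\ exponential. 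Either argument is entirely standard and adequate here.
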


One approach to prove Lemma~\ref{lem:OrderStatisticExp} uses the i.i.d.\
assumption to show that the CDF of
$X_{(1)}^n$ is 
\begin{align*}
  F_{X^n_{(1)}}(x)&=1-(1-F(x))^n
  \\&=1-\exp(-n\lambda x),
\end{align*}
where $F(x)=1-\exp(-\lambda x)$ is the CDF of
$\Exp(\lambda)$.
This proves that $X^n_{(1)}$ follows an exponential
distribution with mean $1/(n\lambda)$.
Then conditioning on $X^n_{(1)}$, we see that $X^n_{(2)}-X^n_{(1)}$
again follows an exponential distribution equal to $X_{(1)}^{n-1}$
because of the memoryless property. We can repeat this argument to get the
density of $X^n_{(k+1)}-X^n_{(k)}$ for all $k$ up to $n-1$.

The key definition in this section is a sequence defined by exponential
perturbations.  It is motivated by Theorem~\ref{thm:ApproxDegreeDS}, which
states that all the vertices are grouped approximately by degrees.  In the
following definition, $n$ is global and equal to the original number of
vertices in the graph.
Also, we let $c_1 > 1$ be some fixed constant.

\begin{definition}
\label{def:DecayedMinimum}
Given a set of values $\{x_1,x_2, \dots, x_k\}$,
an \emph{$\epsilon$-decayed minimum} of this set is generated
by independently drawing the exponential random variables
\[
  \delta_{i} \sim \hat{\epsilon} \cdot \Exp(1),
\]
where $\hat{\epsilon} = \epsilon/(c_1 \cdot \log n)$
(line~\ref{algline:smallerror} in \textsc{ApproxMinDegreeSequence}),
and returning 
\[
  \min_i \left(1 - \delta_i \right) x_i.
\]
\end{definition}

\begin{definition}
\label{def:PerturbedMinDegreeSequence}
Given a parameter $\epsilon$ and an $\epsilon$-degree estimation
routine $\textsc{EstimateDegree}(G, \cdot)$,
we define an \emph{$\epsilon$-decayed min-degree sequence} as a
sequence such that:
\begin{enumerate}
\item The next vertex to be pivoted, $u^{(t)}$,
is the one corresponding to the $\epsilon$-decayed minimum of the values
\[
  \textsc{EstimateDegree}\left(G^{\left(t - 1 \right)}, v \right),
\]
over all remaining vertices $v$ in $G^{(t - 1)}$.
\item $G^{(t)}$ is the graph obtained after pivoting $u^{(t)}$ from $G^{(t-1)}$.
\end{enumerate}
\end{definition}

Importantly, note that the randomness of this degree estimator is regenerated
at each step, thus removing all dependencies.
Section~\ref{subsec:epsilonDecayed} describes how to generate
this distribution implicitly.
We first show that this approximation is well-behaved.
\begin{lemma}
\label{lem:ApproxFactor}
Let $Y$ be an $\epsilon$-decayed minimum
of $\{x_1, x_2, \dots, x_k\}$.
With high probability, we have
\[
  Y \ge (1 - \epsilon) \min\{x_1, x_2, \dots, x_k\}.
\]
\end{lemma}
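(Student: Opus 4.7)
The plan is to reduce the statement to a tail bound on scaled exponential random variables, followed by a straightforward union bound. Unwinding the definition, $Y = \min_i (1-\delta_i) x_i$ where each $\delta_i \sim \hat\epsilon \cdot \Exp(1)$ with $\hat\epsilon = \epsilon/(c_1 \log n)$. Let $x^{*} = \min_j x_j$. The key observation is that if every perturbation satisfies $\delta_i \le \epsilon$, then using $x_i \ge x^{*} \ge 0$ and $1 - \delta_i \ge 0$, we get
\[
(1 - \delta_i) x_i \;\ge\; (1 - \epsilon) x_i \;\ge\; (1 - \epsilon) x^{*},
\]
and taking the minimum over $i$ yields $Y \ge (1 - \epsilon) x^{*}$, which is what we want.

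So the task reduces to bounding $\Pr[\exists i : \delta_i > \epsilon]$. For a single index, since $\delta_i / \hat\epsilon \sim \Exp(1)$, we have
\[
\Pr[\delta_i > \epsilon] \;=\; \Pr\!\left[\Exp(1) > \tfrac{\epsilon}{\hat\epsilon}\right] \;=\; \exp(-c_1 \log n) \;=\; n^{-c_1}.
\]
The number of values in the set is at most $n$ (indeed the $x_i$'s correspond to a subset of remaining vertices), so a union bound gives $\Pr[\exists i : \delta_i > \epsilon] \le n \cdot n^{-c_1} = n^{-(c_1 - 1)}$. Choosing $c_1$ sufficiently large makes this a high-probability event.

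Combining the two pieces, with probability at least $1 - n^{-(c_1 - 1)}$ we have $\delta_i \le \epsilon$ for every $i$, and on this event $Y \ge (1 - \epsilon) \min_j x_j$ by the chain of inequalities above. I do not anticipate any real obstacle here; the only thing to be careful about is ensuring $1 - \delta_i \ge 0$ before multiplying by $x_i$ (guaranteed by the high-probability event $\delta_i \le \epsilon < 1$) and confirming that $k \le n$ so the union bound is affordable. The constant $c_1$ in the definition of $\hat\epsilon$ is precisely the dial that trades off the sharpness of the approximation against the failure probability, and this lemma is what forces $c_1$ to be chosen large enough to absorb the union bound in later applications.
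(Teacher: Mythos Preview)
Your proposal is correct and follows essentially the same approach as the paper: reduce to the event $\max_i \delta_i \le \epsilon$, compute the single-index tail $\Pr[\delta_i > \epsilon] = \exp(-c_1 \log n)$ from the exponential CDF, and union-bound over the at most $n$ indices to get failure probability $n^{1-c_1}$. If anything, your write-up is slightly more careful than the paper's in explicitly noting that $1-\delta_i \ge 0$ and $x_i \ge 0$ are needed for the chain of inequalities.
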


\begin{proof}
%
Let us bound the probability of the complementary event
\[
  Y < (1 - \epsilon ) \min\{x_1, x_2, \dots, x_k\}. 
\]
Observe that we can upper bound this probability by
the probability that some $x_i$ decreases
to less than $1 - \epsilon$ times its original value.
Recall that we set $\hat{\epsilon} = \epsilon/(c_1 \cdot \epsilon)$
for some constant $c_1 > 1$ (Line~\ref{algline:smallerror} in \textsc{ApproxMinDegreeSequence}).
Consider $k$ i.i.d.\ exponential random variables
\[
  X_1,X_2,\dots,X_k \sim \Exp(1),
\]
and let
\[
  \delta_i = \hat{\epsilon} \cdot X_i,
\]
as in the definition of an $\epsilon$-decayed minimum
(Definition~\ref{def:DecayedMinimum}).
By the CDF
of the exponential function, for each $1 \le i \le k$, we have
\begin{align*}
  \prob{\delta_i} {\delta_{i} > \epsilon }
  &=
  \prob{X_i} {\frac{\epsilon}{c_1 \log n} \cdot X_i > \epsilon}\\
  &= \prob{X_i} {X_i > c_1 \log n}\\
  &= \exp \left( - c_1 \log{n} \right).
\end{align*}
By a union bound, it follows that
\begin{align*}
\prob{\delta_{1} \dots \delta_{k}}
  {\max_{1 \le i \le k}
     \delta_{i} >  \epsilon }
  &= \prob{\delta_{1} \ldots \delta_{k}}
  {\text{there exists } 1 \le i \le k \text{ such that } 
    \delta_{i} > \epsilon}\\
  &\leq
\sum_{1 \le i \le n}
  \prob{\delta_{1}}{\delta_{1} > \epsilon }.
\end{align*}
Substituting in the bound from the CDF for each $\delta_{i}$ gives
\begin{align*}
\prob{\delta_{1} \ldots \delta_{k}}
  {\max_{1 \le i \le k}
     \delta_{i} >  \epsilon }
  &\le n \cdot \exp\left( - c_1 \log{n} \right)\\
  &= n^{1 - c_1}.
\end{align*}
Considering the complementary event gives the result for $c_1 > 1$.
\end{proof}

The above lemma implies that, to produce an
$\epsilon$-approximate greedy minimum degree sequence
(as defined in Definition~\ref{def:ApproxMinDegree}),
it suffices to compute an
$\epsilon$-decayed minimum-degree sequence.
Specifically, at each iteration, we only need to find the
$\epsilon$-decayed minimum among the (approximate)
degrees of the remaining vertices.

It turns out, however, that computing the approximate
degrees for each remaining vertex during every iteration is rather expensive.
Section~\ref{subsec:epsilonDecayed} shows how we can tackle
this problem by carefully choosing a candidate subset of vertices
at each iteration.

\subsection{Implicitly Generating the $\epsilon$-Decayed Minimum}
\label{subsec:epsilonDecayed}

We now consider the problem of finding the $\epsilon$-decayed
minimum of a set of vertex degrees.
Since the number of these vertices can be huge,
our first step is to find a small candidate subset.
This is done via the routine \textsc{ExpDecayedCandidates},
and its pseudocode is given in Figure~\ref{fig:ExpDecayedCandidates}.
Once again, let $\hat{\epsilon}$ denote $\epsilon/(c_1 \cdot \log n)$,
for some constant $c_1 > 1$.

\begin{figure}[H]
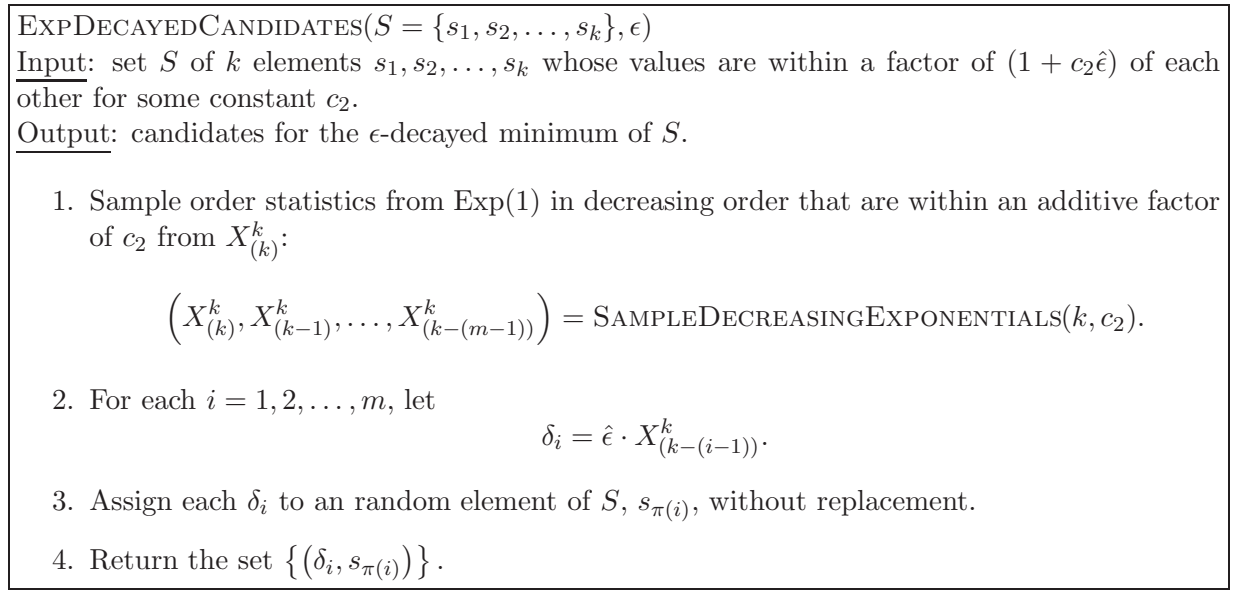


\begin{algbox}
$\textsc{ExpDecayedCandidates}
  (S = \{s_1, s_2,\dots, s_{k}\}, \epsilon)$

\underline{Input}:
set $S$ of $k$ elements $s_{1},s_2, \dots, s_k$
whose values are within a factor of $(1 + c_2 \hat{\epsilon})$ of each other
for some constant $c_2$.

\underline{Output}:
candidates for the $\epsilon$-decayed minimum of $S$.

\begin{enumerate}

\item 
Sample order statistics from $\Exp(1)$ in decreasing order
that are within an additive factor of $c_2$ from $X_{(k)}^k$:
\[
  \left( X_{(k)}^k, X_{(k-1)}^k, \dots, X_{(k - (m - 1))}^k \right) =
    \textsc{SampleDecreasingExponentials}(k, c_2).
\]

\item For each $i = 1,2,\dots,m$, let
\[
  \delta_{i} = \hat{\epsilon} \cdot X_{(k - (i - 1))}^k.
\]

\item Assign each $\delta_{i}$ to an random element
of $S$, $s_{\pi(i)}$, without replacement.

\item Return the set
$
\left\{
  \left(\delta_{i}, s_{\pi\left(i\right)}\right)
\right\}.
$
\end{enumerate}

\end{algbox}

\caption{
Pseudocode for returning an $O(1)$-sized candidate subset
for the $\epsilon$-decayed minimum of a given set
of values that are within  $(1 + c_2\hat{\epsilon})$
of each other.}

\label{fig:ExpDecayedCandidates}

\end{figure}

Notice that the input requires all the elements
to be within a factor of $(1 + c_2\hat{\epsilon})$
of each other.
The way we achieve this is simply using the
vertex buckets produced by our algorithm
\textsc{ApproxDegreeDS} in Section~\ref{subsec:ApproxDegreeDS},
using $\hat{\epsilon}$ as the tolerance value.
The following lemma shows that approximate vertex degrees in
one such bucket satisfies the required condition on the input.

\begin{lemma}\label{lem:BucketBounds}
	For an arbitrary bucket $B_i^{(t)}$,
	there exists a constant $c_2$ such that
	all its approximate degree values are within
	a factor of $(1+c_2 \hat{\epsilon})$
	(or alternatively, within a factor of $(1+\hat{\epsilon})^7$)
	of each other.
\end{lemma}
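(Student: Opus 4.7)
The plan is to argue from two sources of error: the granularity of the buckets produced by $\textsc{ApproxDegreeDS\_Report}$, and the $(1\pm\hat{\epsilon})$-multiplicative error of the approximate degree values themselves.

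First, I would invoke Theorem~\ref{thm:ApproxDegreeDS}, instantiated here with tolerance parameter $\hat{\epsilon}$ rather than $\epsilon$. Since $\textsc{ApproxMinDegreeSequence}$ initializes $\textsc{ApproxDegreeDS}$ with error $\hat{\epsilon}$, the theorem guarantees that every vertex $w \in V_i^{(t)}$ has \emph{true} fill-degree in the range $\bigl[(1+\hat{\epsilon})^{i-2},\,(1+\hat{\epsilon})^{i+2}\bigr]$. Consequently, any two true fill-degrees inside bucket $B_i^{(t)}$ differ by at most a factor of $(1+\hat{\epsilon})^4$.

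Next, I would argue that the approximate degree value associated with each such vertex deviates from the true fill-degree by at most a multiplicative factor of $(1\pm\hat{\epsilon})$. This is immediate from Lemma~\ref{lem:minValue}, which says the $\lfloor k(1-1/e)\rfloor$-ranked quantile $q(w)$ satisfies $(1-\hat{\epsilon})/d \le q(w) \le (1+\hat{\epsilon})/d$, so the corresponding approximate degree $1/q(w)$ lies in $\bigl[d/(1+\hat{\epsilon}),\,d/(1-\hat{\epsilon})\bigr]$. (An analogous argument goes through if instead one uses the $\textsc{EstimateDegree}$ output, by Theorem~\ref{thm:DegreeEstimation}.) Combining the two error sources, for any two vertices $w_1, w_2 \in B_i^{(t)}$ with approximate degrees $\tilde d_1, \tilde d_2$ and true degrees $d_1, d_2$,
\[
\frac{\tilde d_2}{\tilde d_1}
\;\le\;
\frac{1+\hat{\epsilon}}{1-\hat{\epsilon}} \cdot \frac{d_2}{d_1}
\;\le\;
\frac{(1+\hat{\epsilon})^{5}}{1-\hat{\epsilon}}.
\]

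The remaining step is a routine elementary bound: for $\hat{\epsilon}$ sufficiently small (say $\hat{\epsilon} \le 1/2$, which is guaranteed since $\hat{\epsilon} = \epsilon/(c_1 \log n)$), we have $(1-\hat{\epsilon})^{-1} \le (1+\hat{\epsilon})^2$, and therefore the ratio above is at most $(1+\hat{\epsilon})^{7}$, proving the ``alternative'' formulation. The ``$(1+c_2\hat{\epsilon})$'' formulation then follows by expanding $(1+\hat{\epsilon})^7 \le 1 + c_2 \hat{\epsilon}$ using the binomial theorem with, e.g., $c_2 = 14$ in the regime $\hat{\epsilon} \le 1/2$. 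The only mild subtlety is keeping the two error sources (bucket range vs.\ estimator accuracy) separate and noticing that both are controlled by the \emph{same} $\hat{\epsilon}$, since $\textsc{ApproxDegreeDS}$ is constructed with exactly this tolerance in line~\ref{algline:smallerror} of $\textsc{ApproxMinDegreeSequence}$; there is no significant obstacle beyond this bookkeeping.
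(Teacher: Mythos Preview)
Your proposal is correct and follows essentially the same approach as the paper: invoke the bucket range $[(1+\hat{\epsilon})^{i-2},(1+\hat{\epsilon})^{i+2}]$ from Theorem~\ref{thm:ApproxDegreeDS}, compound it with the $(1\pm\hat{\epsilon})$ estimator error, and conclude the ratio is at most $(1+\hat{\epsilon})^7$. The only minor discrepancy is that the paper identifies the ``approximate degree values'' specifically with the outputs of $\textsc{EstimateDegree}$ (via Theorem~\ref{thm:DegreeEstimation}) rather than the sketch quantile from Lemma~\ref{lem:minValue}; since you already note parenthetically that the same argument goes through with $\textsc{EstimateDegree}$, this is not a gap.
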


\begin{proof}
	From Theorem~\ref{thm:ApproxDegreeDS},
	the $i$-th bucket has vertices with
	degrees in the range
	\[
	\left[(1+\hat{\epsilon})^{i-2},(1+\hat{\epsilon})^{i+2}\right].
	\]
	From Theorem~\ref{thm:DegreeEstimation}, we have oracle access to the graph
	${G^\circ}^{(t)}$ and can therefore invoke \textsc{EstimateDegree} on it.
	Instead of treating each call to $\textsc{EstimateDegree}$
	and the values of $\delta_u$ used to generate the $\epsilon$-decayed
	minimum as random variables, we consider them as fixed values
	after removing the bad cases using w.h.p.
	That is, we define
	\[
	\tilde{d}_{fill}^{\left( t \right)}\left(u \right)
	\defeq
	\textsc{EstimateDegree}\left(G^{\circ(t-1)}, u, \hat{\epsilon} \right).
	\]
	By Theorem~\ref{thm:DegreeEstimation}, with high probability,
	every call to $\textsc{EstimateDegree}$ is correct,
	so we have
	\begin{equation*}
	\left( 1 - \epsilon \right)
	d_{fill}^{G^{\circ \left( t-1 \right)} }\left( u \right)
	\leq
	\tilde{d}_{fill}^{\left( t \right)}\left(u \right)
	\leq
	\left( 1 + \epsilon \right)
	d_{fill}^{G^{\circ \left( t-1 \right)} }\left( u \right).
	\label{eq:ApproxDegree}
	\end{equation*}
	This implies that the values are in the range
	\[
	\left[(1+\hat{\epsilon})^{i-4},(1+\hat{\epsilon})^{i+3}\right].
	\]
	Hence, all values in a bucket are within a factor of
	\[
	\dfrac{ (1+\hat{\epsilon})^{i+3} }{ (1+\hat{\epsilon})^{i-4} }
	= (1+\hat{\epsilon})^{7} \leq (1 + c_2\hat{\epsilon})
	\]
	of each other.
\end{proof}

Recall that $X_{(i)}^k$ denotes the distribution
of the $i$-th smallest value among $k$ identically
sampled variables.
The most important part of the above algorithm
is sampling the required order statistics efficiently.
Figure~\ref{fig:SamplingDecreasingExponentials}
shows how to sample the order statistics
\[
X_{(k)}^k, X_{(k - 1)}^k, \dots, X_{(1)}^k
\]
from $\Exp(1)$
in decreasing order iteratively
using Lemma~\ref{lem:OrderStatisticExp}.

\begin{figure}[H]
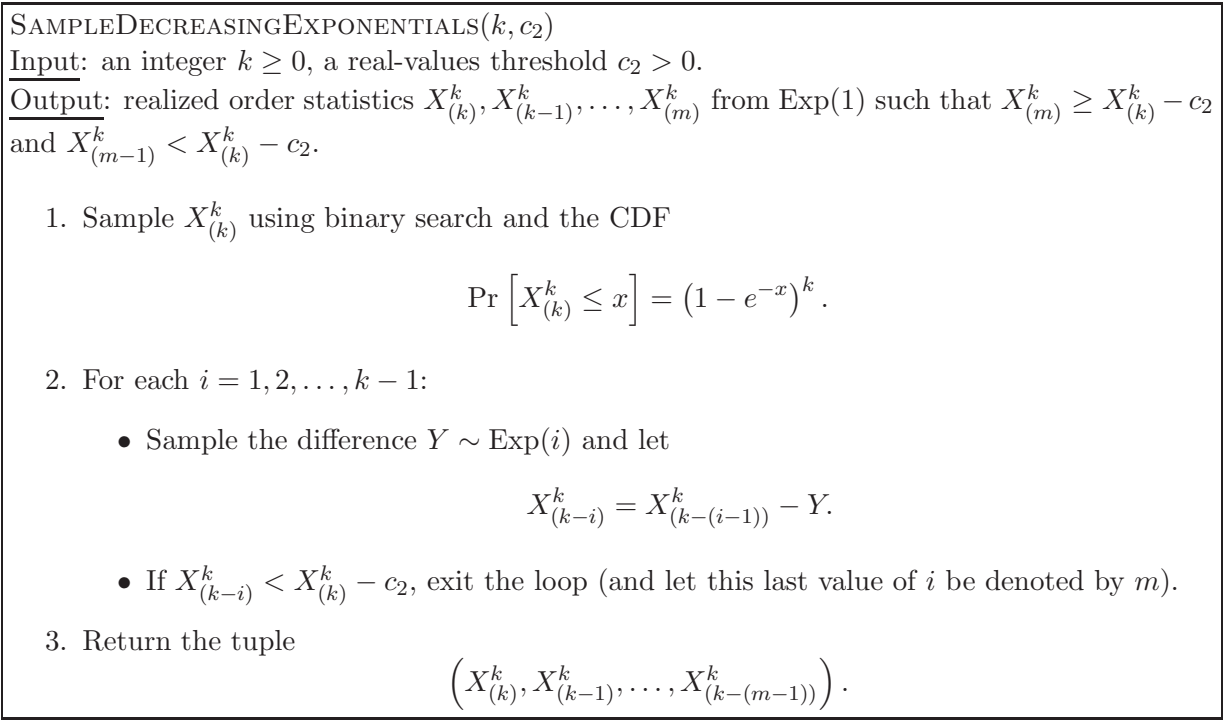

\begin{algbox}
\textsc{SampleDecreasingExponentials}$(k, c_2)$

\underline{Input}:
an integer $k \ge 0$, a real-values threshold $c_2 > 0$.

\underline{Output}:
realized order statistics $X_{(k)}^k, X_{(k-1)}^k, \dots, X_{(m)}^k$
from $\Exp(1)$
such that $X_{(m)}^k \ge X_{(k)}^k - c_2$
and $X_{(m-1)}^k < X_{(k)}^k - c_2$.

\begin{enumerate}

\item Sample $X_{(k)}^k$ using binary search and the CDF
\[
    \prob{}{X_{(k)}^k \le x} =
    \left(1 - e^{-x}\right)^k.
\]

\item For each $i = 1, 2, \dots, k-1$:
\begin{itemize}
  \item Sample the difference $Y \sim \Exp(i)$ and let
    \[
      X_{(k-i)}^k = X_{(k - (i - 1))}^k - Y.
    \]
  \item If $X_{(k-i)}^k < X_{(k)}^k - c_2$, exit the loop (and let this last value of $i$ be denoted by $m$).
\end{itemize}
  \item Return the tuple
    \[
      \left(X_{(k)}^k, X_{(k-1)}^k, \dots, X_{(k-(m-1))}^k\right).
    \]
\end{enumerate}

\end{algbox}

\caption{
Pseudocode for iteratively generating order statistics of exponential random variables
in decreasing order within a threshold $c_2$ of the maximum value $X_{(k)}^k$.
}

\label{fig:SamplingDecreasingExponentials}

\end{figure}

To show that our algorithm is correct,
we must prove two parts: that the algorithm
picks $O(1)$ candidates in expectation,
and with high probability the actual
$\epsilon$-decayed minimum belongs to this candidate set.

\begin{lemma}
\label{lem:FindingDecayedMin}
Suppose $x_1, x_2, \dots, x_k$ are within a factor of
$(1 + c_2\hat{\epsilon})$ of each other.
Then the $\epsilon$-decayed minimum of this set
is among the candidates returned by
$\textsc{ExpDecayedCandidates}(\{x_1, x_2, \dots, x_k\}, \epsilon)$.
Furthermore, the expected number of candidates returned is $O(1)$.
\end{lemma}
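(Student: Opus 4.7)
The plan is to prove two separate claims: (i) whichever index $i^{*}$ achieves the $\epsilon$-decayed minimum of $\{x_1,\dots,x_k\}$ must end up in the candidate set output by $\textsc{ExpDecayedCandidates}$, and (ii) the expected number of candidates returned is $O(1)$. The first claim rests on the tight spread of the $x_i$'s assumed in the input, while the second claim is a direct computation using Lemma~\ref{lem:OrderStatisticExp} together with the classical identity relating sums of $\Exp(i)$'s to the maximum of i.i.d.\ $\Exp(1)$'s.

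For claim (i), I would first argue the distributional equivalence: sampling the top $m$ order statistics $X_{(k)}^k \ge \cdots \ge X_{(k-(m-1))}^k$ as in $\textsc{SampleDecreasingExponentials}$ and assigning them to $m$ uniformly chosen elements of $S$ without replacement is exactly the marginal, on the top-$m$ slots, of drawing $k$ i.i.d.\ variables $\delta_i \sim \hat{\epsilon}\cdot\Exp(1)$ and pairing them with $x_1,\ldots,x_k$ (Definition~\ref{def:DecayedMinimum}). This is because one legitimate way to generate i.i.d.\ samples is to draw the order statistics and then apply a uniformly random permutation. Next, I would let $j$ be the element assigned the maximum $X_{(k)}^k$ and use the minimality of $(1-\delta_{i^*})x_{i^*}$, the bound $x_{i^*}/x_j \le 1 + c_2\hat{\epsilon}$, and a short algebraic manipulation to obtain
\[
\delta_{i^*} \ge \delta_j - c_2 \hat{\epsilon}(1-\delta_j) \ge \delta_j - c_2\hat{\epsilon},
\]
which rewrites (dividing by $\hat{\epsilon}$) as $X_{i^*} \ge X_{(k)}^k - c_2$. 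By the stopping rule of $\textsc{SampleDecreasingExponentials}$, every order statistic at least $X_{(k)}^k - c_2$ is retained, so $i^*$ is paired with one of the sampled $X$-values and thus appears in the returned set.

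For claim (ii), Lemma~\ref{lem:OrderStatisticExp} gives $X_{(k)}^k - X_{(k-j)}^k = \sum_{i=1}^{j} D_i$ with independent $D_i \sim \Exp(i)$, and the standard identity $\sum_{i=1}^{j} \Exp(i) \stackrel{d}{=} \max_{i \le j} Y_i$ for $Y_i \sim \Exp(1)$ i.i.d.\ yields
\[
\Pr\!\left[X_{(k)}^k - X_{(k-j)}^k \le c_2\right] = \bigl(1 - e^{-c_2}\bigr)^j.
\]
Summing this geometric tail bounds the expected candidate count by $\sum_{j \ge 0}(1-e^{-c_2})^j = e^{c_2} = O(1)$, as desired. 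The main obstacle I expect is pinning down the coupling in claim (i) cleanly so that it is justified to conclude ``the minimizer is among the candidates'' deterministically (not merely with high probability); once that coupling is established, the algebraic inequality on $\delta_{i^*}$ and the order-statistic identity carry the rest of the argument.
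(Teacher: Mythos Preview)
Your proposal is correct and follows essentially the same route as the paper: for (i) you compare the decayed-minimum index against the index receiving the top order statistic and show its $\delta$ lies within $c_2\hat\epsilon$ of the maximum (the paper does the same contrapositive argument, with somewhat looser notation); for (ii) both you and the paper compute the expected count via the R\'enyi representation of the gaps $X_{(k)}^k-X_{(k-j)}^k$ and sum the resulting geometric series to $e^{c_2}$. Your explicit mention of the coupling (drawing order statistics and applying a random permutation is the same as drawing i.i.d.\ samples) is a point the paper leaves implicit, and once that coupling is in place the containment in (i) is indeed deterministic, so the obstacle you anticipate does not materialize.
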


\begin{proof}
Let $X_1,X_2, \dots, X_k \sim \Exp(1)$ be i.i.d.,
and the order statistic $X_{(k)}^{k} = \max\{X_1,X_2,\dots,X_k\}$.
First let us verify the correctness of
\textsc{SampleDecreasingExponentials}.
Observe that the CDF of $X_{(k)}^k$ is
\begin{align*}
  F_{X_{(k)}^k}(x) &= \prob{}{\max\{X_1,X_2,\dots,X_n\} \le x}\\
  &= \prod_{i=1}^k \prob{}{X_i \le x}\\
  &= \left(1 - e^{-x}\right)^k.
\end{align*}
Therefore, $X_{(k)}^k$ is sampled correctly in the algorithm.
Using the memoryless property of the exponential distribution in
Lemma~\ref{lem:OrderStatisticExp}, we can
iteratively generate
$X_{(k-1)}^k, X_{(k-2)}^k, \dots, X_{(1)}^k$ by sampling their differences
from varying exponential distributions.
Let $i^* = \argmin_{i} 1 - \delta_i$. 
Now, to ensure that the $\epsilon$-decayed minimum
is not left out from our candidate set,
we need to sample every
$\delta_j$ such that
\[ 
  \delta_j \ge \delta_{i^*} - c_2 \hat{\epsilon}.
\]
To verify that this suffices,
suppose that the $\epsilon$-decayed minimum (say $x_{\pi(\ell)}$)
is not included in our candidate set.
Then,
\[
(1-\delta_{i^*})x_{\pi(i^*)} \leq (1-\delta_{i^*}+c_2 \hat{\epsilon})(x_{i^*}- c_2 \hat{\epsilon}) \leq (1-\delta_{\ell})x_{\pi(\ell)},
\]
which is a contradiction.

Lastly, to count the number of elements included in the
candidate set,
we count the number of such $\delta_j$ samples.
Equivalently,
we bound the number of random variables $X_j \sim \Exp(1)$
such that
\[
X_j \ge X_{(k)}^k - c_2,
\]
using Definition~\ref{def:DecayedMinimum}.
Let $Z_i$ be the indicator variable which equals $1$ when $X_j \geq X_{(k)}^k - c_2$,
and let $Z = \sum_{1 \le i \le k} Z_i$ indicate the size of our candidate subset.
Using the memoryless property of exponentials,
\begin{align*}
  \E[Z] &= \sum_{i=1}^k \E[Z_i]\\
     &= \sum_{i=1}^k \prob{}{X_{(i)}^k \ge X_{(k)}^k - c_2} \\
     &= 1 + \sum_{i=1}^{k-1} \prob{}{X_{(i)}^k \ge X_{(k)}^k - c_2} \\
     &= 1 + \sum_{i=1}^{k-1} \prob{}{X_{(i)}^i \le c_2}\\
     &= 1 + \sum_{i=1}^{k-1} \left( 1 - e^{-c_2} \right)^i\\
     &\le e^{c_2},
\end{align*}
where the final equality sums the geometric series.
Therefore, $O(1)$ exponential random variables are generated as we sample
backwards from the maximum.
\end{proof}

Note that we cannot simply work with the smallest bucket,
because the randomness introduces a $1 \pm \epsilon$
perturbation.
Even the bucket location of the vertex of minimum degree
is dependent on the randomness of the sketches used
to generate them (discussed in Theorem~\ref{thm:ApproxDegreeDS}).
So, the algorithm finds this $O(1)$-sized candidate set
for $O(\log n \hat{\epsilon}^{-1})$ buckets,
which suffices since the $\epsilon$-decayed minimum
cannot be in the latter buckets.
However, this increases our global candidate set to
size $O(\log n \hat{\epsilon}^{-1})$.
As a final step before computing degrees of vertices,
we show that we can trim this set carefully, while still
keeping the $\epsilon$-decayed minimum in it
with high probability.

\begin{lemma}\label{lem:trimmingisokay}
	Let $(\delta_{v}^{(i)},v,i)$ be the entry that corresponds to
	the $\epsilon$-decayed minimum in the set
	$\variable{global\_candidates}^{(t)}$. Then with high probability,
  we have
	\[
	\left( 1 - \delta^{\left(t\right)}_{v} \right)
	\left( 1 + \widehat{\epsilon} \right)^{i}
	\leq
	\left( 1 + \widehat{\epsilon} \right)^{7}
	\min_{(\delta_w^{(t)}, w, j) \in \variable{global\_candidates}^{(t)}}
	\left( 1 - \delta^{\left(t\right)}_{w} \right)
	\left( 1 + \widehat{\epsilon} \right)^{j}.
	\]
\end{lemma}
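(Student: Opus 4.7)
The plan is to show that the entry $(\delta_{v^*}^{(t)}, v^*, i^*)$ corresponding to the $\epsilon$-decayed minimum survives the trimming step in line~\ref{algline:trimGlobalCandidates} by interpreting both the trimming surrogate $(1-\delta)(1+\hat{\epsilon})^i$ and the ``true'' target quantity $(1-\delta)\tilde{d}$ as mutually close approximations of a common underlying value. Here $\tilde{d}_v \defeq \textsc{EstimateDegree}(G^{\circ(t-1)}, v, \hat{\epsilon})$, which is the quantity used to select $u^{(t)}$ on the next line of the algorithm and is therefore what determines which entry of $\variable{global\_candidates}^{(t)}$ is the $\epsilon$-decayed minimum.

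First I would condition on the high-probability event that every call to \textsc{EstimateDegree} made during this step is correct, which holds by Theorem~\ref{thm:DegreeEstimation} together with a union bound. Under this conditioning, the chain of approximations in the proof of Lemma~\ref{lem:BucketBounds} applies verbatim: every vertex $w$ placed in bucket $j$ by $\textsc{ApproxDegreeDS\_Report}$ satisfies
\[
(1+\hat{\epsilon})^{j-4} \;\le\; \tilde{d}_w \;\le\; (1+\hat{\epsilon})^{j+3}.
\]
Rearranging gives two one-sided bounds that I will use immediately: for the minimizer $v^*$ sitting in bucket $i^*$, we have $(1+\hat{\epsilon})^{i^*} \le (1+\hat{\epsilon})^4 \tilde{d}_{v^*}$, while for any candidate $(\delta_w^{(t)}, w, j)$ in $\variable{global\_candidates}^{(t)}$ we have $(1+\hat{\epsilon})^j \ge \tilde{d}_w / (1+\hat{\epsilon})^3$.

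Next I would invoke the defining minimality of $v^*$: because $(\delta_{v^*}^{(t)}, v^*, i^*)$ is chosen to minimize $(1-\delta)\tilde{d}$ over all entries in $\variable{global\_candidates}^{(t)}$, we have $(1-\delta_{v^*}^{(t)})\tilde{d}_{v^*} \le (1-\delta_w^{(t)})\tilde{d}_w$ for every other candidate. Chaining the two bucket inequalities with this minimality yields
\[
(1-\delta_{v^*}^{(t)})(1+\hat{\epsilon})^{i^*}
\;\le\; (1-\delta_{v^*}^{(t)})\tilde{d}_{v^*}(1+\hat{\epsilon})^4
\;\le\; (1-\delta_w^{(t)})\tilde{d}_w (1+\hat{\epsilon})^4
\;\le\; (1-\delta_w^{(t)})(1+\hat{\epsilon})^{j+7},
\]
which is exactly the trimming condition that needs to be verified. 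Taking the minimum of the right-hand side over all candidates in $\variable{global\_candidates}^{(t)}$ completes the argument.

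There is no genuinely hard step here; the proof is essentially bookkeeping of the $(1+\hat{\epsilon})$ factors relating bucket indices to estimated degrees, already compiled in Lemma~\ref{lem:BucketBounds}. The only subtlety worth flagging is that ``the $\epsilon$-decayed minimum'' in the lemma statement must be read as the minimizer of $(1-\delta)\tilde{d}$ among the entries of $\variable{global\_candidates}^{(t)}$ (matching what the algorithm actually selects in the following line), rather than of $(1-\delta)(1+\hat{\epsilon})^i$ or of $(1-\delta)d_{fill}$; otherwise the ``$(1+\hat{\epsilon})^7$'' slack in the statement is exactly what the argument above produces, so any mismatch in interpretation would either make the lemma trivial or throw off the constants.
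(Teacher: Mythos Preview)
Your proposal is correct and follows essentially the same approach as the paper's proof: both use the bucket bounds from Lemma~\ref{lem:BucketBounds} to sandwich the degree quantity between powers of $(1+\hat{\epsilon})$ indexed by the bucket, invoke the minimality of the $\epsilon$-decayed minimum, and chain the two one-sided inequalities to pick up the $(1+\hat{\epsilon})^7$ slack. The paper's version writes the minimality in terms of $d_{fill}$ rather than $\tilde{d}$, but the argument is otherwise identical, and your remark about the intended interpretation of ``$\epsilon$-decayed minimum'' is a fair observation about a notational ambiguity the paper does not address explicitly.
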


\begin{proof}
	Let $(\delta_{u}^{(j)},u,j)$ be an arbitrary entry in the set
	$\variable{global\_candidates}^{(t)}$.
	We know that
	\[
	\left( 1 - \delta^{\left(t\right)}_{v} \right) d_{fill}(v)
	\leq \left( 1 - \delta^{\left(t\right)}_{u} \right) d_{fill}(u).
	\]
	From Lemma~\ref{lem:BucketBounds},
	\[
	d_{fill}(v) \leq (1+ \hat{\epsilon})^{i+3}
	\]
	and
	\[
	d_{fill}(u) \geq (1+ \hat{\epsilon})^{j-4}.
	\]
	Substituting these into the previous inequality
	gives us the result.
\end{proof}

Now we use all our building blocks from this section to prove the correctness
of the algorithm.

\begin{lemma}
\label{lem:Correctness}
For any graph $G$ and any error $\epsilon$,
the output of $\textsc{ApproxMinDegreeSequence}(G, \epsilon)$
is with high probability
an $\epsilon$-approximate greedy min-degree sequence.
\end{lemma}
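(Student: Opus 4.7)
The plan is to show that for each iteration $t$, the vertex $u^{(t)}$ selected by the algorithm satisfies $d_{fill}(u^{(t)}) \leq (1+\epsilon) d_{fill}(v^*)$, where $v^*$ is a true minimum-degree vertex in $G^{(t-1)}$, and then take a union bound over all $n$ iterations. This chains together the three building-block lemmas of this section with the accuracy guarantees on $\textsc{ApproxDegreeDS}$ and $\textsc{EstimateDegree}$.

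First I would condition on the high-probability success events of the randomized components used at iteration $t$: (i) $\textsc{ApproxDegreeDS\_Report}$ producing correct $(1+\widehat{\epsilon})^4$-approximate buckets (Theorem~\ref{thm:ApproxDegreeDS}), (ii) every invocation of $\textsc{EstimateDegree}(v, \widehat{\epsilon})$ returning an estimate $\tilde d(v) \in [(1-\widehat{\epsilon}) d_{fill}(v), (1+\widehat{\epsilon}) d_{fill}(v)]$ (Theorem~\ref{thm:DegreeEstimation}), and (iii) every sampled perturbation $\delta_v = \widehat{\epsilon} X_v$ satisfying $\delta_v \leq \epsilon$ (the event used in the proof of Lemma~\ref{lem:ApproxFactor}). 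Each event holds with probability at least $1 - n^{-c}$ for any desired constant $c$, so a union bound over all $n$ iterations preserves the conjunction with high probability. Under this conditioning the entire analysis can be viewed as deterministic.

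Next I would pin down which bucket contains the ideal decayed minimizer. Let $v^*$ be a true min-degree vertex at iteration $t$ and $i^*$ the bucket $\textsc{ApproxDegreeDS}$ assigns it to. A short calculation using the bucket bounds and the w.h.p.\ upper bound $\delta_v \leq \epsilon$ shows that any bucket $j > i_{\min} + O(\log n/\widehat{\epsilon})$ contains only vertices whose decayed approximate values $(1-\delta_v)\tilde d(v)$ strictly exceed anything achievable in the minimum bucket, so the algorithm's sweep necessarily covers $S_{i^*}^{(t)}$. Within $S_{i^*}^{(t)}$, Lemma~\ref{lem:BucketBounds} guarantees that all estimates lie within factor $(1+\widehat{\epsilon})^7$ of each other, so Lemma~\ref{lem:FindingDecayedMin} applies: the $\epsilon$-decayed minimizer over that bucket is emitted by $\textsc{ExpDecayedCandidates}$ into $\variable{global\_candidates}^{(t)}$, and Lemma~\ref{lem:trimmingisokay} guarantees it survives the trimming step in line~\ref{algline:trimGlobalCandidates}. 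The final step is to translate the minimization over the trimmed set into a bound on the true fill-degree of $u^{(t)}$. Since $u^{(t)}$ minimizes $(1-\delta_i)\tilde d(v_i)$ over the surviving candidates and $v^*$ is among them, we obtain
\[
  (1-\delta_{u^{(t)}})\tilde d(u^{(t)}) \leq (1-\delta_{v^*})\tilde d(v^*) \leq (1+\widehat{\epsilon}) d_{fill}(v^*),
\]
which, combined with $\delta_{u^{(t)}} \leq \epsilon$ and $\tilde d(u^{(t)}) \geq (1-\widehat{\epsilon}) d_{fill}(u^{(t)})$, yields
\[
  d_{fill}(u^{(t)}) \leq \frac{1+\widehat{\epsilon}}{(1-\epsilon)(1-\widehat{\epsilon})} d_{fill}(v^*).
\]
Absorbing the lower-order $\widehat{\epsilon}$ losses using $\widehat{\epsilon} = \epsilon/(c_1 \log n)$ (and, if needed, rescaling the outer $\epsilon$ by a constant factor at the outset) collapses this to the claimed $(1+\epsilon)$ factor.

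The main obstacle, and the reason the argument is not circular, is justifying that the $\delta_i$ values generated inside $\textsc{ExpDecayedCandidates}$ by sampling the order statistics in decreasing order are distributed exactly like the $\delta_i$ of Definition~\ref{def:DecayedMinimum} arising from fully materialized exponentials. Lemma~\ref{lem:OrderStatisticExp} (the memoryless increments of exponential order statistics) is what makes this equivalence hold, and it is also what lets us view the pipeline as producing a single fresh $\epsilon$-decayed minimum per step whose randomness is independent of the sketches maintained inside $\textsc{ApproxDegreeDS}$. Without this decorrelation, adversarial correlations of the type illustrated in Subsection~\ref{subsec:Correlation} could violate the high-probability guarantee of step (i), so this is where all the work in defining the pseudo-deterministic update sequence ultimately pays off.
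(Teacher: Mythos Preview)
Your approach is essentially the paper's: chain together Lemma~\ref{lem:BucketBounds}, Lemma~\ref{lem:FindingDecayedMin}, Lemma~\ref{lem:trimmingisokay}, and Lemma~\ref{lem:ApproxFactor}, and then union bound over the $n$ iterations. The paper phrases this as an induction on $t$ with failure probability $t\cdot n^{-c}$, but that is the same thing.

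There is, however, a genuine slip in the middle of your argument. You track the bucket $i^*$ containing the true min-degree vertex $v^*$, apply Lemma~\ref{lem:FindingDecayedMin} to conclude that the $\epsilon$-decayed minimizer of \emph{that bucket} is emitted, and then invoke Lemma~\ref{lem:trimmingisokay} to say ``it survives the trimming step.'' But Lemma~\ref{lem:trimmingisokay} is stated only for the \emph{global} $\epsilon$-decayed minimum, not for the decayed minimum of an arbitrary bucket; the bucket-wise minimizer of $S_{i^*}^{(t)}$ need not be the global one (some other bucket could win via a larger $\delta$), so it could in principle be trimmed. You then compound this by writing ``$v^*$ is among them,'' which is stronger still: $v^*$ need not even be a candidate, only the decayed minimizer of its bucket is.

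The fix is short and is exactly what the paper does: forget $v^*$ at this stage and instead argue directly that the global $\epsilon$-decayed minimizer $w^*$ over all $\tilde d$ values lies in some bucket, is emitted by $\textsc{ExpDecayedCandidates}$ for that bucket (Lemma~\ref{lem:FindingDecayedMin}), survives trimming (Lemma~\ref{lem:trimmingisokay}), and is therefore $u^{(t)}$. Then $(1-\delta_{u^{(t)}})\tilde d(u^{(t)}) \leq (1-\delta_{v^*})\tilde d(v^*) \leq \tilde d(v^*)$ holds automatically because $u^{(t)}=w^*$ is the global decayed minimum, and your final chain of inequalities goes through unchanged.
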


\begin{proof}
We prove by induction that for some constant $c$, we can show
that after $t$ steps, our sequence is an $\epsilon$-approximate
greedy min-degree sequence with probability at least $1 - t \cdot n^{-c}$.
The base case of $t = 0$ follows because nothing has happened so far.
As the inductive hypothesis, suppose we have an $\epsilon$-approximate
greedy min-degree sequence $u^{(1)}, u^{(2)}, \dots, u^{(t)}$.
Then consider the graph $G^{(t)}$ where these vertices
are marked as eliminated and the rest of the vertices 
are marked as remaining.

From Lemma~\ref{lem:BucketBounds},
all values in a bucket are within a factor
of $1+c_2 \hat{\epsilon}$ of each other.
Thus we can use the guarantees of Lemma~\ref{lem:FindingDecayedMin}
to compute the $\epsilon$-decayed candidates of each bucket.
That is, with high probability we did indeed return the
$\epsilon$-decayed minimum of each bucket $S_{i}^{(t)}$.
The $\epsilon$-decayed minimum
of $k$ sets is the minimum of their respective
$\epsilon$-decayed minimums.
Additionally, Lemma~\ref{lem:trimmingisokay}
shows that trimming our set does not remove
the $\epsilon$-decayed minimum from the set.
So, we have that $u^{(t+1)}$ is the $\epsilon$-decayed minimum over
all the values of $\tilde{d}_{fill}^{( t )}(u )$
with high probability.

Lastly, invoking the bound on distortions incurred by $\epsilon$-decay
from Lemma~\ref{lem:ApproxFactor}, as well as the approximation
error of $\textsc{EstimateDegree}$,
gives that w.h.p.\ the fill degree of $u^{(t+1)}$ is within
$1 + \epsilon$ of the minimum fill degree in $G^{\circ(t)}$.
From all the above high probability claims,
we get a failure probability of at most $n^{-c}$.
So the inductive hypothesis holds for $t+1$ as well.
\end{proof}

We now consider the cost of the algorithms.
For this, we show that if a vertex is close to the $\epsilon$-decayed
minimum, then there is a high chance that it is the $\epsilon$-decayed
minimum. That is to say, if the algorithm queries the
approximate degree of a vertex, there is a good chance that
this vertex belongs to the $\epsilon$-decayed approximate
degree sequence.

\begin{lemma}
\label{lem:CandidateChosen}
For any constant $c_3$, a choice of $\hat{\epsilon}$ (as in
line~\ref{algline:smallerror} of \textsc{ApproxMinDegreeSequence}),
a set of values $d_1,d_2, \dots, d_n$,
and any index $i$, we have
\begin{align*}
& \prob{\delta_1\dots\delta_{n} \sim \hat{\epsilon} \cdot \Exp(1)}
{\text{$i$ is the $\epsilon$-decayed minimum of $d_1, d_2, \dots, d_n$}}\\
  &\hspace{2.5cm} \geq \exp\left( -2 c_3 \right)
\prob{\delta_1 \ldots \delta_{n} \sim \hat{\epsilon} \cdot \Exp(1)}
{
\left( 1 - \delta_{i} \right) d_{i}
\leq \left( 1 +  \frac{\epsilon}{c_1 \log{n}} \right)^{c_{3}}
\min_{j} \left( 1 - \delta_{j} \right) d_{j}
}.
\end{align*}
\end{lemma}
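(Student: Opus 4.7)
The plan is to condition on the samples $\{\delta_j\}_{j \ne i}$ and reduce the inequality to a comparison of tail probabilities for the single exponential $\delta_i = \hat{\epsilon} X_i$ with $X_i \sim \Exp(1)$. Writing $M_i \defeq \min_{j \ne i}(1 - \delta_j) d_j$, I would first note that the right-hand event is equivalent to $(1 - \delta_i) d_i \le (1 + \hat{\epsilon})^{c_3} M_i$: when $(1 - \delta_i) d_i$ itself attains the global minimum, the inequality holds automatically because $(1+\hat{\epsilon})^{c_3} \ge 1$; otherwise $\min_j(1 - \delta_j) d_j = M_i$.

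Assuming $d_i > 0$, both the left-hand event $A$ (that $i$ is the $\epsilon$-decayed minimum) and the right-hand event $B$ become half-line conditions on $\delta_i$: $\delta_i \ge t_A \defeq 1 - M_i/d_i$ and $\delta_i \ge t_B \defeq 1 - (1 + \hat{\epsilon})^{c_3} M_i/d_i$. If $M_i \le 0$ then $t_B \ge t_A$, so $B \Rightarrow A$ and the conditional probability is $1$; this handles the degenerate case in which some $\delta_j \ge 1$ has made $M_i$ non-positive. In the remaining case $M_i > 0$, the memoryless property of $\Exp(1)$ gives
\[
  \Pr[\,\delta_i \ge t_A \mid \delta_i \ge t_B,\ \{\delta_j\}_{j \ne i}\,]
    \;\ge\; \exp\!\left(-(t_A - t_B)/\hat{\epsilon}\right),
\]
with equality when $0 \le t_B \le t_A$ and a strict improvement when $t_B < 0$, since then conditioning on $B$ is vacuous.

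To close the argument I would bound $t_A - t_B = \bigl((1 + \hat{\epsilon})^{c_3} - 1\bigr)(M_i/d_i)$. Whenever $A$ is not already automatic, $M_i/d_i < 1$, so it suffices to show $(1 + \hat{\epsilon})^{c_3} - 1 \le 2 c_3 \hat{\epsilon}$. This follows from $(1 + \hat{\epsilon})^{c_3} \le e^{c_3 \hat{\epsilon}}$ combined with $e^y - 1 \le 2y$ for $y \in [0, 1]$, both valid because $\hat{\epsilon} = \epsilon/(c_1 \log n)$ is small for $n$ large and $c_3$ constant. Hence $(t_A - t_B)/\hat{\epsilon} \le 2 c_3$ pointwise in $\{\delta_j\}_{j \ne i}$, and taking expectations over the other samples yields the claim.

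The main obstacle is really just bookkeeping across the regimes ($M_i \le 0$; $M_i > 0$ with $t_A \le 0$; $M_i > 0$ with $t_B < 0 < t_A$; and $M_i > 0$ with $0 < t_B \le t_A$) so that the $\exp(-2 c_3)$ bound holds uniformly in $\{\delta_j\}_{j \ne i}$. The probabilistic content is purely the memorylessness of $\Exp(1)$, which translates an additive shift in the threshold into a multiplicative factor on the tail probability, doing all the essential work once the thresholds are set up correctly.
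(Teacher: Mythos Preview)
Your approach is essentially the same as the paper's: condition on $\{\delta_j\}_{j\ne i}$, reduce both events to threshold conditions on the single exponential $\delta_i$, and invoke memorylessness to compare the two tail probabilities. Your bookkeeping is in fact tighter than the paper's, which writes the extra shift as ``$+2c_3$'' rather than ``$+2c_3\hat{\epsilon}$'' and relies on a looser chain of inequalities; your explicit bound $t_A - t_B = ((1+\hat{\epsilon})^{c_3}-1)\,M_i/d_i \le 2c_3\hat{\epsilon}$ makes the $\exp(-2c_3)$ factor transparent. One minor point: the right-hand event $B$ is not literally \emph{equivalent} to $\{\delta_i \ge t_B\}$ (they differ on the measure-zero-adjacent region where $(1-\delta_i)d_i < 0$), but since $\min_j \le M_i$ we always have $B \subseteq \{\delta_i \ge t_B\}$, so replacing $B$ by the larger event only strengthens what you prove; the argument goes through unchanged.
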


\begin{proof}
Consider generating $\delta_{i}$ last.
Then consider the value
\[
m_{\setminus i}
\defeq
\min_{j \neq i} \left(1 - \delta_{j} \right) d_j.
\]
If $m_{\setminus i} \geq d_i$, then both sides are $1$ and the
result holds trivially.
Otherwise, we condition on
\[
\left(1 - \delta_{i} \right) d_i
\leq \left( 1 +  \hat{\epsilon} \right)^{c_{3}} m_{\setminus i},
\]
or equivalently
\[
\delta_{i}
\geq
\hat{\gamma},
\]
for some $\hat{\gamma}$ such that
\[
\left(1 -  \hat{\gamma} \right) d_i
=
\left( 1 +  \hat{\epsilon} \right)^{c_{3}} m_{\setminus i}.
\]

Then by the memoryless property of exponentials from
Lemma~\ref{lem:OrderStatisticExp}, with probability
at least $\exp(-2 c_3)$, we have
\[
\delta_{i}
\geq \hat{\gamma} + 2 c_3,
\]
which when substituted back in gives
\begin{align*}
\left(1 - \delta_{i} \right) d_i
&\leq
\left( 1 - \hat{\gamma} - 2 c_3 \right) d_{i}\\
&\leq
\left( 1 - 2 c_3 \right)
\left( 1 - \hat{\gamma} \right) d_{i}\\
&=
\left( 1 - 2 c_3 \right)
\left( 1 +  \hat{\epsilon} \right)^{c_{3}} m\\
&\leq m.
\end{align*}
So conditioned on the decayed value of $i$
being within the given threshold of the minimum,
it would decay below the minimum with probability at least $\exp(-2 c_3)$.
\end{proof}

Substituting the value of $c_3 = 7$ as in algorithm \textsc{ApproxMinDegreeSequence}
in Figure~\ref{fig:ApproxMinDegreeSequence},
we get the following corollary.

\begin{corollary}\label{corollary:trimGlobalCandidates}
	If a vertex $v$ is in $\variable{global\_candidates}^{(t)}$
	after line~\ref{algline:trimGlobalCandidates} of \textsc{ApproxMinDegreeSequence},
	then with constant ($\exp(-14)$) probability,
	$v$ is the $\epsilon$-decayed minimum.
\end{corollary}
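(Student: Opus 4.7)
The plan is to derive the corollary as a direct specialization of Lemma~\ref{lem:CandidateChosen} with the parameter $c_3 = 7$, which is the exact constant appearing in the trimming condition on line~\ref{algline:trimGlobalCandidates} of \textsc{ApproxMinDegreeSequence}.

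First, I would observe that the event ``$v$ is in $\variable{global\_candidates}^{(t)}$ after line~\ref{algline:trimGlobalCandidates}'' is, by definition of the trimming step, the event
\[
\left( 1 - \delta^{(t)}_{v} \right)\left( 1 + \widehat{\epsilon} \right)^{i}
\leq
\left( 1 + \widehat{\epsilon} \right)^{7}
\min_{(\delta_w^{(t)}, w, j) \in \variable{global\_candidates}^{(t)}}
\left( 1 - \delta^{(t)}_{w} \right)\left( 1 + \widehat{\epsilon} \right)^{j}.
\]
Identifying the bucket representative $(1 + \widehat{\epsilon})^i$ with the value $d_i$ appearing in the hypothesis of Lemma~\ref{lem:CandidateChosen}, this is precisely the condition on the right-hand side of the lemma's conclusion with $c_3 = 7$.

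Next, I would plug $c_3 = 7$ into the lemma's conclusion to obtain
\[
\Pr\bigl[\, v \text{ is the } \epsilon\text{-decayed minimum}\,\bigr]
\;\geq\;
\exp(-14)\cdot
\Pr\bigl[\, v \text{ survives the trimming}\,\bigr],
\]
and then divide through by the probability of the conditioning event to conclude that, conditioned on $v$ surviving line~\ref{algline:trimGlobalCandidates}, the probability that $v$ is the $\epsilon$-decayed minimum is at least $\exp(-14)$, which is the claimed constant.

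The only bookkeeping worth flagging is the identification between the bucket values $(1 + \widehat{\epsilon})^i$ used in the trimming step and the actual (estimated) fill-degrees that parameterize the $\epsilon$-decayed minimum of Definition~\ref{def:PerturbedMinDegreeSequence}. Lemma~\ref{lem:BucketBounds} together with Theorem~\ref{thm:ApproxDegreeDS} guarantees that values within a bucket agree with the bucket representative up to a $(1 + \widehat{\epsilon})^{O(1)}$ factor w.h.p., so these two notions of ``decayed minimum'' agree up to constant factors in the exponent; the choice $c_3 = 7$ absorbs this slack. Beyond this, no new ideas are required --- the corollary is a direct specialization, so I do not anticipate any real technical obstacle.
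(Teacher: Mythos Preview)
Your proposal is correct and takes essentially the same approach as the paper: the paper's entire proof is the one-line remark that the corollary follows by substituting $c_3 = 7$ into Lemma~\ref{lem:CandidateChosen}. Your additional bookkeeping paragraph about reconciling bucket values $(1+\widehat{\epsilon})^i$ with the estimated degrees via Lemma~\ref{lem:BucketBounds} actually makes explicit a point the paper glosses over.
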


\noindent
We can now prove our main result.
\begin{proof}[Proof of Theorem~\ref{thm:ApproxMinDegree}.]
The correctness follows from Lemma~\ref{lem:Correctness}.
Theorem~\ref{thm:ApproxDegreeDS} allows us to maintain
access to all the buckets in a total time of
\[
O\left(m \log^{3}n \hat{\epsilon}^{-2} \right)
=
O\left(m \log^{5}n \epsilon^{-2} \right)
\]
across the sequence of pivots.

It remains to bound the costs of the calls to $\textsc{EstimateDegree}$.
By Theorem~\ref{thm:DegreeEstimation}, the total costs of maintaining the
graphs under pivots is $O(m \log^2{n})$, and comes out to be a lower order term.
For the cost of the calls to $\textsc{EstimateDegree}$,
we utilize Corollary~\ref{corollary:trimGlobalCandidates}, which states
that if a vertex is in $\variable{global\_candidates}$,
then it is the one pivoted with constant probability.
Specifically, we prove inductively based on the number of vertices
that remain that the expected cost of calling $\textsc{EstimateDegree}$
is bounded by
\[
c_4 \cdot 
\left( 
\sum_{u \in V^{G^{\left(t\right)}}_{remain} }
d_{remain}^{V^{G^{\left(t\right)}}} \left( u \right)
\right)
\log^{2}{n} \cdot \hat{\epsilon}^{-2},
\]
for some constant $c_4$.

The base case of $t = n$ follows from the lack of vertices remaining.
Suppose the result is true for $t + 1$ vertices.
Then the cost of the next step is bounded by
\begin{multline*}
\sum_{u} \prob{\delta_{v}: v \in V}{\text{$u$ is the $\epsilon$-decayed minimum}} \cdot
c_4 \cdot 
\left( 
\sum_{w \in V^{G^{\left(t\right)}}_{remain} }
d_{remain}^{V^{G^{\left(t\right)}}} \left( w \right)
- d_{remain}^{V^{G^{\left(t\right)}}} \left( u \right)
\right)
\log^{2}{n} \cdot \hat{\epsilon}^{-2}\\
=
c_4 \cdot 
\left( 
\sum_{u \in V^{G^{\left(t\right)}}_{remain} }
d_{remain}^{V^{G^{\left(t\right)}}} \left( u \right)
\right)
-
c_4 \cdot 
\sum_{u} \prob{\delta_{v}: v \in V}{\text{$u$ is the $\epsilon$-decayed minimum}}
\cdot
\left( d_{remain}^{V^{G^{\left(t\right)}}} \left( u \right) \right)
\log^{2}{n} \cdot \hat{\epsilon}^{-2}.
\end{multline*}

\noindent
On the other hand,
we evaluate $\textsc{EstimateDegree}(u, \widehat{\epsilon})$ on $G^{(t)}$
if $u \in \variable{global\_candidates}^{(t)}$.
By Corollary~\ref{corollary:trimGlobalCandidates}, we have
\[
 \prob{\delta_{v}: v \in V}{u \in \variable{global\_candidates}^{(t)}}
\leq
\exp(14)
\cdot
 \prob{\delta_{v}: v \in V}{\text{$u$ is the $\epsilon$-decayed minimum}}.
\]
Therefore, the expected cost of these calls is
\begin{multline*}
\sum_{u}
 \prob{\delta_{v}: v \in V}{u \in \variable{global\_candidates}^{(t)}} \cdot 
c_3 \left( d_{remain}^{V^{G^{\left(t\right)}}} \left( u \right) \right)
\log^{2}{n} \cdot \hat{\epsilon}^{-2}\\
\leq c_3 \cdot \exp(14) \cdot
\sum_{u} \prob{\delta_{v}: v \in V}{\text{$u$ is the $\epsilon$-decayed minimum}}
\cdot
\left( d_{remain}^{V^{G^{\left(t\right)}}} \left( u \right) \right)
\log^{2}{n} \cdot \hat{\epsilon}^{-2},
\end{multline*}
so the inductive hypothesis holds for $t$ as well by letting
\[
c_4 = c_3 \cdot \exp(14).
\]
As the initial total of remaining degrees is $O(m)$,
the total cost of these steps is
\[
O\left( m \log^{2}{n} \hat{\epsilon}^{-2} \right)
=
O\left( m \log^{4}{n} \epsilon^{-2} \right),
\]
which completes the proof.
\end{proof}

\section{Estimating the Fill Degree of a Single Vertex}
\label{sec:DegreeEstimation}

This section discusses routines for approximating
the fill-degree of a single vertex in a partially
eliminated graph.
Additionally, we also need to maintain this
partially eliminated graph throughout the
course of the algorithm.
Specifically, we prove Theorem~\ref{thm:DegreeEstimation}.

Note that in this partially eliminated graph
(which we call the `component graph'),
connected components of the eliminated vertices
are contracted into single vertices, which we now 
call `component' vertices, while the rest of the
vertices are termed `remaining' vertices.
Hence, we can think of the state of the graph as one where the component
vertices form an independent set.
Also, we are only trying to approximate the fill degree of a single
remaining vertex $u$.
The fill degree of $u$ is simply the number of remaining neighbors of $u$
in addition to the number of remaining neighbors of any component
neighbor of $u$.
Since, the former is easy to compute,
the object in question is the cardinality of the
unions of the remaining neighbors of the neighbors of $u$.
This set-of-sets structure also has a natural interpretation
as a matrix.

In particular, if we write out the neighbors of $u$
as rows of some matrix $A$, and view all remaining vertices
as columns of this matrix,
the problem can be viewed as querying for the number
of non-zero columns in a $0/1$ matrix.

Given a $0/1$ matrix $A$ with $r$ rows,
our goal is to estimate the number of non-zero columns,
or columns with at least one entry,
by making the following two types of queries:
\begin{enumerate}
	\item $\textsc{RowSize}(i)$: return the number of non-zero elements in $i$-th row of $A$;
	\item $\textsc{SampleFromRow}(i)$: return a column index $j$
        uniformly sampled among all non-zero entries in row $i$ of $A$.
    \item $\textsc{QueryValue}(i, j)$: returns the value of $A(i, j)$.
\end{enumerate}

The main result as a matrix sampler is:
\begin{lemma}
\label{lem:NonZeroColumnEstimator}
There is a routine $\textsc{EstimateNonZeroColumns}$ that takes (implicit)
access to a matrix $A$, along with access to the three operations
above, $\textsc{RowSize}$, $\textsc{SampleColumn}$, $\textsc{QueryValue}$,
along with an error threshold $\epsilon$,
returns a value that's an $1 \pm \epsilon$ approximation to the number
of non-zero columns in $A$ with high probability.
Furthermore, the expected total number of operations called is
$O(r \log^2{n} \epsilon^{-2} )$
where $r$ is the number of rows of $A$.
\end{lemma}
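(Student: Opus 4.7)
The plan is to re-interpret the matrix $A$ as a system of sets: row $i$ defines the set $S_i$ of columns $j$ with $A(i,j)=1$, so that the quantity $C$ we seek is exactly $|S_1 \cup \cdots \cup S_r|$. The total $N := \sum_i |S_i|$ is readily computed with $r$ calls to $\textsc{RowSize}$, and uniform sampling from any $S_i$ is provided by $\textsc{SampleFromRow}$. I will use a Karp--Luby--Madras-style union estimator: each trial draws a uniformly random non-zero cell $(i_t, j_t)$ by first picking row $i_t$ with probability $\textsc{RowSize}(i_t)/N$ and then setting $j_t \leftarrow \textsc{SampleFromRow}(i_t)$, and defines an indicator $X_t \in \{0,1\}$ equal to $1$ precisely when $i_t$ is the \emph{canonical} row for column $j_t$. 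As each non-zero column contributes exactly one canonical pair, $\E[X_t] = C/N$, so
\[
\widehat{C} \;=\; \frac{N}{k}\sum_{t=1}^k X_t
\]
is unbiased for $C$.

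To keep the per-sample check cheap, I will fix a uniformly random permutation $\sigma$ of $[r]$ once at initialization and declare the canonical row of column $j$ to be the one whose image under $\sigma$ is smallest among $\{i : A(i,j)=1\}$. Then testing whether $i_t$ is canonical for $j_t$ reduces to walking through rows in $\sigma$-order, issuing $\textsc{QueryValue}(\sigma(\ell),j_t)$ for $\ell = 1,2,\ldots$ until the first hit, and comparing that hit to $i_t$. Conditioned on $\deg(j_t)$, the first-hit position in a random permutation of $r$ elements with $\deg(j_t)$ successes has expected value $(r+1)/(\deg(j_t)+1) = O(r/\deg(j_t))$, so the unconditional expected cost of one trial is
\[
\E\!\left[\frac{r}{\deg(j_t)}\right] \;=\; \sum_{j:\,\deg(j)>0}\frac{\deg(j)}{N}\cdot\frac{r}{\deg(j)} \;=\; \frac{rC}{N} \;=\; r\cdot p,
\]
where $p := C/N$ is the success probability of $X_t$.

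Combining the two sides, standard Chernoff bounds show that $k = \Theta(\log n/(p\epsilon^2))$ i.i.d.\ samples suffice for $|\widehat{C}-C|\le \epsilon C$ with high probability, and the corresponding expected total walk cost is $k \cdot rp = O(r\log n/\epsilon^2)$. Since $p$ is unknown a priori, I will implement the outer loop adaptively: continue drawing samples until $\Theta(\log n/\epsilon^2)$ canonical hits have been observed, which by a negative-binomial Chernoff bound yields the correct order of $k$ with high probability. The main obstacle will be upgrading the $O(r\log n/\epsilon^2)$ bound from \emph{expectation} to a high-probability statement on the total work, because the per-sample walk lengths $T_t$ are heavy-tailed (geometric-type), so a direct Chernoff argument on $\sum T_t$ is insufficient. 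I intend to resolve this by truncating each walk at $\Theta(r\log n)$ queries and applying Bernstein's inequality to the truncated sum; truncation events occur with polynomially small probability and can be absorbed into the failure budget, with the extra logarithmic factor accounting for the $\log^2 n$ in the claimed $O(r\log^2 n\cdot \epsilon^{-2})$ bound.
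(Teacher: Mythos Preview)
Your approach is correct and in fact shaves a $\log n$ factor off the paper's bound in expectation. Both arguments share the same engine: sample a uniformly random non-zero entry $(i,j)$, then probe rows of column $j$ until a hit; the expected probe cost is $\Theta(r/\deg(j))$, and averaging over the choice of $j$ (which is proportional to $\deg(j)$) makes the per-trial cost exactly $r\cdot C/N$. Where you diverge from the paper is in the statistic extracted from each trial. The paper records the (capped and normalized) probe count itself and feeds that $[0,1]$-valued sample into its mean-estimation routine $\textsc{EstimateMean}$; the normalization by the cap $\mathit{lim}=\Theta(r\log n)$, needed to force samples into $[0,1]$, is precisely what costs the extra logarithm. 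You instead use a Karp--Luby--Madras canonical-representative indicator, so your samples are honest Bernoullis with mean $C/N$, and your adaptive ``sample until $\Theta(\epsilon^{-2}\log n)$ successes'' stopping rule is exactly $\textsc{EstimateMean}$ specialized to $\{0,1\}$ variables---yielding the same accuracy guarantee with total expected work $\Theta(r\log n\,\epsilon^{-2})$.

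Two small comments. First, the lemma only asks for an \emph{expected} operation count, so your truncation/Bernstein paragraph is unnecessary for the stated claim; moreover, since a walk through a fixed permutation of $[r]$ terminates in at most $r$ steps deterministically, truncating at $\Theta(r\log n)$ is vacuous. Second, with $\sigma$ fixed once, the per-trial walk lengths $\tau_{j_t}(\sigma)$ are i.i.d.\ in $t$ only conditionally on $\sigma$; your expected-cost computation is still valid (apply Wald's identity conditionally on $\sigma$, using that $\E[X_t\mid\sigma]=C/N$ regardless of $\sigma$, then average over $\sigma$), but if you later want a high-probability cost bound it is cleaner to re-randomize the row order per trial (lazily, so it costs $O(\tau)$), at which point your sampler becomes essentially the paper's $D_{\mathit{combined}}$.
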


First, we will prove a weaker version of this result
in Section~\ref{subsec:DegreeEstimation_Matrix}.
This algorithm relies on a routine to
estimate the mean of a distribution,
which is detailed in Section~\ref{subsec:MeanEstimation}.
Finally, by a more careful analysis of
both these algorithms,
we prove the exact claim in Lemma~\ref{lem:NonZeroColumnEstimator}
in Section~\ref{subsec:DegreeEstimation_Better}.

But, before proving this matrix based result,
we first verify that this matrix game can be ported
back to the graph theoretic setting as stated in
Theorem~\ref{thm:DegreeEstimation}.
To do so, we need the following tools for querying
degrees and sampling neighbors in a component graph
as it undergoes pivots.

\begin{lemma}
\label{lem:DegreeEstimationDS}
We can maintain a component graph under
pivoting of vertices in a total time of $O(m \log^2{n})$
so that the operations described in Theorem~\ref{thm:DegreeEstimation}
can be performed in $O(\log{n})$ time each.
This component graph grants oracle access that allows for:
\begin{itemize}
\item querying the state of a vertex,
\item querying the component or remaining degree of a vertex.
\item sampling a random remaining neighbor of either a component
or remaining vertex.
\item sampling a random component vertex.
\end{itemize}
\end{lemma}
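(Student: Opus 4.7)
\medskip

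\noindent\textbf{Proof proposal for Lemma~\ref{lem:DegreeEstimationDS}.}
The plan is to maintain, in addition to a status flag on each vertex, the following balanced binary search trees (BSTs) augmented with subtree sizes so that size queries and uniform-random order-statistic sampling both take $O(\log n)$ time: for every remaining vertex $u$ a BST storing $N_{remaining}(u)$ and a BST storing $N_{component}(u)$; for every component vertex $c$ a BST storing $N_{remaining}(c)$; and one global BST of all component vertices. The state, degree, and sampling queries then reduce immediately to one lookup or one order-statistic operation on the appropriate BST, each in $O(\log n)$ time.

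The work lies in the pivot operation. When a remaining vertex $u$ is pivoted I first mark it eliminated and wrap it in a new singleton component $c_u$ whose remaining-neighbor BST is initialized from $N_{remaining}(u)$, after which $u$ is removed from the remaining-neighbor BST of each $v\in N_{remaining}(u)$ (and $c_u$ inserted in $N_{component}(v)$, unless a duplicate already exists). Then I iterate through the component neighbors $c_1,c_2,\dots\in N_{component}(u)$ and union each $c_i$ into the evolving merged component using classical weighted (small-to-large) union: the smaller of the two remaining-neighbor BSTs is drained element by element into the larger one, and for every moved vertex $v$ I update $v$'s component-neighbor BST by deleting the absorbed component id and either inserting the surviving id (if not already present there) or, in the duplicate case, just deleting. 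The absorbed component is finally removed from the global component BST.

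For the running-time analysis I would use the standard small-to-large potential argument. Each time a pair $(v,c)$ is touched during a union, $v$ moves into a remaining-neighbor BST of a component whose total "remaining boundary" at least doubles, so $(v,c)$ can be touched at most $O(\log n)$ times over the entire algorithm. There are at most $O(m)$ initial such pairs (one per edge of $G$), and every BST update costs $O(\log n)$, yielding the claimed $O(m\log^2 n)$ total maintenance cost. Duplicate-edge handling at the moment of a merge is purely local: an $O(\log n)$ membership check in the surviving component's remaining-neighbor BST tells us whether to keep or discard the moved entry, so deduplication does not inflate the bound.

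The only genuinely delicate point, and the one I would treat most carefully, is bookkeeping of $N_{component}(v)$ for remaining vertices $v$ during many cascaded merges: each move of $v$ from one component's remaining list to another must be accompanied by the correct deletion and possible insertion in $v$'s component-neighbor BST, including the special case where $v$ was adjacent to both endpoints of the merge and should see its component-degree drop by one. I would encapsulate this in a single $\textsc{RelinkBoundaryEdge}(v,c_{old},c_{new})$ helper and charge its $O(\log n)$ cost to the same token that pays for moving $(v,c_{old})$, so the amortized $O(m\log^2 n)$ bound is preserved. Correctness of the maintained adjacency structure then follows by induction on the pivot sequence, matching the contraction rule that defines the component graph $G^{\circ}$ in Section~\ref{subsec:PreliminariesGaussian}.
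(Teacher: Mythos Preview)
Your proposal is correct and follows essentially the same approach as the paper: maintain the adjacency structure of $G^{\circ}$ via balanced BSTs, and when a pivot forces several component vertices to merge, combine their remaining-neighbor lists by the small-to-large rule to get the $O(m\log^2 n)$ total. The paper's own proof is terser (it omits the explicit bookkeeping of $N_{component}(v)$ and duplicate handling that you spell out), but the underlying argument is identical; one small caution is that the ``boundary at least doubles'' phrasing is not literally true once duplicates are discarded, so the clean way to justify the $O(\log n)$-moves bound is to run small-to-large on the additive weight $D(c)=\sum_{x\in c}\deg_{G}(x)$ rather than on $|N_{remaining}(c)|$.
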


We defer the proof of this lemma to
Section~\ref{sec:DynamicGraphs},
along with the corresponding running time guarantees.
Assuming the correctness of Lemmas~\ref{lem:NonZeroColumnEstimator}
and~\ref{lem:DegreeEstimationDS},
matching the above operations with the required
matrix operations described in Theorem~\ref{thm:DegreeEstimation}
then gives its proof.

\begin{proof}(Of Theorem~\ref{thm:DegreeEstimation})

The provided graph theoretic operations can simulate the matrix
operations by.
\begin{enumerate}
\item Generating a list of all the component neighbors of $u$.
\item For each component neighbor $x$, finding the remaining degree of $x$.
\item Finding a random non-zero in some row corresponding
is the same as sampling a remaining neighbor of the component
vertex corresponding to it.
\item To query whether some row/column pair $x$ and $u$ are connected,
we search for $u$ in the list of neighbors for $x$.
Maintaining all neighbors in a searchable data structure such as a binary
search tree resolves this.
\end{enumerate}

Substituting in the runtime bounds gives the desired result.
\end{proof}

\subsection{Column Count Approximator Using Distribution Mean Estimators}
\label{subsec:DegreeEstimation_Matrix}

We start by defining an overall estimator
which is what we eventually sample.
Consider weighting each entry $(i, j)$ by
\[
\frac{1}{\text{ColumnSum}_{A}(j)},
\]
where
\[
\text{ColumnSum}_{A}(j)
\defeq
\sum_{i} A(i, j).
\]

This can be checked to be an unbiased estimator
of the number of non-zero columns.
\begin{lemma}
\label{lem:EstimatorCorrectness}
The number of non-zero columns of a $0/1$-matrix $A$ equals
\[
\sum_{(i, j): A(i, j) = 1}
\frac{1}{\textsc{ColumnSum}_{A}(j)}.
\]
\end{lemma}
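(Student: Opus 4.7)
The plan is to prove this by a straightforward double-counting argument: swap the order of summation so that the outer sum ranges over columns $j$ and the inner sum ranges over rows $i$ with $A(i,j) = 1$. Concretely, I would rewrite
\[
\sum_{(i,j) : A(i,j) = 1} \frac{1}{\textsc{ColumnSum}_A(j)}
= \sum_{j} \sum_{i : A(i,j) = 1} \frac{1}{\textsc{ColumnSum}_A(j)},
\]
and observe that for any fixed $j$ the inner summand is a constant that does not depend on $i$.

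Next I would split into cases based on whether column $j$ is zero or non-zero. If column $j$ is zero, then there are no $i$ with $A(i,j) = 1$ so the inner sum is empty and contributes $0$; this is also the correct count since a zero column should not be counted. If column $j$ is non-zero, then the number of indices $i$ with $A(i,j) = 1$ is exactly $\textsc{ColumnSum}_A(j) \geq 1$ (here the $0/1$ hypothesis on $A$ is essential, since it ensures $\textsc{ColumnSum}_A(j)$ counts precisely the number of ones). Hence the inner sum evaluates to
\[
\textsc{ColumnSum}_A(j) \cdot \frac{1}{\textsc{ColumnSum}_A(j)} = 1,
\]
contributing exactly $1$ per non-zero column.

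Summing over all $j$ then yields the total number of non-zero columns of $A$, which is what we wanted. There is no real obstacle here; the only subtlety worth flagging explicitly is that the original sum is over pairs $(i,j)$ with $A(i,j) = 1$, which automatically avoids any division by zero since such pairs only exist when $\textsc{ColumnSum}_A(j) \geq 1$. The argument uses nothing beyond the $0/1$-entry assumption and a change in the order of summation.
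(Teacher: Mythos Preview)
Your proposal is correct and follows essentially the same approach as the paper: both swap the order of summation to sum first over columns, then observe that for each non-zero column $j$ the inner sum equals $\textsc{ColumnSum}_A(j)\cdot\frac{1}{\textsc{ColumnSum}_A(j)}=1$, while zero columns contribute nothing.
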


\begin{proof}

\[
\sum_{\left(i,j\right)
  :
  A\left(i,j\right) = 1}
{
  \frac{1}{{\textsc{ColumnSum}_A\left(j\right)}}
}
=
\sum_{
\substack{
  j:~\text{column $j$ of $A$ is non-zero}\\
  1 \le i \le r
}}
{
  \frac{A\left(i,j\right)}{{\textsc{ColumnSum}_A\left(j\right)}}
}.
\]
By the definition of column sum,
\[
\sum_{1 \le i \le r} {A(i,j)} = \textsc{ColumnSum}_A(j).
\]

Substituting this back give us
\[
\sum_{\left(i,j\right):A\left(i,j\right) \ne 0}
  {\frac{1}{{\textsc{ColumnSum}_A\left(j\right)}}}  
= 
\left| {
  \left\{ {j:~\text{column $j$ of $A$ is non-zero}} \right\} 
} \right|. \qedhere
\]
\end{proof}

This implies that we only need to estimate column sums.
The way we will actually compute this approximation
is to estimate the mean of some appropriately chosen
distribution.
The pseudocode for such a routine
(\textsc{EstimateMean}) is given in
Figure~\ref{fig:EstimateMean}.

\begin{figure}[H]
	
	\begin{algbox}
		
		$\textsc{EstimateMean}(D, \sigma)$
		
		\underline{Input}: access to a distribution $D$,
		cutoff point $\sigma$.
		
		\underline{Output}: estimate of mean.
		
		\begin{enumerate}
			
			\item Initialize $\textsc{counter} \leftarrow 0$ and $\textsc{sum} \leftarrow 0$;
			\item While {$\variable{sum} < \sigma$}
			\begin{enumerate}
				\item Generate $\variable{x} \sim D$,
				\item $\variable{sum} \leftarrow \variable{sum} + x$,
				\item Increment counter, $\variable{counter} \leftarrow \variable{counter} + 1$.
			\end{enumerate}
			\item Return $\sigma/\variable{counter}$;
		\end{enumerate}
		
	\end{algbox}
	
	\caption{Pseudocode for Mean Estimation}
	
	\label{fig:EstimateMean}
	
\end{figure}

The following lemma bounds the accuracy
and the running time of \textsc{EstimateMean}.
Its proof is detailed in
Section~\ref{subsec:MeanEstimation}.
\begin{lemma}
\label{lem:MeanEstimation}
Let $D$ be any arbitrary distribution on $[0, 1]$
with (unknown) mean $\overline{\mu}$,
and a cut off parameter $\sigma > 0$.
Then for any $\epsilon$, running $\textsc{EstimateMean}(D, \sigma)$, with probability at least $1 - 2\exp ( { - \frac{{{\varepsilon ^2}\sigma }}{5}} )$:
\begin{enumerate}
\item
\label{part:RunTime}
queries $D$ at most
\[
O\left(\frac{\sigma}{ \overline{\mu}}\right)
\]
times;
\item
\label{part:Accuracy}
Produces an output $\widetilde{\mu}$ such that for any
$\epsilon$, we have
\[
\left( 1 - \epsilon \right) \overline{\mu}
\leq \widetilde{\mu}
\leq
\left( 1 + \epsilon \right) \overline{\mu}.
\]
\end{enumerate}
\end{lemma}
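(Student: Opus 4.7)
The plan is to analyze the (random) stopping time $N$ of \textsc{EstimateMean}, defined as the final value of $\variable{counter}$ upon termination, and translate concentration bounds on it into the desired estimates. Write $S_n = X_1 + \cdots + X_n$ for the partial sums of i.i.d.\ samples $X_i \sim D$. Because the algorithm exits the loop precisely when the running total first reaches $\sigma$, we have the clean equivalences
\[
\{N > n\} = \{S_n < \sigma\}, \qquad \{N \le n\} = \{S_n \ge \sigma\}.
\]
The returned value is $\widetilde{\mu} = \sigma/N$, so the accuracy conclusion $(1-\epsilon)\overline{\mu} \le \widetilde{\mu} \le (1+\epsilon)\overline{\mu}$ is equivalent to the two-sided bound
\[
\frac{\sigma}{(1+\epsilon)\,\overline{\mu}} \;\le\; N \;\le\; \frac{\sigma}{(1-\epsilon)\,\overline{\mu}}.
\]

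I would then pick the two deterministic indices
\[
n_{\min} = \left\lfloor \frac{\sigma}{(1+\epsilon)\,\overline{\mu}} \right\rfloor, \qquad n_{\max} = \left\lceil \frac{\sigma}{(1-\epsilon)\,\overline{\mu}} \right\rceil,
\]
and apply multiplicative Chernoff bounds to $S_{n_{\min}}$ and $S_{n_{\max}}$. Since each $X_i \in [0,1]$ with mean $\overline{\mu}$, we have $\mathbb{E}[S_n] = n\overline{\mu}$, and the standard bounds give
\[
\Pr\bigl[S_{n_{\min}} \ge (1+\epsilon)\,n_{\min}\overline{\mu}\bigr] \le \exp\!\left(-\tfrac{\epsilon^2 n_{\min}\overline{\mu}}{3}\right),
\]
\[
\Pr\bigl[S_{n_{\max}} \le (1-\epsilon)\,n_{\max}\overline{\mu}\bigr] \le \exp\!\left(-\tfrac{\epsilon^2 n_{\max}\overline{\mu}}{2}\right).
\]
Plugging in $n_{\min}\overline{\mu} \approx \sigma/(1+\epsilon)$ and $n_{\max}\overline{\mu} \approx \sigma/(1-\epsilon)$ and using the equivalences above converts these into $\Pr[N \le n_{\min}]$ and $\Pr[N > n_{\max}]$ respectively. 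A union bound, together with the slack $\epsilon^2/(3(1+\epsilon)) \ge \epsilon^2/5$ and $\epsilon^2/(2(1-\epsilon)) \ge \epsilon^2/5$ (valid in the regime of interest), yields the failure probability $2\exp(-\epsilon^2 \sigma/5)$ claimed in the statement.

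Finally, part~\ref{part:RunTime} (running time) drops out of the same analysis: on the good event, $N \le n_{\max} = O(\sigma/\overline{\mu})$ as a deterministic consequence of the concentration bound already established; no extra argument is needed. The main conceptual obstacle is that $N$ is a stopping time that depends adaptively on all prior samples, so Chernoff cannot be applied to $S_N$ directly; the key move is the reformulation $\{N \text{ outside the target window}\} \Longleftrightarrow \{S_{n_{\min}} \text{ or } S_{n_{\max}} \text{ deviates multiplicatively from its mean}\}$, after which only routine concentration remains. The remaining bookkeeping is verifying that the floors/ceilings and the $(1\pm\epsilon)$ factors in the exponent can be absorbed into the constant $1/5$, which I would handle by a short case analysis on $\epsilon \in (0,1)$.
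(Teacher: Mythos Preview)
Your approach is essentially the same as the paper's: it also defines the two marks $L = \sigma/((1+\epsilon)\overline{\mu})$ and $R = \sigma/((1-\epsilon)\overline{\mu})$, applies a multiplicative Chernoff bound to $S_L$ (upper tail) and $S_R$ (lower tail) to get $\exp(-\epsilon^2\sigma/4)$ for each, then union bounds and absorbs the $(1\pm\epsilon)$ slack into the constant $1/5$. Your explicit articulation of the stopping-time reformulation $\{N > n\} = \{S_n < \sigma\}$ and your handling of floors/ceilings are slightly more careful than the paper's write-up, but the argument is identical in substance.
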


An immediate corollary of this is a routine for estimating
the sum of all elements in a column (\textsc{ApproxColumnSum}),
where the runtime depends on the column sum itself.
Its pseudocode is given in Figure~\ref{fig:EstimateColumnSum}.
Lemma~\ref{lem:ColumnSumEstimation} gives the correctness
and running time of \textsc{ApproxColumnSum}.

\begin{figure}[H]
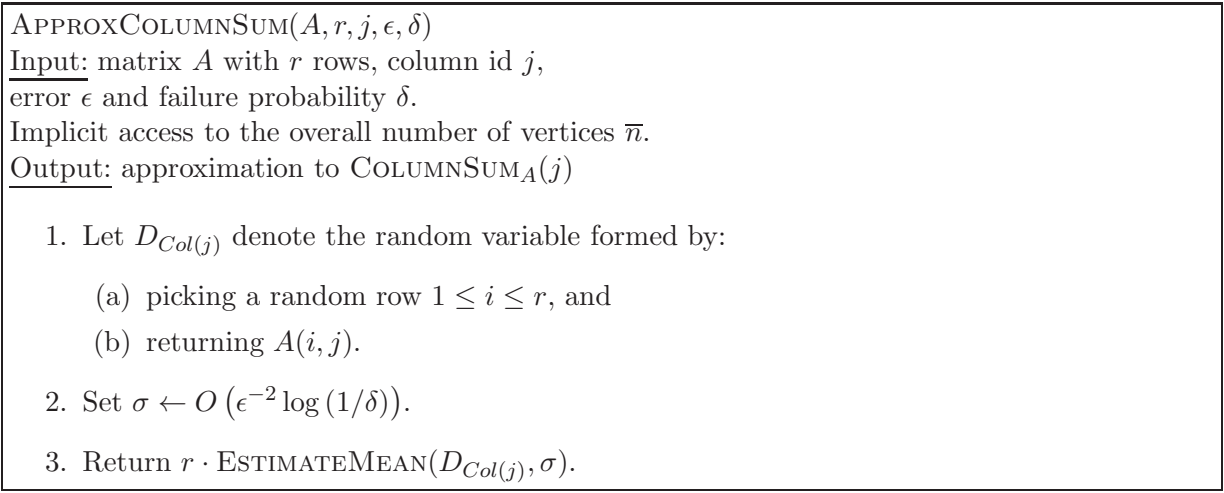

	
	\begin{algbox}
		
		$\textsc{ApproxColumnSum}(A, r, j, \epsilon, \delta)$
		
		\underline{Input:} matrix $A$ with $r$ rows,
		column id $j$,\\
		error $\epsilon$ and failure probability $\delta$.\\
		Implicit access to the overall number of vertices $\overline{n}$.

		\underline{Output:} approximation to $\textsc{ColumnSum}_{A}(j)$
		
		\begin{enumerate}
			
			\item Let $D_{Col(j)}$ denote the random variable formed by:
			\begin{enumerate}
				\item picking a random row $1 \leq i \leq r$, and
				\item returning $A(i, j)$. 
			\end{enumerate}
			\item Set $\sigma \leftarrow O\left( \epsilon^{-2} \log\left( 1 / \delta \right) \right)$.
			\item Return $r \cdot \textsc{EstimateMean}(D_{Col(j)}, \sigma)$.
			
		\end{enumerate}
		
	\end{algbox}
	
	\caption{Pseudocode for Estimating the Column Sum of a Matrix}
	
	\label{fig:EstimateColumnSum}
	
\end{figure}

\begin{lemma}
\label{lem:ColumnSumEstimation}
For any a matrix $A$, any column ID $j$,
any error $\epsilon > 0$,
and any failure probability $\delta$,
a call to $\textsc{ApproxColumnSum}$ returns with probability
at least $1 - \delta$ an $(1 + \epsilon)$
approximation to $\textsc{ColumnSum}_{A}(j)$ while making
\[
O\left( \frac{r \log\left( 1/\delta \right) }{\textsc{ColumnSum}_{A}(j) \epsilon^{2}} \right).
\]
oracle accesses to the matrix $A$ in expectation.
\end{lemma}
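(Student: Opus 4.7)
The plan is to reduce Lemma~\ref{lem:ColumnSumEstimation} directly to Lemma~\ref{lem:MeanEstimation} via the distribution $D_{Col(j)}$ constructed in $\textsc{ApproxColumnSum}$. First I would observe that $D_{Col(j)}$ is supported on $\{0,1\} \subseteq [0,1]$ (since $A$ is a $0/1$ matrix, though in general we only need bounded entries), and that its mean $\overline{\mu}$ equals
\[
\frac{1}{r} \sum_{i=1}^{r} A(i,j)
= \frac{\textsc{ColumnSum}_A(j)}{r}.
\]
This is precisely why we scale the estimator returned by $\textsc{EstimateMean}$ by a factor of $r$: if $\widetilde{\mu}$ is the estimate of $\overline{\mu}$ and $\widetilde{\mu} \in [(1-\epsilon)\overline{\mu}, (1+\epsilon)\overline{\mu}]$, then $r\widetilde{\mu}$ lies in $[(1-\epsilon)\textsc{ColumnSum}_A(j), (1+\epsilon)\textsc{ColumnSum}_A(j)]$ as required.

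Next I would choose the constant inside $\sigma = O(\epsilon^{-2}\log(1/\delta))$ large enough so that Lemma~\ref{lem:MeanEstimation} delivers both desired properties. Specifically, setting $\sigma = C \epsilon^{-2}\log(1/\delta)$ for a sufficiently large absolute constant $C$, the failure probability $2\exp(-\epsilon^2\sigma/5)$ can be driven below $\delta$, giving the claimed success probability $1-\delta$ for the multiplicative guarantee on $\widetilde{\mu}$.

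For the cost, each draw from $D_{Col(j)}$ is implemented by one uniform row sample and one $\textsc{QueryValue}$ call, so the number of oracle accesses equals (a constant times) the number of samples drawn by $\textsc{EstimateMean}$. By Part~\ref{part:RunTime} of Lemma~\ref{lem:MeanEstimation}, this is in expectation
\[
O\!\left( \frac{\sigma}{\overline{\mu}} \right)
= O\!\left( \frac{\epsilon^{-2}\log(1/\delta)}{\textsc{ColumnSum}_A(j)/r} \right)
= O\!\left( \frac{r \log(1/\delta)}{\textsc{ColumnSum}_A(j)\, \epsilon^{2}} \right),
\]
matching the lemma's statement.

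The only subtlety, and the main thing I would be careful about, is reconciling the fact that Lemma~\ref{lem:MeanEstimation} states its probability guarantee conditioned on the same event that bounds the runtime. Concretely, I need to verify that both the accuracy of $\widetilde{\mu}$ and the expected number of queries hold simultaneously with probability $\ge 1-\delta$; this requires only that the bound $O(\sigma/\overline{\mu})$ of Part~\ref{part:RunTime} holds in expectation (not necessarily with high probability), which is exactly how the statement in Lemma~\ref{lem:MeanEstimation} is used here for the expectation bound on oracle accesses, while the $(1\pm\epsilon)$ accuracy gives the success probability. Assembling these pieces yields Lemma~\ref{lem:ColumnSumEstimation}.
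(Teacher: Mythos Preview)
Your proposal is correct and follows essentially the same approach as the paper: identify $D_{Col(j)}$ as a $\{0,1\}$-valued distribution with mean $\textsc{ColumnSum}_A(j)/r$, invoke Lemma~\ref{lem:MeanEstimation} with $\sigma = O(\epsilon^{-2}\log(1/\delta))$, and read off both the accuracy guarantee (after rescaling by $r$) and the query bound $O(\sigma/\overline{\mu})$. If anything, you are slightly more explicit than the paper about why the rescaling works and about the choice of the constant in $\sigma$.
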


\begin{proof}
As in the pseudocode of $\textsc{ApproxColumnSum}$
in Figure~\ref{fig:EstimateColumnSum}, we define
the random variable $D_{Col(j)}$ as:
\begin{enumerate}
\item picking a random row $1 \leq i \leq r$, and
\item returning $A(i, j)$. 
\end{enumerate}
This gives a Bernoulli distribution that is:
\begin{itemize}
\item $1$ with probability $\textsc{ColumnSum}_A(j)/r$, and
\item $0$ otherwise.
\end{itemize}
The mean of $D_{\textsc{Col}(j)}$ is its probability of being $1$:
\[
\frac{\textsc{ColumnSum}_A(j)}{r}.
\]
The cost of the call to $\textsc{MeanEstimation}$
is then given by Lemma~\ref{lem:MeanEstimation}.
It gives that the number of accesses to the matrix $A$ 
via $\textsc{QueryValue}$ in Algorithm~\ref{fig:EstimateColumnSum}
is upper-bounded by
\[
O\left( {
  \frac{\sigma } {\left(\textsc{ColumnSum}_A(j)/r\right) }} 
\right) 
= 
O \left( {
  \frac{r \log(1/\delta) } {\textsc{ColumnSum}_A(j) {\varepsilon ^{2}}}
} \right). \qedhere
\]
\end{proof}

This performance means that we can just treat
$(1/\text{ColumnSum}_{A}(j))$ as a random variable,
and sample enough entries so that the sum is approximately
$O(\log{n} \epsilon^{-2})$.
The running time as given in Lemma~\ref{lem:ColumnSumEstimation}
adds an extra factor of $n$ to this, giving the claimed 
running time in Lemma~\ref{lem:NonZeroColumnEstimator}.
Pseudocode of the overall algorithm is given in
Figure~\ref{fig:SlowerEstimateNonZeroColumns}.

\begin{figure}[H]
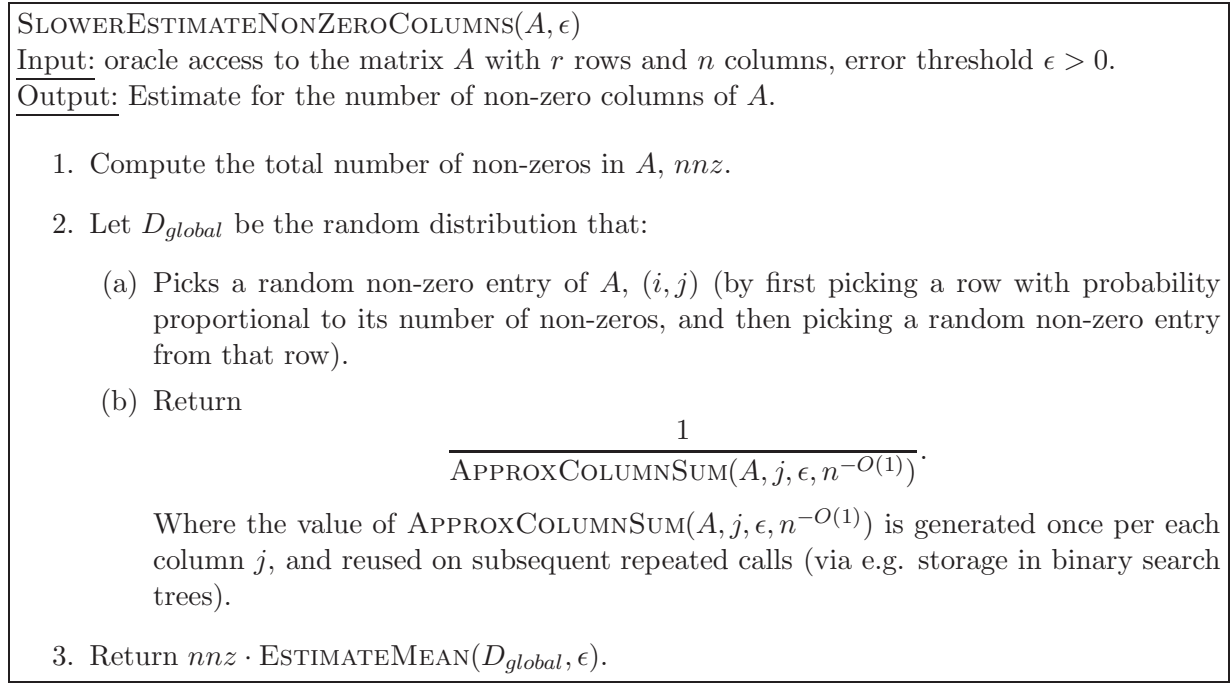

	
	\begin{algbox}
		$\textsc{SlowerEstimateNonZeroColumns}(A, \epsilon)$
		
		\underline{Input:} oracle access to the matrix $A$
		with $r$ rows and $n$ columns,
		error threshold $\epsilon > 0$.
		
		\underline{Output:} Estimate for the number of non-zero columns of $A$.
		
		\begin{enumerate}
			
			\item Compute the total number of non-zeros in $A$, $nnz$.
			\item Let $D_{global}$ be the random distribution that:
			\begin{enumerate}
				\item Picks a random non-zero entry of $A$, $(i, j)$
				(by first picking a row with probability proportional to its
				number of non-zeros, and then picking a random non-zero entry from that row).
				\item Return
				\[
				\frac{1}{\textsc{ApproxColumnSum}(A, j, \epsilon, n^{-O\left( 1 \right)})}.
				\]
				Where the value of $\textsc{ApproxColumnSum}(A, j, \epsilon, n^{-O\left( 1 \right)})$
				is generated once per each column $j$,
				and reused on subsequent repeated calls
				(via e.g. storage in binary search trees).
			\end{enumerate}
			\item Return $nnz \cdot \textsc{EstimateMean}(D_{global}, \epsilon)$.
		\end{enumerate}
		
	\end{algbox}
	
	\caption{Pseudocode for Estimating the Number of Non-Zero Columns of a Matrix}
	
	\label{fig:SlowerEstimateNonZeroColumns}
	
\end{figure}

We first bound the correctness of the result returned by
$\textsc{SlowerEstimateNonZeroColumns}(A, \epsilon)$, and the
expected number times it samples $D_{global(j)}$.

\begin{lemma}
\label{lem:ApproxnonZeroColumnsHelper}
With high probability, the estimate returned by
$\textsc{SlowerEstimateNonZeroColumns}(A, \epsilon)$
is within $1 \pm \epsilon$ of the number of non-zero columns.
\end{lemma}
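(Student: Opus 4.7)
The proof will combine the unbiasedness identity from Lemma~\ref{lem:EstimatorCorrectness} with two layers of high-probability approximation: first for each column sum via $\textsc{ApproxColumnSum}$, and then for the overall mean of $D_{global}$ via $\textsc{EstimateMean}$.

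The plan is to first observe that if we had exact access to $\textsc{ColumnSum}_{A}(j)$, then sampling a uniformly random non-zero entry $(i,j)$ and returning $1/\textsc{ColumnSum}_{A}(j)$ would give a random variable $D_{global}^{\mathrm{exact}}$ whose mean is exactly
\[
  \frac{1}{nnz}\sum_{(i,j):A(i,j)=1}\frac{1}{\textsc{ColumnSum}_A(j)}
  = \frac{\#\{\text{non-zero columns of }A\}}{nnz},
\]
by Lemma~\ref{lem:EstimatorCorrectness}. Multiplying a $(1\pm\epsilon)$-approximation of this mean by $nnz$ therefore yields a $(1\pm\epsilon)$-approximation of the number of non-zero columns. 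This is the core identity that drives the proof.

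Next I would handle the column-sum approximation error. Each distinct column $j$ is approximated at most once by $\textsc{ApproxColumnSum}(A,j,\epsilon, n^{-O(1)})$ because the algorithm caches the result. By Lemma~\ref{lem:ColumnSumEstimation}, each such call returns a $(1\pm\epsilon)$-approximation with probability at least $1-n^{-O(1)}$. A union bound over the at most $n$ columns shows that, with high probability, \emph{every} reciprocal column sum appearing as a sample of $D_{global}$ is within a $(1\pm\epsilon)$ factor of the exact reciprocal $1/\textsc{ColumnSum}_A(j)$. Consequently the mean of $D_{global}$ is itself within a $(1\pm\epsilon)$ factor of the mean of $D_{global}^{\mathrm{exact}}$. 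Since each reciprocal lies in $[1/r,1]\subseteq[0,1]$, Lemma~\ref{lem:MeanEstimation} applies directly.

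It remains to apply Lemma~\ref{lem:MeanEstimation} to $D_{global}$. Reading the second argument of $\textsc{EstimateMean}$ as a cutoff $\sigma=\Theta(\epsilon^{-2}\log n)$ (which is what makes the error bound $2\exp(-\epsilon^2\sigma/5)$ inverse polynomial), we obtain with high probability a value $\widetilde\mu$ satisfying $(1-\epsilon)\mu_{D_{global}}\le \widetilde\mu\le(1+\epsilon)\mu_{D_{global}}$. Composing the two $(1\pm\epsilon)$-errors and multiplying by $nnz$ yields a $(1\pm O(\epsilon))$-approximation to the number of non-zero columns; rescaling $\epsilon$ by a constant finishes the proof. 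The only delicate step is the composition of approximations and the union bound: we must commit the cached $\textsc{ApproxColumnSum}$ values \emph{before} sampling from $D_{global}$ (so that the samples are i.i.d.\ draws from a fixed distribution whose mean is close to the true one), and we must ensure the cutoff $\sigma$ is chosen large enough that Lemma~\ref{lem:MeanEstimation} succeeds with probability $1-n^{-\Omega(1)}$. Once that ordering of randomness is pinned down, the remainder is a routine combination of two high-probability estimates.
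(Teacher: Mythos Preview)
Your proposal is correct and follows essentially the same approach as the paper: fix (cache) all the $\textsc{ApproxColumnSum}$ randomness up front so that $D_{global}$ becomes a fixed $[0,1]$-valued distribution, use Lemma~\ref{lem:ColumnSumEstimation} plus a union bound to show its mean is within $1\pm\epsilon$ of the exact value given by Lemma~\ref{lem:EstimatorCorrectness}, and then invoke Lemma~\ref{lem:MeanEstimation} and rescale $\epsilon$. Your explicit remark about committing the cached column-sum values before sampling is exactly the paper's ``extract out all the randomness'' step.
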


\begin{proof}
To begin with, we explicitly extract out all the randomness
in Algorithm~\ref{fig:SlowerEstimateNonZeroColumns} considering
running all calls to 
$\textsc{ApproxColumnSum}(A, r, j, \epsilon, n^{-O(1)})$ 
beforehand.

By Lemma~\ref{lem:ColumnSumEstimation}, we have that with
high probability we have for each $j$,
\[
\left( 1 - \epsilon \right)
\textsc{ColumnSum}_A\left( j \right)
\leq
\textsc{ApproxColumnSum}\left(A, r, j, \epsilon, n^{-O\left(1\right)}\right)
\leq
\left( 1 + \epsilon \right)
\textsc{ColumnSum}_A\left( j \right).
\]

So by Lemma~\ref{lem:EstimatorCorrectness}, we have that
the expectation of $D_{global}$, $\mu(D_{global})$, is
within $1 \pm \epsilon$ of the true value with high probability.
Formally,
\[
\left( 1 - \epsilon \right)
\frac{\left|\left\{c: A_{:, c} \neq 0\right\}\right|}{nnz}
\leq
\mu \left( D_{global} \right)
\leq
\left( 1 + \epsilon \right)
\frac{\left|\left\{c: A_{:, c} \neq 0\right\}\right|}{nnz}.
\]
Incorporating the guarantee of Lemma~\ref{lem:MeanEstimation},
part~\ref{part:Accuracy} then gives:
\[
\left( 1 - 3\epsilon \right)
\frac{\left|\left\{c: A_{:, c} \neq 0\right\}\right|}{nnz}
\leq
\textsc{EstimateMean}\left(
  \mu \left( D_{global} \right),
  \epsilon
\right)
\leq
\left( 1 + 3\epsilon \right)
\frac{\left|\left\{c: A_{:, c} \neq 0\right\}\right|}{nnz}.
\]

The desired bound then follows from halving $\epsilon$,
and the final multiplication by $r$ on the last line.
\end{proof}

\begin{proof}(Of Lemma~\ref{lem:NonZeroColumnEstimator}
with a worse factor of $O(\log^{2}n \epsilon^{-4})$)

The correctness follows from Lemma~\ref{lem:ApproxnonZeroColumnsHelper},
so it suffices to bound the total number of queries made to entries of $A$.
Furthermore, Part~\ref{part:RunTime} of Lemma~\ref{lem:MeanEstimation}
gives that the expected number of queries made to
$D_{global}$ is:
\[
O\left(
\frac{nnz \cdot \log{n}}
{\left|\left\{c: A_{:, c} \neq 0\right\}\right| \epsilon^{2}}
\right).
\]
So it suffices to bound the expected cost of each evaluation of $D_{global}$.

Applying Lemma~\ref{lem:ColumnSumEstimation} to every column $j$ gives that
w.h.p. the number of queries made by $\textsc{ColumnSum}_{A}(j)$ is at most
\[
O\left( \frac{r \log{n}}{\textsc{ColumnSum}_{A}\left(j\right) \epsilon^{2}} \right).
\]
Summing this over all $\textsc{ColumnSum}_{A}(j)$ entries in that column,
as well as all the non-zero columns gives that the expected number of queries
when querying for a single entry of $D_{global}$ is:
\begin{multline*}
\frac{1}{nnz}
\sum_{j: \textsc{ColumnSum}_{A}\left(j\right) \neq 0}
\sum_{i: A_{i, j} \neq 0}
O\left( \frac{r \log{n}}{\textsc{ColumnSum}_{A}\left(j\right) \epsilon^{2}} \right)\\
=
\frac{1}{nnz}
\sum_{j: \textsc{ColumnSum}_{A}\left(j\right) \neq 0}
O\left( \frac{r \log{n}}{\epsilon^{2}} \right)\\
=
O\left(
\frac{\left|\left\{c: A_{:, c} \neq 0\right\}\right| \cdot r \log{n}}
{nnz \cdot \epsilon^{2}}
\right).
\end{multline*}
Multiplying this with the expected number of queries to $D_{global}$
then gives the overall result.
\end{proof}
We remark that the runtime bound also holds with
high probability instead of in expectation if we invoke Chernoff bounds.
This is because the cost of each query to $D_{global}$ is bounded
by $O(r \log{n} \epsilon^{-2})$, and the total cost bound is larger
by a factor of at least $\log{n}$.

\subsection{Estimating Mean of a Distribution}
\label{subsec:MeanEstimation}

We now provide the details of the mean estimation algorithm,
which also gives the correctness of the column sum estimation scheme.

We analyze an equivalent scheme which
generates the same output:
\begin{enumerate}
	\item Generate a stream of infinite i.i.d. samples from $D$,
        denoted as $X_1, X_2, \cdots$;
	\item Let $\variable{counter}$ be
        $\mathop {\arg \min }\limits_{t > 0}
        \left\{ {\sum\limits_{i \le t} {{X_i}}  \ge \sigma } \right\}$;
	\item Output $\sigma/\variable{counter}$.
\end{enumerate}
This process evaluates more samples than \textsc{EstimateMean}
(from Figure~\ref{fig:EstimateMean}).
However, the extra evaluations happen after the termination
of that process.
So it does not affect the outcome.

We will bound the success probability by bounding
the partial sum of $\{X_{i}\}$ at two points.
These two points are defined based on the
(hidden) value of $\mu$,
the expectation of the distribution $D$.
For a distribution $D$,
and an error $\epsilon > 0$,
we make two marks at:
\begin{equation}
L_{D, \epsilon}
\defeq
\frac{\sigma }{{(1 + \varepsilon ) \mu }},
\label{eq:L}
\end{equation}
and
\begin{equation}
R_{D, \epsilon}
\defeq
\frac{\sigma }{{(1 - \varepsilon ) \mu }}.
\label{eq:R}
\end{equation}

By some algebra,
we can check that if we terminate with
\[
L_{D, \epsilon} \leq \variable{counter} \leq R_{D, \epsilon},
\]
then the final outcome is good.
So it suffices to bound the probability of $\variable{counter}<L$
and $\variable{counter}>R$ separately.


\begin{lemma}
\label{lem:ProbCounterTooSmall}
For any sequence $X_1, X_2 \ldots $
generated by taking i.i.d. copies of
a random variable $D$,
and with we have
\[
\prob{X_1, X_2, \ldots}
{ {\sum\limits_{1 \le i \le L_{d, \epsilon}} {{X_i}}  \ge \sigma } }
\leq
\exp \left( - \frac{\epsilon^2 \sigma}{4} \right).
\]
\end{lemma}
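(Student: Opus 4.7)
The plan is to recognize this as a direct application of a multiplicative Chernoff bound for sums of bounded i.i.d.\ random variables. Since $D$ is supported on $[0,1]$ and each $X_i \sim D$ has mean $\mu$, letting $n = L_{D,\epsilon} = \sigma/((1+\epsilon)\mu)$ (treating it as an integer, or rounding appropriately), the expectation of the partial sum is
\[
\E\!\left[\sum_{1 \le i \le n} X_i\right] = n\mu = \frac{\sigma}{1+\epsilon}.
\]
Thus the event $\sum_{i \le n} X_i \ge \sigma$ is precisely the event that this nonnegative bounded sum exceeds its mean by a multiplicative factor of $(1+\epsilon)$.

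First I would invoke the standard (upper-tail) multiplicative Chernoff bound for $[0,1]$-valued independent random variables, which gives
\[
\prob{}{\sum_{i \le n} X_i \ge (1+\epsilon) n\mu} \leq \exp\!\left(-\frac{\epsilon^2 n\mu}{2+\epsilon}\right).
\]
Substituting $n\mu = \sigma/(1+\epsilon)$ yields an exponent of $\epsilon^2 \sigma / ((1+\epsilon)(2+\epsilon))$. Then I would use $\epsilon \le 1$ (which is the only regime of interest for the approximation) together with $(1+\epsilon)(2+\epsilon) \le 4$ whenever $\epsilon$ is small enough, or more cleanly bound $(1+\epsilon)(2+\epsilon) \le 6$ in general and absorb the slack; in either case the resulting bound is at most $\exp(-\epsilon^2 \sigma/4)$ as claimed (with the constant in the lemma being loose enough to accommodate either route).

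The only genuinely non-routine issue is a minor bookkeeping step: $L_{D,\epsilon}$ is defined as a real number, not an integer, so strictly speaking the sum $\sum_{1 \le i \le L_{D,\epsilon}} X_i$ should be interpreted as summing up to $\lfloor L_{D,\epsilon}\rfloor$. Taking the floor only decreases the number of summands and therefore decreases the partial sum stochastically, so the failure probability can only get smaller; this justifies passing to an integer value before applying Chernoff. I expect no real obstacle beyond citing the correct Chernoff variant and keeping the constants straight; the symmetric lower-tail argument for $\variable{counter} > R_{D,\epsilon}$ (needed elsewhere in Lemma~\ref{lem:MeanEstimation}) will be entirely analogous, using the fact that $\sum_{i \le R_{D,\epsilon}} X_i < \sigma$ is equivalently a lower-tail deviation of a factor $(1-\epsilon)$ below the mean.
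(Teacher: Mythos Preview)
Your proposal is correct and follows essentially the same approach as the paper: compute the mean $\mu L_{D,\epsilon}=\sigma/(1+\epsilon)$, apply a multiplicative upper-tail Chernoff bound for i.i.d.\ $[0,1]$-valued variables, and then simplify the constant in the exponent for small $\epsilon$ to reach $\exp(-\epsilon^2\sigma/4)$. The paper uses the $\exp(-\epsilon^2 n\mu/3)$ variant of Chernoff rather than the $\exp(-\epsilon^2 n\mu/(2+\epsilon))$ form you cite, but this is cosmetic; your additional remark about flooring $L_{D,\epsilon}$ is a detail the paper elides.
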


\begin{proof}
Linearity of expectation gives:
\[
\expec{X_1, X_2 \ldots }
{\sum_{1 \le i \le L_{D, \epsilon}} {{X_i}} }
= \mu L_{D, \epsilon}.
\]
So as $X_{1} \ldots X_{L_{D, \epsilon}}$ are i.i.d.,
we get
\[
\prob{
X_{1}, X_{2} \ldots
}
{
  \sum_{1 \le i \le L} {{X_i}}
  \geq
  \left(1 + \varepsilon \right) \mu L_{D, \epsilon}
}
\leq
\exp \left( - \frac{{{\varepsilon ^2}{\mu} L_{D, \epsilon} }}{3} \right)
\]

which directly imples the lemma 
by taking $(1+\epsilon)\mu L_{D, \epsilon} = \sigma$ into the left-hand side and $\frac{\mu L_{D, \epsilon}}{3} > \frac{\sigma}{4}$ for small enough $\epsilon$ into the right-hand side.

\end{proof}

\begin{lemma}
\label{lem:ProbCounterTooBig}
For any sequence $X_1, X_2 \ldots $
generated by taking i.i.d. copies of
a random variable $D$,
and with we have
\[
\prob{X_1, X_2, \ldots}
{ {\sum\limits_{1 \le i \le R_{d, \epsilon}} {{X_i}}  \leq \sigma } }
\leq
\exp \left( - \frac{\epsilon^2 \sigma}{4} \right).
\]
\end{lemma}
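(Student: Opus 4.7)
The plan is to mirror the strategy used for Lemma~\ref{lem:ProbCounterTooSmall}, but using the lower-tail Chernoff bound instead of the upper-tail one. First, by linearity of expectation,
\[
\expec{X_1, X_2, \ldots}{\sum_{1 \le i \le R_{D, \epsilon}} X_i} = \mu R_{D, \epsilon} = \frac{\sigma}{1 - \epsilon},
\]
so the event $\sum_i X_i \le \sigma$ is exactly the event that the sum falls below $(1 - \epsilon)$ times its expectation.

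Next, since each $X_i \in [0,1]$ is an i.i.d.\ copy of $D$ and we are summing $R_{D,\epsilon}$ of them, the standard multiplicative Chernoff bound for the lower tail yields
\[
\prob{X_1, X_2, \ldots}{\sum_{1 \le i \le R_{D,\epsilon}} X_i \le (1 - \epsilon) \mu R_{D, \epsilon}}
\le \exp\!\left(-\frac{\epsilon^2 \mu R_{D,\epsilon}}{2}\right).
\]
Substituting $\mu R_{D, \epsilon} = \sigma/(1-\epsilon)$ into the exponent gives a bound of $\exp\!\bigl(-\epsilon^2 \sigma/(2(1-\epsilon))\bigr)$, which for $\epsilon$ small enough (say $\epsilon \le 1/2$) is at most $\exp(-\epsilon^2 \sigma/4)$, matching the claim.

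There is no real obstacle here: the argument is essentially symmetric to Lemma~\ref{lem:ProbCounterTooSmall}. The only things to be careful about are (i) picking the correct direction of the Chernoff tail, since now we want the sum to be \emph{below} its mean rather than above, and (ii) verifying the constant in the exponent after substituting in the definition of $R_{D,\epsilon}$ from Equation~(\ref{eq:R}); both come down to the inequality $\tfrac{1}{2(1-\epsilon)} \ge \tfrac14$ for sufficiently small $\epsilon$, which is exactly the same kind of slack that was exploited in the proof of the companion lemma.
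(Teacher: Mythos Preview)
Your proposal is correct and is exactly the argument the paper has in mind: its proof of Lemma~\ref{lem:ProbCounterTooBig} reads in full ``Similar to proof of Lemma~\ref{lem:ProbCounterTooSmall}, but with lower end of Chernoff bounds.'' You have simply written out that one-line sketch in detail, with the same substitution of $\mu R_{D,\epsilon}=\sigma/(1-\epsilon)$ and the same kind of slack in the constant.
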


\begin{proof}
Similiar to proof of Lemma~\ref{lem:ProbCounterTooSmall},
but with lower end of Chernoff bounds.
\end{proof}

\begin{proof}(Of Lemma~\ref{lem:MeanEstimation} with an additional overhead of $\epsilon^{-2}$)

Consider the routine $\textsc{SlowerEstimateNonZeroColumns}$
whose pseudocode is in Figure~\ref{fig:SlowerEstimateNonZeroColumns}.
The running time (Part~\ref{part:RunTime})
is an immediate consequence of
the bound on $\variable{counter} \leq R_{D, \epsilon}$
from Lemma~\ref{lem:ProbCounterTooBig}.
So it remains to bound the (Part~\ref{part:Accuracy}).
Recall that the estimator is
\[
\frac{\sigma }{\variable{counter}}
\]
Apply union bound over Lemma~\ref{lem:ProbCounterTooSmall}
and Lemma~\ref{lem:ProbCounterTooSmall}, we get
\[
\prob{X_1, X_2 \ldots}
{
  L_{D, \epsilon} \leq \textsc{Counter} \leq R_{D, \epsilon}
}
\geq
1 - 2 \exp \left(
  -\frac{\epsilon^2 \sigma}{4}
\right)
\geq
1 - \exp \left(
  -\frac{\epsilon^2 \sigma}{5}
\right).
\]

Putting in the definition of $L_{D, \epsilon}$
and $R_{\epsilon}$ gives this is the same as
\[
\frac{\sigma}
{\left( 1 + \epsilon\right) \overline{\mu} }
\leq
\textsc{Counter}
\leq
\frac{\sigma}
{\left( 1 - \epsilon\right) \overline{\mu} },
\]
which is in turn equivalent to
\[
\left( 1 - \epsilon \right) \overline{\mu}
\leq
\frac{\sigma}{\textsc{Counter}}
\leq
\left( 1 + \epsilon \right) \overline{\mu},
\]
or the estimator $\frac{\sigma}{\textsc{Counter}}$
is a good approximation to $\overline{\mu}$.

\end{proof}

\subsection{More holistic analysis with better bounds}
\label{subsec:DegreeEstimation_Better}

We now give a better running time bound by combining
the analyses of the two estimators in a more global,
holistic analysis.
Pseudocode of the final routine is in Figure~\ref{fig:EstimateNonZeroColumns}.

\begin{figure}[H]
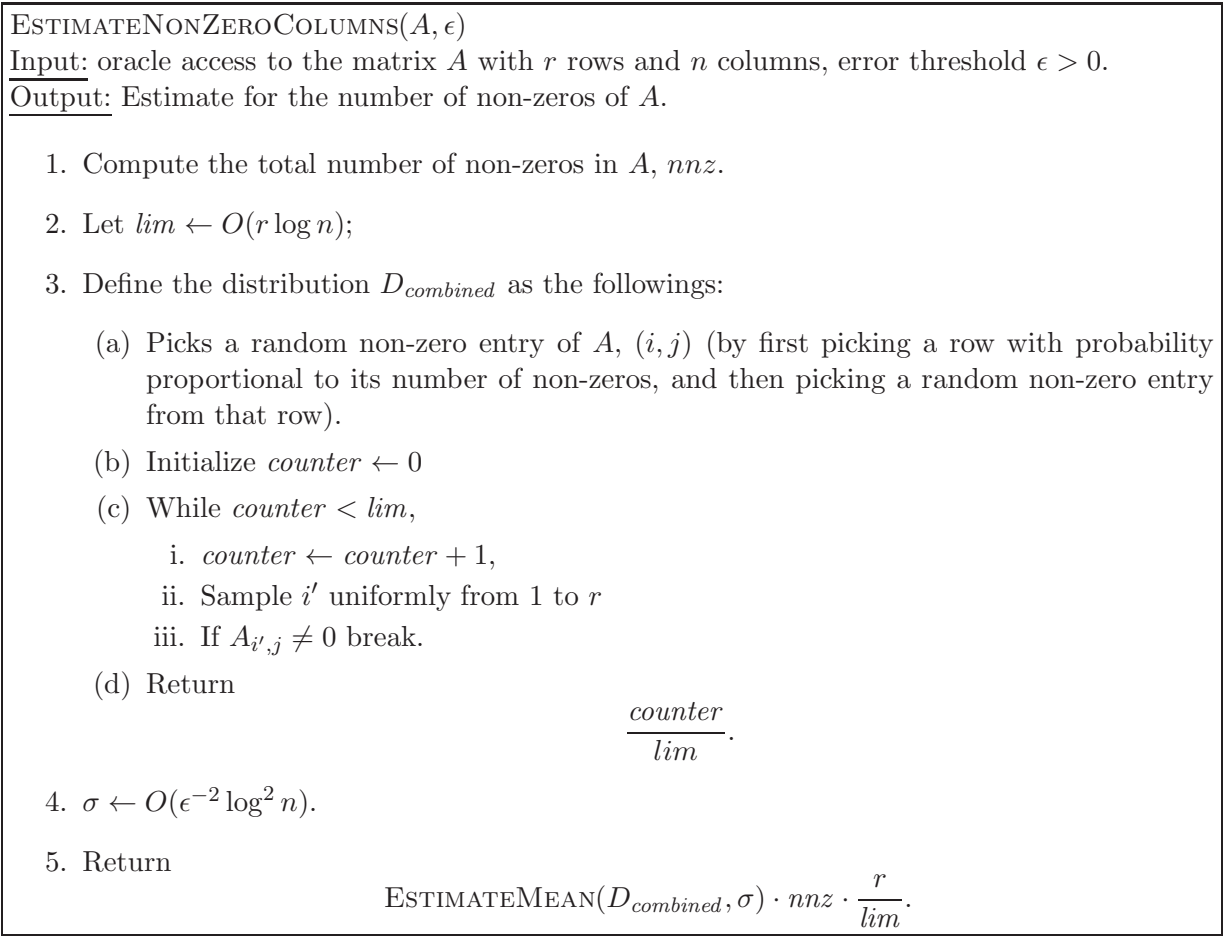


\begin{algbox}
$\textsc{EstimateNonZeroColumns}(A, \epsilon)$

\underline{Input:} oracle access to the matrix $A$
with $r$ rows and $n$ columns,
error threshold $\epsilon > 0$.

\underline{Output:} Estimate for the number of non-zeros of $A$.

\begin{enumerate}

\item Compute the total number of non-zeros in $A$, $nnz$.
\item Let $\variable{lim} \leftarrow O(r \log n)$;
\item Define the distribution $D_{\variable{combined}}$ as the followings:
  \begin{enumerate}
  \item Picks a random non-zero entry of $A$, $(i, j)$
  (by first picking a row with probability proportional to its
  number of non-zeros, and then picking a random non-zero entry from that row).
  \item Initialize $\variable{counter} \leftarrow 0$
  \item While $\variable{counter} < \variable{lim}$,
     \begin{enumerate}
       \item $\variable{counter} \leftarrow \variable{counter} + 1$,
       \item Sample $i'$ uniformly from $1$ to $r$
       \item If $A_{i',j} \neq 0$ break.
	\end{enumerate}
  \item Return
    \[
      \frac{\variable{counter}}{lim}.
    \]
\end{enumerate}
\item $\sigma \leftarrow O(\epsilon^{-2}\log^2n)$.
\item Return 
\[
\textsc{EstimateMean}(D_{\variable{combined}}, \sigma)
\cdot \variable{nnz} \cdot \frac{r}{\variable{lim}}.
\]
\end{enumerate}

\end{algbox}

\caption{Pseudocode for Fast Estimating the Number of Non-Zero Columns of a Matrix}

\label{fig:EstimateNonZeroColumns}

\end{figure}

At the core of this algorithm is the following
simpler, combined distribution.
\begin{definition}
\label{def:DCombined}
We define the combined distribution, $D_{\variable{combined}}$
as the distribution given by:
\begin{enumerate}
\item Sampling a non-zero entry $(i, j)$ from $A$ uniformly at random.
\item Return $1/r$ times the minimum of $O(r \log{n})$ or
the number of random $i'$s picked until a $A_{i', j}$ is non-zero.
\end{enumerate}
\end{definition}

This combined distribution is artifically capped at $1$.
More importantly, we can precisely calculate its expected value,
up to a $1 / poly(n)$ perturbation due to the truncation
at $\variable{lim}$.
\begin{lemma}
\label{lem:DCombinedExpectation}
The distribution $D_{\variable{combined}}$
as defined in Definition~\ref{def:DCombined} has expectation
\[
\left( 1 - \frac{1}{n} \right)
\frac{r \cdot \left|\left\{c: A_{:, c} \neq 0\right\}\right|}{nnz \cdot \variable{lim}}
\leq
\expec{}{D_{\variable{simple}}} 
\leq
\frac{r \cdot \left|\left\{c: A_{:, c} \neq 0\right\}\right|}{nnz \cdot \variable{lim}},
\]
and the expected cost of sampling $D_{\variable{simple}}$ is
\[
O\left( \frac{r \cdot \left|\left\{c: A_{:, c} \neq 0\right\}\right|}{nnz} \right).
\]
\end{lemma}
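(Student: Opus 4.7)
The plan is to analyze the inner loop of $D_{\variable{combined}}$ conditional on the initially sampled non-zero entry $(i,j)$, and then take expectation over that initial choice. Conditional on $j$, each resample $i'$ is an i.i.d.\ uniform row index, so the event $A_{i',j} \ne 0$ occurs with probability $p_j \defeq \textsc{ColumnSum}_A(j)/r$. Hence the unbroken value of $\variable{counter}$ is a geometric random variable $G_j$ with parameter $p_j$, and the actual value we return is $\min(G_j, \variable{lim})/\variable{lim}$.

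First I would handle the expectation. For the upper bound, note $\min(G_j, \variable{lim}) \le G_j$, so $\expec{}{\min(G_j,\variable{lim}) \mid j} \le 1/p_j = r/\textsc{ColumnSum}_A(j)$. Averaging over the initial uniform non-zero $(i,j)$ and using
\[
\sum_{(i,j): A(i,j) \ne 0} \frac{1}{\textsc{ColumnSum}_A(j)}
= \sum_{j: \text{col } j \text{ non-zero}} 1
= \left|\{c: A_{:,c} \ne 0\}\right|
\]
as in Lemma~\ref{lem:EstimatorCorrectness}, the $1/r$ and $1/\variable{lim}$ factors combine to give exactly $r \cdot |\{c: A_{:,c} \ne 0\}|/(nnz \cdot \variable{lim})$. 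For the lower bound, the only loss comes from truncating $G_j$ at $\variable{lim}$. Since $(i,j)$ is a non-zero entry, $p_j \ge 1/r$, so for $\variable{lim} = c_0 r \log n$ with $c_0$ chosen large enough,
\[
\prob{}{G_j > \variable{lim}} = (1 - p_j)^{\variable{lim}} \le \left(1 - \frac{1}{r}\right)^{c_0 r \log n} \le n^{-c_0}.
\]
A standard truncated-geometric calculation bounds $\expec{}{G_j - \min(G_j,\variable{lim}) \mid j}$ by $\variable{lim} \cdot \prob{}{G_j>\variable{lim}} + \expec{}{G_j \mathbf{1}[G_j>\variable{lim}]} = O(n^{-c_0+1}/p_j)$, which contributes at most a $(1 - 1/n)$ multiplicative factor to the uncapped bound after choosing $c_0$ appropriately; combining with the upper bound gives the claimed two-sided estimate.

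For the sampling cost, observe that each execution of the inner loop performs one oracle access per iteration, so the number of oracle calls equals $\min(G_j, \variable{lim})$, which is at most $G_j$. Taking expectation over $(i,j)$,
\[
\expec{}{\text{cost}}
\le \frac{1}{nnz}\sum_{(i,j): A(i,j)\ne 0} \frac{1}{p_j}
= \frac{1}{nnz}\sum_{j: \text{non-zero}} \textsc{ColumnSum}_A(j) \cdot \frac{r}{\textsc{ColumnSum}_A(j)}
= \frac{r \cdot |\{c: A_{:,c} \ne 0\}|}{nnz},
\]
yielding the stated $O$-bound. The main technical care is in the lower-bound part of the expectation computation: I need to show that truncation at $\variable{lim}$ loses at most a $(1 - 1/n)$ factor \emph{uniformly} across columns $j$, including columns whose $p_j$ is as small as $1/r$. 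This is exactly where the choice $\variable{lim} = \Theta(r\log n)$ is required, and it is the only step where the hidden constant matters; the rest is bookkeeping that reuses the identity from Lemma~\ref{lem:EstimatorCorrectness}.
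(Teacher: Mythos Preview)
Your proposal is correct and follows essentially the same route as the paper: condition on the sampled column $j$, recognize the inner loop as a geometric random variable with parameter $p_j = \textsc{ColumnSum}_A(j)/r$, bound the truncation loss at $\variable{lim} = \Theta(r\log n)$ via $(1-p_j)^{\variable{lim}} \le n^{-O(1)}$, and then average over the uniform non-zero entry using the identity from Lemma~\ref{lem:EstimatorCorrectness}. The only cosmetic difference is that the paper derives $\expec{}{G_j} = 1/p_j$ via a generating-function computation, whereas you simply quote the mean of a geometric; your truncation bookkeeping is slightly more explicit, but the content is the same.
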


\begin{proof}

For each column $j$, denote $n_j$ to be the number of 
non-zeros in column $j$ of $A$. 
For the ease of representation,
we use $p$ to denote the probability of picking a
non-zero from this column,
\[
p \defeq \frac{n_j}{r}.
\]
As there is at least one non-zero entry 
on the column $j$ in step 3,
we should assume $n_j \geq 1$ for all time.
Hence, $p$ is always positive.

Next, define the event $\textsc{Hit}_j$ 
as getting a non-zero $A_{i', j}$ by uniformly sampling $i'$ at 3(a).
Let $h_j$ be the number of independent repeats of $\textsc{Hit}_j$ 
to make one happening without restriction of the iterations.
Then we have, for any integer $k$,
\[
\prob{h_j}{h_j=k}
= {\left( {1 - p} \right)^{k - 1}} \cdot p
\]
So its expectation is given by
\[
\expec{h_{j}}{h_j}
= 
\sum_{1 \le k}
\prob{h_j}{h_j = k} \cdot k
= 
\sum_{1 \le k}
\left(1 - p\right)^{k - 1} \cdot p \cdot k
\]

To compute this value, consider the generating function
\[
G\left(x\right)
\defeq
\sum_{1 \le k} x^{k - 1} \cdot k.
\]
Its integral is:
\[
\int_0^x {G\left(y\right)dy}  
= 
\int_0^x {\left( {\sum_{1 \le k} {{y^{k - 1}}k} } \right) \cdot dy}  
= 
\sum\limits_{1 \le k} {\int_0^x {{y^{k - 1}}kdy} }  
= 
\sum\limits_{1 \le k} {{x^k}}  
= 
\frac{x}{{1 - x}}.
\]

Then we have
\[
G\left(x\right) 
= 
\frac{{d\left( {\int_0^x {G(y)dy} } \right)}}{{dx}} 
= 
\frac{{d\left( \frac{x}{1 - x} \right)}}{{dx}} 
= 
\frac{1}{{{{(1 - x)}^2}}} .
\]
Taking it back to the expectation, we get
\[
\expec{}{h_j} 
= p \cdot G(1 - p)
= p \cdot \frac{1}{p^2}
= \frac{1}{p}.
\]
This means if our initial entry is in column $j$,
the expected value of 

To account for the truncation, note that
\[
\prob{h_j}{{h_j} \ge m}
= 
{\left( {1 - p} \right)^m} 
= 
O\left( {{n^{ - O\left(1\right)}}} \right).
\]
So this changes the expectation by at most $n^{-O(1)}$.
So we have
\[
\frac{r}{n_j \cdot \variable{lim}}
- n^{-O\left(1\right)}
\leq 
\expec{D_{\variable{combined}}
  \left| \text{column $j$ is picked} \right.
}{D_{\variable{simple}}}
\leq
\frac{r}{n_j \cdot \variable{lim}}
\]
and substituting back the probability that we
pick column $j$ with probability
\[
\frac{n_j}{nnz}
\]
gives
\[
\frac{r \cdot \left|\left\{c: A_{:, c} \neq 0\right\}\right|}{nnz \cdot \variable{lim}} - n^{-O\left(1\right)}
\leq
\expec{}{D_{\variable{combined}}} 
\leq
\frac{r \cdot \left|\left\{c: A_{:, c} \neq 0\right\}\right|}{nnz \cdot \variable{lim}},
\]
and accuracy bound follows from the observation that the numerator is at least $1$
and the denominator is bounded by $nnz \cdot \lim \leq n^{4}$.

The expected running time also follows similarly,
except we do not divide the number of terms sampled by $\variable{lim}$.
\end{proof}

This means we can then invoke $\textsc{EstimateMean}$
to approximate the mean of this distribution,
and thus gives our overall guarantees.

\begin{proof}(Of Lemma~\ref{lem:NonZeroColumnEstimator} )

Notice that the value of $D_{\variable{simple}}$
is always between $0$ and $1$ due to the truncation by $\variable{lim}$,
and then dividing by it.
By Lemma~\ref{lem:MeanEstimation}, Part~\ref{part:Accuracy},
we have that with high probability we obtain a
 $1+\epsilon$ approximation of its expectation,
which after the multiplication by $nnz \frac{\variable{lim}}{r}$
gives an $1 + \epsilon$ approximation to the number
of non-zero columns.

It remains to bound the running time costs.
Lemma~\ref{lem:MeanEstimation}, Part~\ref{part:RunTime}
gives that the expected number of times we sample
$D_{\variable{combined}}$ is at most
\[
\frac{nnz \cdot \variable{lim} \cdot \log{n} \epsilon^{-2}}
{r \cdot \left|\left\{c: A_{:, c} \neq 0\right\}\right|},
\]
while the expected cost per call is
\[
O\left( \frac{r \cdot \left|\left\{c: A_{:, c} \neq 0\right\}\right|}{nnz} \right).
\]
Multiplying these gives that the expected total cost is
\[
O\left( \variable{lim} \log{n} \epsilon^{-2} \right)
= O\left(r \log^2{n} \epsilon^{-2} \right).
\]
Furthermore, since the cost per call is capped at $\lim = O(r \log{n})$,
we also get that the runtime cost is concentrated around this value
with high probability.
\end{proof}

\section{Maintaining Graphs Under Pivots}
\label{sec:DynamicGraphs}

In this section we show that both the random graph access
operations described in Theorem~\ref{thm:DegreeEstimation},
and the $\ell_0$-estimator as described in Definition~\ref{def:Sketch}
can be maintained efficiently under pivoting operations.

We start by checking that the component graph can be
maintained under pivoting of new vertices while providing
random-access to remaining and component neighbors of any vertex.

\begin{proof} (of Lemma~\ref{lem:DegreeEstimationDS})
We will maintain the adjacency list of $G^{\circ}$
explicitly, with each node storing its state as well
as its neighbors in balanced binary search trees.

When we pivot a vertex, we examine all its neighbors
that are component vertices, and merge the neighborhood
lists of these vertices.
By always inserting elements from the smaller list
into the larger, we can guarantee that each element
is inserted at most $O(\log{n})$ times.
So across all $m$ edges the total cost is $O(m \log^2{n})$.

When a vertex is pivoted, we also need to move it
from the remaining vertex list to the component vertex
list in all of its neighbors.
This can be done by going through all the edges
of the vertex once.
Its cost is $O(m)$ because each vertex is pivoted
at most once, and prior to pivoting no edges are added
to it.

These binary search trees with all the neighbors allow
for the sampling of a random component/remaining
neighbors in $O(\log{n})$ time.
A global list maintaining all the component/remaining
vertices also allow them to be sampled in $O(\log{n})$ time.
\end{proof}

The $\ell_0$-sketches are maintained similarly
as the graph changes.
However, the minimum sketch value in each neighborhood
keeps changing, and we deal with this by propagating
new values proactively across to neighbors.

The algorithm is based on the notion of an eager-propagation
routine: every time the $x_{\min}$ at some vertex changes,
it informs its neighbors of this change, and then
in turn propagates this value.

This routine in the static case is the same as the $\ell_0$
estimators.
Our main modification is to make it dynamic:
after each pivot, the minimum per vertex can increase.
Figure~\ref{fig:DynamicGlobalVar} contains a brief description of data structures we use to achieve this.

\begin{figure}[H]
  \begin{algbox}
    \textsc{Maintain graphs under pivots}
    
    \underline{Additional Variables}:
    graph $G$ that's undergoing pivots,
    .
    
    \begin{enumerate}
      \item Set $V_{remaining}$ containing the remaining vertices.
      \item Set $V_{component}$ containing the component vertices.
      
      \item For each component vertex $z$, an associated min-heap
      \[
      z.\mathit{remaining}
      \] that contains the keys of its remaining neighbors.

  	  \item For each remaining vertex $u$, a min-heap
      \[
         u.\mathit{fill}
      \]
      that contains the union of $z.\mathit{remaining}.\mathit{min}$
      for each component vertex $z$ adjacent to $u$
      as well as the keys of $u$'s remaining neighbors.
    \end{enumerate}
    
  \end{algbox}
  
  \caption{Global Variables for maintaining data structures under vertex pivots}
  \label{fig:DynamicGlobalVar}
\end{figure}

In the case where nothing gets deleted,
the following observation is sufficient for bounding the
cost of the propagations.
\begin{lemma}
\label{lem:MinChanges}
For any sequence of increasing sets
\[
S^{(0)}, S^{(1)}, \ldots S^{(t)},
\]
the expected number of different minimums over a
random labeling of the elements to $[0, 1]$
is $O(\log{n})$.
\end{lemma}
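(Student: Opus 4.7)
The plan is to reduce the problem to the classical record-counting bound for uniformly random sequences. Let $U = \bigcup_{i=0}^{t} S^{(i)} = S^{(t)}$, so $|U| \le n$. For each element $v \in U$, define its arrival time $\tau(v) = \min\{i : v \in S^{(i)}\}$. Enumerate the elements of $U$ as $v_1, v_2, \dots, v_{|U|}$ in nondecreasing order of $\tau(v_k)$, with ties broken arbitrarily but deterministically. Under the random labeling, let $\ell(v)$ denote the label of $v$.

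The first step is to show that the number of distinct values taken by the sequence $M_0, M_1, \dots, M_t$ (where $M_i = \min_{v \in S^{(i)}} \ell(v)$) is bounded above by the number of \emph{left-to-right minima} of the sequence $\ell(v_1), \ell(v_2), \dots, \ell(v_{|U|})$. Indeed, each distinct value $M_i$ is realized by a unique element $v \in U$, and that element must have a smaller label than every $v_j$ with $\tau(v_j) < \tau(v)$; in particular it has a smaller label than every $v_j$ preceding it in the chosen enumeration, so it is a record in the scan. (Two different $M_i$'s are achieved by two different record elements, since the minimum only decreases when a strictly smaller label is added.)

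The second step is to invoke the standard fact that the expected number of left-to-right minima in a sequence of $|U|$ i.i.d.\ continuous random variables is
\[
\sum_{k=1}^{|U|} \frac{1}{k} = H_{|U|} = O(\log n),
\]
which follows from linearity of expectation and the observation that, by symmetry, the $k$-th element has the smallest label among the first $k$ with probability exactly $1/k$. Combining the two steps gives the claimed $O(\log n)$ bound.

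There is no serious obstacle: the only subtlety is verifying that the correspondence ``distinct minimum $\mapsto$ record'' is well-defined and injective even when $S^{(i)} \setminus S^{(i-1)}$ contains several elements simultaneously, which is handled by fixing any deterministic tie-breaking rule in the enumeration so that all randomness sits in the labels. An equivalent backward-analysis variant (at each step $i$, the conditional probability that $M_i \neq M_{i-1}$ equals $|S^{(i)} \setminus S^{(i-1)}|/|S^{(i)}|$, and telescoping yields $\sum_i \ln(|S^{(i)}|/|S^{(i-1)}|) = \ln(|S^{(t)}|/|S^{(0)}|) = O(\log n)$) gives the same result and could be used interchangeably.
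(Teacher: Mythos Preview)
Your proof is correct. The paper actually states this lemma without proof, treating it as a standard observation about random labelings (it is essentially the classical bound on the expected number of records in a uniformly random permutation). Your reduction to left-to-right minima is the canonical argument: the key point, which you handle correctly, is that whenever the running minimum changes at step $i$, the new minimizer is the global minimum of $S^{(i)}$ and hence beats every element that precedes it in any arrival-time-ordered enumeration, so it is a record regardless of how ties within a single step are broken. The backward-analysis variant you sketch is also fine; just note that the bound $\sum_i (|S^{(i)}|-|S^{(i-1)}|)/|S^{(i)}| \le \ln(|S^{(t)}|/|S^{(0)}|)$ uses $1-x \le -\ln x$ rather than a literal telescope, and one should add $1$ for the initial minimum (and start from the first nonempty set if $S^{(0)}=\emptyset$).
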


The major difficulty dealing with this is that
deletions reduce degrees.
In particular, it is possible for the min at some vertex
to change $\Omega(n)$ times due to repeated deletions.

As a result, we can only bound the total, or average
number of propagations.
This leads to a much more involved amortized
analysis, where we also use backwards analysis to explicitly
bound the probability of each informing operation.

Given a component graph $G^{(t)}$ and a (remaining)
vertex $u$ to be pivoted, we use the routine \textsc{PivotVertex}
to produce a new graph $G^{(t+1)}$.
In terms of the structure of the graph, our routine does the same
thing as the traditional quotient graph model for symmetric
factorization~\cite{GeorgeL81}.

Therefore we turn our attention to the problem of maintaining the minimum sketch
values of the neighborhoods, specifically the values
$\xmin(N^{G^{(t)}}_{remaining}(v))$ and
$\xmin(N^{G^{(t)}}_{fill}(v))$.
This update procedure is basically a notification mechanism.
When the status of a vertex changes, we update the data structures of its
neighbors correspondingly.
The \textit{Fill} heap will then give $\xmin(N_{fill}(u))$ and be
used to estimate the fill-degree of each remaining vertex
as described in Section~\ref{sec:Sketching}.

These two sets of heaps are then maintained via a notification mechanism.
Suppose a remaining vertex $v$ is pivoted.
Then, for a component vertex $w$, the content of $w.\mathit{remaining}$
changes only if $v$ is its neighbor.
Pseudocode of this update is given in Figure~\ref{fig:PivotVertex}.
In particular, since $v$ is no longer a remaining vertex, its entry needs to be removed from
$z.\mathit{remaining}$.
Furthermore, if $x_v$ was the minimum element in $w.\mathit{remaining}$,
this is no longer the case and the other remaining neighbors of $w$ need to be notified of
this (so they can update their $\mathit{fill}$ heaps).
This is done via the call to $\textsc{Informremaining}$ in
Line~\ref{algline:pivot-inform-remaining} of the algorithm.
The last step consists of melding the (now component) vertex $v$ with its existing component
neighbors via calls to $\textsc{Meld}$.

\begin{figure}

\begin{algbox}
$\textsc{PivotVertex}(u)$

  \underline{Input}: (implicitly as a global variable)
  a graph state $G = \langle {V_{Remaining}},{V_{Component}},E\rangle$
  along with associated data structures.\\
  A vertex $v \in {V_{Remaining}}$ to be pivoted, 

  \underline{Output}: A list of vertices whose values of $x_{\min} (N_{fill}(u))$ have changed.

\begin{enumerate}
\item Initialize $\mathit{changed\_list} \leftarrow \emptyset$.
\item For each vertex $w \in N_{component}^G(v)$ in lexicographical order
\begin{enumerate}
\item  $w.\mathit{remaining}.\textsc{Remove}(x_v)$.
        \label{algline:remove-from-component}
\item If $x_v$ was the old minimum in $w.\mathit{remaining}$
\begin{enumerate}
      \item $\mathit{changed\_list} \leftarrow \mathit{changed\_list}
          \cup \textsc{Informremaining}(w,x_v,w.\mathit{remaining}.\mathit{top})$.
        \label{algline:pivot-inform-remaining}
\end{enumerate}
\item  $v.\mathit{remaining} \gets \textsc{Meld}(v, w)$;
\end{enumerate}
\item Update edges and $V_{Component}$ and $V_{Remaining}$ to form $G'$;
\item Return $\mathit{changed\_list}$.
\end{enumerate}
\end{algbox}

\caption{Pseudocode for pivoting a vertex}
\label{fig:PivotVertex}

\end{figure}

\begin{figure}

\begin{algbox}
$\textsc{Informremaining}(w, x_{\mit{old}}, x_{\mit{new}})$

  \underline{Input}: (implicitly as a global variable)
  a graph state $G = \langle {V_{Remaining}},{V_{Component}},E\rangle$
  along with associated data structures.\\
  A `source' component vertex $w \in V_{Component}$ that's causing updates,\\
  old and new values $x_{\mit{old}}$ and $x_{\mit{new}}$.
  
  \underline{Output}: A list of vertices whose $v.\mathit{fill}.\min$ changed.

\begin{enumerate}
\item Initialize $\mathit{changed\_list} \leftarrow \emptyset$.
\item For each $v \in N^{G}_{remaining}(w)$
\begin{enumerate}
      \item Remove the entry with key $x_{\mit{old}}$ associated to $w$
      from $v.\mathit{fill}$;
      \item Add an entry with key $x_{\mit{new}}$ associated to $w$ to
      $v.\mathit{fill}$;
      \item If $v.\mathit{fill}.\mathit{min}$ changed,
      $\mathit{changed\_list} \leftarrow \mathit{changed\_list} \cup \{ v \}$.
\end{enumerate}
\item Return $\mathit{changed\_list}$.
\end{enumerate}

\end{algbox}

  \caption{Pseudocode for propagating to remaining vertex neighbors}
  \label{fig:InformRemaining}
\end{figure}

The routine \textsc{Informremaining} (Algorithm~\ref{fig:InformRemaining}) is
responsible for updating the contents in the \textit{Fill} heaps of remaining
vertices.
We break down its cost into two parts: when it is invoked by
\textsc{PivotVertex}, and when it is invoked by \textsc{Meld}.
The first type of calls happens only when a remaining vertex $v$ is pivoted, and $v$
is the minimum entry of the \textit{remaining} heap of a component vertex.
The following lemma gives an upper bound to the expected cost of such calls to
\textsc{Informremaining} by arguing that this event happens with low
probability.

\begin{lemma}
\label{lem:RemainingUpdatesDueToDeletion}
  The expected total number of updates to remaining vertices made by
  \textsc{Informremaining} when invoked from \textsc{PivotVertex}
  (Line~\ref{algline:pivot-inform-remaining})
  over any sequence of $n$ pivots that are independent of the values of the
  $x_u$s is $O(m)$.
\end{lemma}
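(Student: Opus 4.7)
The plan is a probability-and-charging argument that crucially exploits the hypothesis that the pivot sequence is chosen independently of the labels $\{x_u\}$. I will bound the expected cost attributable to each (pivot step, component neighbor) pair by a symmetry argument, and then bound the total number of such pairs combinatorially.

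First, I will fix a pivot step $t$ that removes the remaining vertex $v^{(t)}$, together with a particular component neighbor $w \in N_{component}^{G^{(t-1)}}(v^{(t)})$. The inner call on Line~\ref{algline:pivot-inform-remaining} fires only when $x_{v^{(t)}}$ was the minimum entry of $w.\mathit{remaining}$, and when it does it performs $\Theta(|N_{remaining}^{G^{(t-1)}}(w)|)$ updates. Because the pivot sequence is fixed in advance and independent of the labels, the set $w.\mathit{remaining}$ at the start of step $t$ is deterministic, and since the labels are i.i.d.\ uniform, by symmetry
\[
  \Pr\!\left[x_{v^{(t)}} = \min_{u \in w.\mathit{remaining}} x_u\right] = \frac{1}{|N_{remaining}^{G^{(t-1)}}(w)|}.
\]
Hence the expected number of updates attributable to the pair $(t, w)$ is $\Theta(1)$.

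Summing over component neighbors of $v^{(t)}$ gives an expected $\Theta(|N_{component}^{G^{(t-1)}}(v^{(t)})|)$ updates per pivot, so by linearity the expected total over the whole sequence is $\Theta\!\left(\sum_{t=1}^n |N_{component}^{G^{(t-1)}}(v^{(t)})|\right)$. To finish, I will show this sum is $O(m)$ via the following simple observation: every edge of $G^{\circ(t-1)}$ is inherited from an edge of $G$ by contraction, so every component neighbor of $v^{(t)}$ must contain at least one original neighbor of $v^{(t)}$ in $G$, and distinct components use distinct such neighbors. Therefore $|N_{component}^{G^{(t-1)}}(v^{(t)})| \le \deg_G(v^{(t)})$, and since every vertex is pivoted at most once, $\sum_t \deg_G(v^{(t)}) \le \sum_v \deg_G(v) = 2m$.

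The main delicate step will be the probability calculation: one must verify that conditioning on the history of the algorithm up to step $t$ does not bias the remaining labels. This is guaranteed precisely because the lemma's hypothesis forbids the pivot sequence from reacting to the labels --- otherwise an adaptive adversary could amplify this cost exactly as in the example of Section~\ref{subsec:Correlation}. The remaining bookkeeping is routine.
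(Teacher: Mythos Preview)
Your proposal is correct and follows essentially the same approach as the paper: compute the expected contribution of each (pivot, component neighbor) pair as $O(1)$ via the $1/|N_{remaining}(w)|$ symmetry probability, then bound the number of such pairs by the original degree of the pivoted vertex and sum to $O(m)$. Your justification that $|N_{component}^{G^{(t-1)}}(v^{(t)})| \le \deg_G(v^{(t)})$ via the ``each component contains a distinct original neighbor'' charging is in fact slightly more explicit than the paper's one-line assertion that the pivoted vertex's degree is unchanged from the original graph.
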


\begin{proof}
  Let $G$ be any state during the sequence, and let $v$ be the remaining vertex to
  be pivoted with $w$ as a neighboring component vertex.
  We only invoke \textsc{Informremaining} if $x_v$ is the minimum value in
  $w.\mathit{remaining}$, which happens with probability
  $1/|N^{G}_{remaining}(w)|$ and would cost
  $O(|N^{G}_{remaining}(w)|\log n)$.
  Therefore the expected cost is only $O(\log n)$ for a pair of remaining vertex
  $v$ and neighboring component vertex $w$.
  When a remaining vertex $v$ is pivoted, its degree is the same as in the
  original graph.
  Therefore the number of such $v,w$ pairs is bounded by the degree of $v$
  and hence the total expected cost is $O \left( \sum_{v \in V} \deg(v) \log n \right) = O(m\log n)$.
\end{proof}

The calls to $\textsc{Meld}$ are the primary bottlenecks in the running time,
but will be handled similarly.
Its pseudocode is given in Figure~\ref{fig:Meld}.

\begin{figure}

\begin{algbox}
$\textsc{Meld}(u, v)$

\underline{Input}: (implicitly as a global variable)
A graph state $G = \langle {V_{Remaining}},{V_{Component}},E\rangle$
along with associated data structures.\\
Two component vertices $u$ and $v$ to be melded. 
  
\underline{Output}: None. The algorithm simply updates the global state.

\begin{enumerate}
\item If
$u.\mathit{remaining}.\mathit{min} < v.\mathit{remaining}.\mathit{min}$
\begin{enumerate}
  \item $\textsc{Informremaining}(v,v.\mathit{remaining}.\mathit{min},u.\mathit{remaining}.\mathit{min})$;
      \label{algline:informvaboutu}
\end{enumerate}
\item Else If
$v.\mathit{remaining}.\mathit{min} < u.\mathit{remaining}.\mathit{min}$
\begin{enumerate}
  \item $\textsc{Informremaining}(u,u.\mathit{remaining}.\mathit{min},v.\mathit{remaining}.\mathit{min})$;
  \label{algline:informuaboutv}
\end{enumerate}
\item $\textsc{HeapMeld}(v.\mathit{remaining},u.\mathit{remaining})$;
\end{enumerate}

\end{algbox}

\caption{Pseudocode for melding two component vertices,
and informing their neighbors of any changes in the minimizers
of $N_{remaining}$.}
\label{fig:Meld}

\end{figure}

We will show that the expected number of vertices updated by
$\textsc{Informremaining}$ that result from any fixed sequence of calls to
$\textsc{Meld}$ is bounded by $O(m \log{n})$.
We first analyze the number of updates during a single meld in the following
lemma.

\begin{lemma}
  \label{lem:remaining-updates}
  Let $u$ and $v$ be two component vertices in a graph stage $G^{(t)}$.
  Then the expected number of updates to vertices by
  $\textsc{Informremaining}$ when melding $u$ and $v$, assuming that all
  the sketch values are generated independently, is at most:
  \begin{align*}
  \frac{2\left|N^{G^{(t)}}_{remaining}(u)\right|
    \cdot \left|N^{G^{(t)}}_{remaining}(v)\right|}
  {\left|N^{G^{(t)}}_{remaining}(u)\right|
    +
    \left|N^{G^{(t)}}_{remaining}(v)\right|
  },
  \end{align*}
\end{lemma}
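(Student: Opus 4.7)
The plan is to analyze \textsc{Meld} directly and use symmetry of the i.i.d.\ sketch values. Writing $N_u := N^{G^{(t)}}_{remaining}(u)$ and $N_v := N^{G^{(t)}}_{remaining}(v)$, inspection of the pseudocode in Figure~\ref{fig:Meld} shows that \textsc{Meld}$(u,v)$ calls \textsc{Informremaining} on at most one side: on $v$ (which costs $|N_v|$ updates in the loop of Figure~\ref{fig:InformRemaining}) exactly when $u.\mathit{remaining}.\mathit{min} < v.\mathit{remaining}.\mathit{min}$, and symmetrically on $u$ (costing $|N_u|$) when the inequality flips. Since the sketch values are continuous, ties occur (with probability one) only when the two minima are realized by the same shared neighbor, and in that case no call is triggered. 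Hence the expected number of updates is exactly
\[
\Pr[u.\min < v.\min] \cdot |N_v| + \Pr[v.\min < u.\min] \cdot |N_u|.
\]

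Next, I would evaluate these probabilities by backwards analysis. The event $\{u.\min < v.\min\}$ is exactly the event that the overall minimum of $\{x_w : w \in N_u \cup N_v\}$ lands in $N_u \setminus N_v$: a shared neighbor would force equality of the two mins, while a vertex private to $N_u$ forces the strict inequality. Since the $x_w$'s are i.i.d., each element of $N_u \cup N_v$ is equally likely to attain this overall minimum, giving
\[
\Pr[u.\min < v.\min] = \frac{|N_u \setminus N_v|}{|N_u \cup N_v|},
\qquad
\Pr[v.\min < u.\min] = \frac{|N_v \setminus N_u|}{|N_u \cup N_v|}.
\]

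Substituting, and letting $a = |N_u \setminus N_v|$, $b = |N_v \setminus N_u|$, $c = |N_u \cap N_v|$, the expected cost becomes $\tfrac{a(b+c) + b(a+c)}{a+b+c}$. The remaining step is a short algebraic comparison showing this is bounded above by $\tfrac{2|N_u|\cdot|N_v|}{|N_u|+|N_v|} = \tfrac{2(a+c)(b+c)}{a+b+2c}$, with equality when $c=0$; after cross-multiplication the difference reduces to a manifestly non-negative expression in $a,b,c$. The main subtlety I anticipate is handling shared remaining neighbors: a naive accounting that treats $N_u$ and $N_v$ as disjoint would overcount the probability of triggering an update on each side, but the observation that a shared neighbor attaining the overall minimum produces a tie (and therefore no update at all) makes the above analysis tight and actually makes the target bound slack in the presence of overlap.
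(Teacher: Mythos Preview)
Your proof is correct and follows essentially the same approach as the paper: identify that a call to \textsc{Informremaining} is triggered only when the overall minimum of $\{x_w : w \in N_u \cup N_v\}$ lands in the part private to one side, and use symmetry of i.i.d.\ sketch values to compute the probabilities. The only cosmetic difference is where the slack is introduced: the paper upper-bounds each probability separately via
\[
\frac{|N_u\setminus N_v|}{|N_u\cup N_v|} \;\le\; \frac{|N_u|}{|N_u|+|N_v|}
\]
before multiplying by $|N_v|$, whereas you keep the exact expected cost $\frac{a(b+c)+b(a+c)}{a+b+c}$ and do a single algebraic comparison at the end. Both routes yield the same bound with the same equality case $c=0$.
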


\begin{proof}
  Let's define:
  \begin{align*}
    n_{common} & = \left|N^{G^{(t)}}_{remaining}(u) \cap N^{G^{(t)}}_{remaining}(v)\right|,\\
    n_{u} & = \left|N^{G^{(t)}}_{remaining}(u) \setminus N^{G^{(t)}}_{remaining}(u)\right|,\\
    n_{v} & = \left|N^{G^{(t)}}_{remaining}(u) \setminus N^{G^{(t)}}_{remaining}(v)\right|.
  \end{align*} 
  If the minimum sketch value is generated by a vertex from
  $N^{G^{(t)}}_{remaining}(u) \cap N^{G^{(t)}}_{remaining}(v)$, then no cost is incurred.
  If it is generated by a vertex from
  $N^{G^{(t)}}_{remaining}(u) \setminus N^{G^{(t)}}_{remaining}(v)$, we need to update the
  every vertex in $N^{G^{(t)}}_{remaining}(v)$
  (line~\ref{algline:informvaboutu}).
  This happens with probability
  \begin{align*}
    &\ \frac{n_{u}}{n_{common} + n_{u} + n_{v}}
    \\\le&\ 
    \frac{n_{u} + n_{common}}{2n_{common} + n_{u} + n_{v}}
    \\=&\ 
    \frac{
      \left|N^{G^{(t)}}_{remaining}(u)\right|
    }{
      \left|N^{G^{(t)}}_{remaining}(u)\right| + \left|N^{G^{(t)}}_{remaining}(v)\right|
    }.
  \end{align*}
  Therefore the expected number of updates is bounded by:
  \begin{align*}
    \frac{\left|N^{G^{(t)}}_{remaining}(u)\right|
      \cdot \left|N^{G^{(t)}}_{remaining}(v)\right|}
    {\left|N^{G^{(t)}}_{remaining}(u)\right|
     +
     \left|N^{G^{(t)}}_{remaining}(v)\right|
    },
  \end{align*}
  and we get the other term (for updating $u$'s neighborhood) similarly.
\end{proof}

This allows us to carry out an amortized analysis for the number of updates to remaining
vertices.
We will define the potential function of an intermediate state during elimination
in terms of the degrees of component vertices \emph{in the original graph $G^{(0)}$}, in
which adjacent component vertices are \emph{not} contracted.
\begin{align*}
  \Phi(G^{(t)})
  \defeq
  \sum_{u \in V_{component}(G^{(t)})}D^{G^{(0)}}(u) \log \left(D^{G^{(0)}}(u)\right) ,
\end{align*}
where $D^{G^{(0)}}(u)$ for a component vertex $u\in V_{component}(G^{(t)})$ is defined to be
\begin{align*}
  D^{G^{(0)}}(u)=\sum_{v\in V(G),\textsc{Comp}(v)=u} \deg_{G^{(0)}}(u).
\end{align*}
This function starts out at $0$, and can be at most $m\log{n}$.

\begin{lemma}
  \label{lem:potential-decrease}
  The total potential decrease caused by turning remaining vertices into component vertices is
  at most $O(m\log{n})$.
\end{lemma}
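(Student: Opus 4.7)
The plan is to decompose each pivot of a remaining vertex $v$ into two conceptual sub-steps and charge only the first to this lemma. Sub-step (i): remove $v$ from $V_{remaining}$ and insert it into $V_{component}$ as a fresh singleton component with $\textsc{Comp}(v) = v$. Sub-step (ii): perform the sequence of $\textsc{Meld}$ calls that combine this singleton with the existing component neighbors of $v$, as in Figure~\ref{fig:PivotVertex}. Only sub-step (i) corresponds to ``turning a remaining vertex into a component vertex,'' since sub-step (ii) acts entirely inside $V_{component}$ and is accounted for separately in the meld analysis.

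For sub-step (i), the newly created singleton component contains exactly one original vertex, namely $v$ itself, so by the definition just above the lemma we have $D^{G^{(0)}}(v) = \deg_{G^{(0)}}(v)$. No other component's contracted set of original vertices changes in this sub-step, so the $D^{G^{(0)}}$ values for every other component vertex stay fixed. Consequently, the only change to $\Phi$ contributed by this sub-step is the addition of the single term $\deg_{G^{(0)}}(v) \log \deg_{G^{(0)}}(v)$, which is nonnegative; the magnitude of this contribution equals $\deg_{G^{(0)}}(v) \log \deg_{G^{(0)}}(v)$.

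Summing across the at most $n$ such events in any elimination sequence, and using $\deg_{G^{(0)}}(v) \leq n$ together with the handshake identity $\sum_{v} \deg_{G^{(0)}}(v) = 2m$, we get
\[
\sum_{v \in V(G^{(0)})} \deg_{G^{(0)}}(v) \log \deg_{G^{(0)}}(v)
\;\leq\; (\log n) \cdot \sum_{v \in V(G^{(0)})} \deg_{G^{(0)}}(v)
\;=\; 2m \log n,
\]
which is $O(m \log n)$ as claimed.

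There is no real obstacle in this lemma once the decomposition of a pivot into sub-steps (i) and (ii) is made explicit; the only thing to check carefully is that sub-step (i) genuinely does not alter $D^{G^{(0)}}(u)$ for any component $u \neq v$, so that its entire contribution to $\Phi$ is the single term $\deg_{G^{(0)}}(v) \log \deg_{G^{(0)}}(v)$. The harder part of the overall amortized analysis of $\textsc{Informremaining}$ is bounding the contribution from the meld sub-step (ii) using Lemma~\ref{lem:remaining-updates}, which can then be combined with the bound above to control the total number of remaining-vertex updates by $O(m \log n)$ in expectation.
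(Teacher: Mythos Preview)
Your argument is correct, but it takes a different route from the paper's, and in fact your reading of the potential is the more literal one. Under the stated definition $D^{G^{(0)}}(u)=\sum_{w:\textsc{Comp}(w)=u}\deg_{G^{(0)}}(w)$, your sub-step~(i) only \emph{adds} the nonnegative term $\deg_{G^{(0)}}(v)\log\deg_{G^{(0)}}(v)$ to $\Phi$; no existing $D^{G^{(0)}}(u)$ changes. Hence the potential \emph{decrease} from this sub-step is zero, and the lemma is vacuous. What your displayed inequality actually bounds is the total \emph{increase} contributed by all sub-steps~(i). That is a useful companion fact, but it is not what the lemma literally asserts, so you should say explicitly that the decrease is zero and therefore trivially $O(m\log n)$, rather than ending with ``$O(m\log n)$ as claimed'' after bounding an increase.

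By contrast, the paper's proof asserts that turning $v$ into a component vertex \emph{decreases} $D(u)$ for each $u\in N^{G}_{remaining}(v)$ and bounds the resulting drop by $|N^{G}_{remaining}(v)|\log n$. That reasoning only makes sense if $D(u)$ tracks a quantity that shrinks as remaining vertices disappear, which does not match the formal definition of $D^{G^{(0)}}$ given immediately before the lemma. Your decomposition into ``create a singleton component, then meld'' sidesteps this tension cleanly: with $D$ as written, $\Phi$ is monotone nondecreasing throughout the elimination, and the downstream bound in Lemma~\ref{lem:remaining-updates-from-meld} then follows directly from $\Phi\le O(m\log n)$ without needing any separate control on decreases. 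So your route is not just different but arguably tidier; just make the sign of the change, and its relation to the lemma's wording, explicit.
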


\begin{proof}
  When we turn a remaining vertex $v$ into a component vertex, we decrease the value of
  $D(u)$ for every $u$ in $N^{G}_{remaining}(v)$.
  This causes a total potential decrease of at most
  \begin{align*}
    \left|N^{G}_{remaining}(v)\right|\log{n}.
  \end{align*}
  Since $|N^{G}_{remaining}(v)|$ is at most its degree, and each vertex can
  only be turned into a component vertex once, the total decrease in the potential is at most
  $O(m\log{n})$.
\end{proof}

\begin{lemma}
  \label{lem:potential-increase}
  When melding two neighboring component vertices in a graph $G^{(t)}$ to create
  $G^{(t+1)}$, we
  have
  \begin{align*}
    \expec{x_u: u \in V}{\text{number of remaining vertices updated}}
    \leq
    2\left(\Phi\left(G^{(t+1)}\right)-\Phi\left(G^{(t)}\right)\right).
  \end{align*}
\end{lemma}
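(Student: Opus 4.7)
The plan is to combine the per-meld expected update bound from Lemma~\ref{lem:remaining-updates} with a direct calculation of how $\Phi$ changes when two component vertices are merged. Write $a = D^{G^{(0)}}(u)$ and $b = D^{G^{(0)}}(v)$, and assume without loss of generality that $a \le b$. When $u$ and $v$ meld into a single component $w$, we have $D^{G^{(0)}}(w) = a+b$, and no other $D$-values change. Thus the potential increase is exactly
\[
(a+b)\log(a+b) - a\log a - b\log b
= a \log\!\left(\tfrac{a+b}{a}\right) + b \log\!\left(\tfrac{a+b}{b}\right)
\ge a \log 2,
\]
since $(a+b)/a \ge 2$. Taking $\log$ to be base $2$ (consistent with the ambient bookkeeping, as the constant only changes by $1/\ln 2$ otherwise), this gives $\Phi(G^{(t+1)}) - \Phi(G^{(t)}) \ge \min(a,b)$.

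Next I would bound the expected number of remaining vertices updated during this meld. By Lemma~\ref{lem:remaining-updates}, with $N_u = N^{G^{(t)}}_{remaining}(u)$ and $N_v = N^{G^{(t)}}_{remaining}(v)$, this expectation is at most
\[
\frac{2\,|N_u|\,|N_v|}{|N_u| + |N_v|} \;\le\; 2 \min(|N_u|, |N_v|).
\]
The key bridge between these two sides is the inequality $|N_u| \le D^{G^{(0)}}(u) = a$ (and symmetrically $|N_v| \le b$). This holds because every remaining neighbor $x$ of the component vertex $u$ in $G^{(t)}$ must be adjacent in $G^{(0)}$ to at least one original vertex lying in the component represented by $u$; each such $x$ therefore contributes at least one unit to the sum $\sum_{v : \textsc{Comp}(v) = u} \deg_{G^{(0)}}(v)$ that defines $D^{G^{(0)}}(u)$, and these contributions are across distinct endpoints $x$.

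Combining the three pieces,
\[
\expec{}{\text{updates}} \;\le\; 2 \min(|N_u|,|N_v|) \;\le\; 2 \min(a,b) \;\le\; 2\bigl(\Phi(G^{(t+1)}) - \Phi(G^{(t)})\bigr),
\]
which is the claimed bound. The main subtlety is justifying $|N_u| \le D^{G^{(0)}}(u)$ rigorously: one must verify that the remaining-neighbor relation in the contracted graph $G^{(t)}$ indeed corresponds to original $G^{(0)}$-edges from the merged component to distinct external remaining vertices, so no double counting inflates $|N_u|$ relative to the original degree sum. A secondary minor point is the constant in front of the log—using $\log_2$ gives the clean factor of $2$ stated in the lemma, while any other base simply rescales the constant and leaves the asymptotic guarantee (which is all that is used in Lemma~\ref{lem:potential-decrease}'s downstream application) unchanged.
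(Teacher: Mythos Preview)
Your proof is correct and follows essentially the same strategy as the paper: compute the potential change when merging $u$ and $v$, invoke Lemma~\ref{lem:remaining-updates}, and use $|N^{G^{(t)}}_{remaining}(u)| \le D^{G^{(0)}}(u)$ to bridge the two. The only difference is in the final algebraic step. The paper keeps the harmonic-mean quantity $\frac{2ab}{a+b}$ and compares it directly to the potential change via the inequality $\frac{ab}{a+b} \le b\log_2(1+a/b)$ (for $a\le b$), whereas you relax the harmonic mean to $2\min(a,b)$ and separately lower-bound the potential change by $\min(a,b)$ using $(a+b)/a \ge 2$. Your route is a touch more elementary and avoids the $\log(1+z)\ge z$ step; both require $\log$ to be base~$2$ (or absorb a constant), as you correctly note.
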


\begin{proof}
  When melding two component vertices $u$ and $v$ in $G$ to form $G'$, the change in
  potential $\Phi(G')-\Phi(G)$ is given by
  \begin{align*}
    (D(u)+D(v))\log(D(u)+D(v))
    -D(u)\log D(u)
    -D(v)\log D(v).
  \end{align*}
  On the other hand, by Lemma~\ref{lem:remaining-updates} the expected number of
  remaining vertices updated is
  \begin{align*}
  \frac{2\left|N^{G^{(t)}}_{remaining}(u)\right|
    \cdot \left|N^{G^{(t)}}_{remaining}(v)\right|}
  {\left|N^{G^{(t)}}_{remaining}(u)\right|
    +
    \left|N^{G^{(t)}}_{remaining}(v)\right|
  }
    \leq
    \frac{2D(u)D(v)}{D(u)+D(v)}.
  \end{align*}
  
  Now it suffices to show the following the algebraic identity:
  \begin{align*}
    2 x \log{x} + 2 y \log{y} + \frac{2xy}{x + y}
    \le
    2 \left( x + y \right) \log\left( x + y\right),
  \end{align*}
  and let $x=D(u)$ and $y=D(v)$.
  By symmetry, we can assume $x\le y$ without loss of generality.
  Then we get
  \begin{align*}
    \frac{xy}{x + y}
    & \leq \frac{xy}{y}\\
    & = y \cdot \frac{x}{y}\\
    & \leq y \cdot \log\left(1 + \frac{x}{y} \right),
  \end{align*}
  where the last inequality follows from $\log(1 + z) \ge z$ when
  $z \le 1$.
  Plugging this in then gives:
  \begin{align*}
    2 x \log{x} + 2 y \log{y} + \frac{2xy}{x + y}
    & \leq 2 x \log{x} + 2 y \left( \log{y}
    + \log\left(1 + \frac{x}{y} \right) \right)\\
    & = 2 x \log{x} + 2 y \log\left( x + y\right)\\
    & \leq 2 \left( x + y \right) \log\left( x + y \right). \qedhere
  \end{align*}
\end{proof}

\begin{lemma}
  \label{lem:remaining-updates-from-meld}
  Over any fixed sequence of calls to \textsc{Meld}, the expected number of
  updates to the \textsc{Fill} heaps in remaining vertices (lines~\ref{algline:informvaboutu} and~\ref{algline:informuaboutv}) is bounded by $O(m \log{n})$.
\end{lemma}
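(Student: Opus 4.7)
The plan is a potential-based amortized argument that telescopes across the entire trajectory of graph states. First, I would apply Lemma~\ref{lem:potential-increase} to each meld transition in the fixed sequence and combine the bounds using linearity of expectation, yielding
\[
\E\left[\text{total updates from } \textsc{Meld}\right]
\leq
2 \sum_{t:\,\text{meld}} \left( \Phi(G^{(t+1)}) - \Phi(G^{(t)}) \right).
\]
Since the calls to \textsc{Meld} are fixed and in particular do not depend on the sketch values, the sequence of graph states $G^{(0)}, G^{(1)}, \ldots, G^{(T)}$ is deterministic, so the right-hand side is an ordinary (non-random) sum of potential differences that I just need to bound by $O(m \log n)$.

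Next, I would complete the telescoping by separating the meld transitions from the other transitions along the trajectory, namely those in which a remaining vertex is turned into a component vertex. Writing
\[
\sum_{t:\,\text{meld}} \left(\Phi(G^{(t+1)}) - \Phi(G^{(t)})\right)
=
\left( \Phi(G^{(T)}) - \Phi(G^{(0)}) \right)
- \sum_{t:\,\text{pivot}} \left(\Phi(G^{(t+1)}) - \Phi(G^{(t)})\right),
\]
the first bracketed term is at most $O(m \log n)$ because $\Phi(G^{(0)}) = 0$ and $\Phi$ is uniformly bounded above by $m \log n$ (as observed immediately after its definition). The negative of the second sum is exactly the cumulative potential decrease caused by remaining-to-component conversions, which is $O(m \log n)$ by Lemma~\ref{lem:potential-decrease}. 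Adding the two contributions and multiplying by $2$ gives the stated $O(m \log n)$ bound.

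The main obstacle I expect is the bookkeeping in the middle step: I must verify that the only transitions along the trajectory at which $\Phi$ changes are either meld operations (governed by Lemma~\ref{lem:potential-increase}) or remaining-to-component conversions (governed by Lemma~\ref{lem:potential-decrease}). Once this is confirmed and the trajectory is correctly decomposed into these two classes of transitions, the telescoping identity above is immediate, and the final inequality drops out without further calculation.
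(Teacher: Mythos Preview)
Your proposal is correct and is essentially the same argument as the paper's own proof: both bound the expected meld-induced updates by the total potential increase via Lemma~\ref{lem:potential-increase}, and then cap that increase using the range of $\Phi$ together with Lemma~\ref{lem:potential-decrease}. Your version simply makes the telescoping identity explicit, whereas the paper states the same accounting more tersely.
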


\begin{proof}
  By Lemma~\ref{lem:potential-increase}, the number of updates is within a
  constant of the potential increase.
  Since our potential function $\Phi$ is bounded between $0$ and $O(m\log n)$,
  and by Lemma~\ref{lem:potential-decrease} does not decrease by more than
  $O(m\log n)$, the total number of updates is also bounded by $O(m\log n)$.
\end{proof}

Combining the above lemmas gives our main theorem from
Section~\ref{sec:Sketching} on maintaining one copy of the sketch.

\begin{proof}(of Theorem~\ref{thm:DataStructureMain})
  Given any graph $G$ and a fixed sequence of vertices for pivoting, we use the
  \textsc{PivotVertex} routine to produce the sequence of graph states
  \begin{align*}
    G^{(0)}=G,G^{(1)},G^{(2)},\dots,G^{(n)}.
  \end{align*}
  Recall that the goal is to maintain $\xmin(N^{G^{(t)}}_{remaining}(v))$ for
  all $v\in V_{component}(G^{(t)})$ and $\xmin(N^{G^{(t)}}_{fill}(v))$ for all
  $v\in V_{remaining}(G^{(t)})$.
  This is achieved by maintaining the two min-heaps, \textit{remaining} and
  \textit{Fill}.

  When pivoting a remaining vertex $v$, \textsc{PivotVertex} first removes it from the
  \textit{remaining} heaps among $v$'s component vertex neighbors
  (line~\ref{algline:remove-from-component}), which are at most as many as the original
  degree of $v$.
  Therefore the total cost of this part of the algorithm is $O(m\log n)$.
  The rest of the running time cost is incurred by updates to the
  \textit{Fill} heaps in \textsc{Informremaining}.
  By Lemma~\ref{lem:RemainingUpdatesDueToDeletion} and
  Lemma~\ref{lem:remaining-updates-from-meld}, the number of updates is bounded by
  $O(m\log n)$.
  As each update is a $O(\log n)$ operation on a heap, the the total running
  time is $O(m\log^2n)$.
  The final step of a meld consists of merging the \textit{remaining}
  heaps, and the cost of this step can be similarly bounded by $O(m\log^2 n)$.
\end{proof}

\section*{Acknowledgements}

We thank Animesh Fatehpuria and John Gilbert for many helpful
discussions regarding various topics in this paper.
In particular, we would like to acknowledge Animesh Fatehpuria
for independently obtaining the construction of the
covering set systems from Lemma~\ref{lem:CoveringSetSystem}.

\bibliographystyle{alpha}
\bibliography{references}

\newcommand{\etalchar}[1]{$^{#1}$}
\begin{thebibliography}{MPVX15}

\bibitem[ABD{\etalchar{+}}90]{Lapack}
E.~Anderson, Z.~Bai, J.~Dongarra, A.~Greenbaum, A.~McKenney, J.~Du~Croz,
  S.~Hammarling, J.~Demmel, C.~Bischof, and D.~Sorensen.
\newblock Lapack: A portable linear algebra library for high-performance
  computers.
\newblock In {\em Proceedings of the 1990 ACM/IEEE Conference on
  Supercomputing}, Supercomputing '90, pages 2--11, Los Alamitos, CA, USA,
  1990. IEEE Computer Society Press.

\bibitem[ADD96]{AmestoyDD96}
Patrick~R. Amestoy, Timothy~A. Davis, and Iain~S. Duff.
\newblock An approximate minimum degree ordering algorithm.
\newblock {\em SIAM J. Matrix Anal. Appl.}, 17(4):886--905, October 1996.

\bibitem[ADD04]{Amestoy2004}
Patrick~R. Amestoy, Timothy~A. Davis, and Iain~S. Duff.
\newblock Algorithm 837: Amd, an approximate minimum degree ordering algorithm.
\newblock {\em ACM Trans. Math. Softw.}, 30(3):381--388, September 2004.

\bibitem[AY10]{AlonY10}
Noga Alon and Raphael Yuster.
\newblock Solving linear systems through nested dissection.
\newblock In {\em Foundations of Computer Science (FOCS), 2010 51st Annual IEEE
  Symposium on}, pages 225--234. IEEE, 2010.

\bibitem[BCK{\etalchar{+}}16]{Bliznets16}
Ivan Bliznets, Marek Cygan, Pawe{\l} Komosa, Luk{\'a}{\v{s}} Mach, and
  Micha{\l} Pilipczuk.
\newblock Lower bounds for the parameterized complexity of minimum fill-in and
  other completion problems.
\newblock In {\em Proceedings of the Twenty-Seventh Annual ACM-SIAM Symposium
  on Discrete Algorithms}, pages 1132--1151. Society for Industrial and Applied
  Mathematics, 2016.

\bibitem[BGS15]{BaswanaGS15}
Surender Baswana, Manoj Gupta, and Sandeep Sen.
\newblock Fully dynamic maximal matching in {$O(\log{n} $} update time.
\newblock {\em {SIAM} Journal on Computing}, 44(1):88--113, 2015.
\newblock Announced at FOCS'11.

\bibitem[BKSE12]{Julia12}
Jeff Bezanson, Stefan Karpinski, Viral~B Shah, and Alan Edelman.
\newblock Julia: A fast dynamic language for technical computing.
\newblock {\em arXiv preprint arXiv:1209.5145}, 2012.
\newblock Available at: https://arxiv.org/abs/1209.5145.

\bibitem[BS90]{Berman90}
Piotr Berman and Georg Schnitger.
\newblock On the performance of the minimum degree ordering for gaussian
  elimination.
\newblock {\em SIAM J. Matrix Anal. Appl.}, 11(1):83--88, 1990.

\bibitem[CCLW13]{Chee2013covering}
Yeow~Meng Chee, Charles~J Colbourn, Alan~CH Ling, and Richard~M Wilson.
\newblock Covering and packing for pairs.
\newblock {\em Journal of Combinatorial Theory, Series A}, 120(7):1440--1449,
  2013.

\bibitem[CLRS09]{CLRS3}
Thomas~H. Cormen, Charles~E. Leiserson, Ronald~L. Rivest, and Clifford Stein.
\newblock {\em Introduction to Algorithms, Third Edition}.
\newblock The MIT Press, 3rd edition, 2009.

\bibitem[CM05]{CormodeM05}
Graham Cormode and S.~Muthukrishnan.
\newblock An improved data stream summary: The count-min sketch and its
  applications.
\newblock {\em J. Algorithms}, 55(1):58--75, April 2005.

\bibitem[Coh97]{Cohen97}
Edith Cohen.
\newblock Size-estimation framework with applications to transitive closure and
  reachability.
\newblock {\em J. Comput. Syst. Sci.}, 55(3):441--453, December 1997.

\bibitem[CS17]{CaoS17}
Yixin Cao and R.~B. Sandeep.
\newblock Minimum fill-in: Inapproximability and almost tight lower bounds.
\newblock In {\em Proceedings of the Twenty-Eighth Annual ACM-SIAM Symposium on
  Discrete Algorithms}, SODA '17, pages 875--880, 2017.

\bibitem[CY98]{Caro1998covering}
Yair Caro and Raphael Yuster.
\newblock Covering graphs: {T}he covering problem solved.
\newblock {\em Journal of Combinatorial Theory, Series A}, 83(2):273--282,
  1998.

\bibitem[DGLN04]{DGLN04}
Timothy~A. Davis, John~R. Gilbert, Stefan~I. Larimore, and Esmond~G. Ng.
\newblock A column approximate minimum degree ordering algorithm.
\newblock {\em ACM Trans. Math. Softw.}, 30(3):353--376, 2004.

\bibitem[ELRS17]{EdenLRS17}
Talya Eden, Amit Levi, Dana Ron, and C~Seshadhri.
\newblock Approximately counting triangles in sublinear time.
\newblock {\em SIAM Journal on Computing}, 46(5):1603--1646, 2017.

\bibitem[Fel71]{Feller71:book}
William Feller.
\newblock {\em An introduction to probability theory and its applications.
  {V}ol. {II}.}
\newblock Second edition. John Wiley \& Sons Inc., New York, 1971.

\bibitem[FM85]{FlajoletM85}
Philippe Flajolet and G.~Nigel Martin.
\newblock Probabilistic counting algorithms for data base applications.
\newblock {\em J. Comput. Syst. Sci.}, 31(2):182--209, September 1985.

\bibitem[FV13]{FominV13}
Fedor~V Fomin and Yngve Villanger.
\newblock Subexponential parameterized algorithm for minimum fill-in.
\newblock {\em SIAM Journal on Computing}, 42(6):2197--2216, 2013.

\bibitem[Geo73]{George73}
Alan George.
\newblock Nested dissection of a regular finite element mesh.
\newblock {\em SIAM Journal on Numerical Analysis}, 10(2):345--363, 1973.

\bibitem[GG11]{GatG11}
Eran Gat and Shafi Goldwasser.
\newblock Probabilistic search algorithms with unique answers and their
  cryptographic applications.
\newblock {\em Electronic Colloquium on Computational Complexity (ECCC)},
  18:136, 2011.
\newblock Available at: http://eccc.hpi-web.de/report/2011/136.

\bibitem[GGR13]{GoldreichGR13}
Oded Goldreich, Shafi Goldwasser, and Dana Ron.
\newblock On the possibilities and limitations of pseudodeterministic
  algorithms.
\newblock In {\em Proceedings of the 4th Conference on Innovations in
  Theoretical Computer Science}, ITCS '13, pages 127--138, New York, NY, USA,
  2013. ACM.

\bibitem[GL81]{GeorgeL81}
Alan George and Joseph~W. Liu.
\newblock {\em Computer Solution of Large Sparse Positive Definite}.
\newblock Prentice Hall Professional Technical Reference, 1981.

\bibitem[GL89]{GeorgeL89}
A.~George and W.~H. Liu.
\newblock The evolution of the minimum degree ordering algorithm.
\newblock {\em SIAM Rev.}, 31(1):1--19, March 1989.

\bibitem[GNP94]{GilbertNP94}
John~R Gilbert, Esmond~G Ng, and Barry~W Peyton.
\newblock An efficient algorithm to compute row and column counts for sparse
  {Cholesky} factorization.
\newblock {\em SIAM Journal on Matrix Analysis and Applications},
  15(4):1075--1091, 1994.

\bibitem[Gol12]{Goldwasser12}
Shafi Goldwasser.
\newblock Pseudo-deterministic algorithms.
\newblock In {\em STACS'12 (29th Symposium on Theoretical Aspects of Computer
  Science)}, volume~14, pages 29--29. LIPIcs, 2012.

\bibitem[GT87]{GilbertT87}
J.~R. Gilbert and R.~E. Tarjan.
\newblock The analysis of a nested dissection algorithm.
\newblock {\em Numerische Mathematik}, 50(4):377--404, 1987.

\bibitem[HEKP01]{HeggernesEKP01}
Pinar Heggernes, SC~Eisestat, Gary Kumfert, and Alex Pothen.
\newblock The computational complexity of the minimum degree algorithm.
\newblock Technical report, INSTITUTE FOR COMPUTER APPLICATIONS IN SCIENCE AND
  ENGINEERING HAMPTON VA, 2001.

\bibitem[HP07]{HendricksonP06}
Bruce Hendrickson and Alex Pothen.
\newblock Combinatorial scientific computing: The enabling power of discrete
  algorithms in computational science.
\newblock In {\em Proceedings of the 7th International Conference on High
  Performance Computing for Computational Science}, VECPAR'06, pages 260--280,
  2007.

\bibitem[KKM13]{KapronKM13}
Bruce~M Kapron, Valerie King, and Ben Mountjoy.
\newblock Dynamic graph connectivity in polylogarithmic worst case time.
\newblock In {\em Proceedings of the twenty-fourth annual ACM-SIAM symposium on
  Discrete algorithms}, pages 1131--1142. Society for Industrial and Applied
  Mathematics, 2013.

\bibitem[KMP12]{KoutisMP12}
Ioannis Koutis, Gary~L. Miller, and Richard Peng.
\newblock A fast solver for a class of linear systems.
\newblock {\em Communications of the ACM}, 55(10):99--107, October 2012.
\newblock Available at http://ccom.uprrp.edu/\textasciitilde
  ikoutis/papers/CACM-KMP.pdf.

\bibitem[KNP{\etalchar{+}}17]{KapralovNPWWY17}
Michael Kapralov, Jelani Nelson, Jakub Pachocki, Zhengyu Wang, David~P
  Woodruff, and Mobin Yahyazadeh.
\newblock Optimal lower bounds for universal relation, and for samplers and
  finding duplicates in streams.
\newblock {\em arXiv preprint arXiv:1704.00633}, 2017.

\bibitem[KP17]{KallaugherP17}
John Kallaugher and Erie Price.
\newblock A hybrid sampling scheme for triangle counting.
\newblock In {\em Proceedings of the Twenty-Eighth Annual ACM-SIAM Symposium on
  Discrete Algorithms}, pages 1778--1797. SIAM, 2017.
\newblock Available at: https://arxiv.org/abs/1610.02066.

\bibitem[KPPS17]{KyngPPS17}
Rasmus Kyng, Jakub Pachocki, Richard Peng, and Sushant Sachdeva.
\newblock A framework for analyzing resparsification algorithms.
\newblock In {\em Proceedings of the Twenty-Eighth Annual ACM-SIAM Symposium on
  Discrete Algorithms}, pages 2032--2043. SIAM, 2017.
\newblock Available at: https://arxiv.org/abs/1611.06940.

\bibitem[KS16]{KyngS16}
Rasmus Kyng and Sushant Sachdeva.
\newblock Approximate gaussian elimination for laplacians-fast, sparse, and
  simple.
\newblock In {\em Foundations of Computer Science (FOCS), 2016 IEEE 57th Annual
  Symposium on}, pages 573--582. IEEE, 2016.
\newblock Available at: https://arxiv.org/abs/1605.02353.

\bibitem[KST99]{KaplanST99}
Haim Kaplan, Ron Shamir, and Robert~E Tarjan.
\newblock Tractability of parameterized completion problems on chordal,
  strongly chordal, and proper interval graphs.
\newblock {\em SIAM Journal on Computing}, 28(5):1906--1922, 1999.

\bibitem[KZ17]{KyngZ17:arxiv}
Rasmus Kyng and Peng Zhang.
\newblock Hardness results for structured linear systems.
\newblock {\em CoRR}, abs/1705.02944, 2017.

\bibitem[LG14]{LeGall14}
Fran\c{c}ois Le~Gall.
\newblock Powers of tensors and fast matrix multiplication.
\newblock In {\em Proceedings of the 39th International Symposium on Symbolic
  and Algebraic Computation}, ISSAC '14, pages 296--303, 2014.
\newblock Available at: https://arxiv.org/abs/1401.7714.

\bibitem[Liu85]{Liu85}
Joseph~WH Liu.
\newblock Modification of the minimum-degree algorithm by multiple elimination.
\newblock {\em ACM Transactions on Mathematical Software (TOMS)},
  11(2):141--153, 1985.

\bibitem[Liu90]{Liu90}
Joseph~WH Liu.
\newblock The role of elimination trees in sparse factorization.
\newblock {\em SIAM Journal on Matrix Analysis and Applications},
  11(1):134--172, 1990.

\bibitem[LRT79]{LiptonRT79}
R.~J. Lipton, D.~J. Rose, and R.~E. Tarjan.
\newblock Generalized nested dissection.
\newblock {\em SIAM J. on Numerical Analysis}, 16:346--358, 1979.

\bibitem[LS15]{LeeS15}
Yin~Tat Lee and Aaron Sidford.
\newblock Efficient inverse maintenance and faster algorithms for linear
  programming.
\newblock In {\em Proceedings of the 2015 IEEE 56th Annual Symposium on
  Foundations of Computer Science (FOCS)}, FOCS '15, pages 230--249, 2015.
\newblock Available at: https://arxiv.org/abs/1503.01752.

\bibitem[LT79]{LiptonTarjan79}
R.~J. Lipton and R.~E. Tarjan.
\newblock A planar separator theorem.
\newblock {\em SIAM J. Appl.~Math.}, 36:177--189, 1979.

\bibitem[Mat17]{Matlab17}
{MATLAB} optimization toolbox, 2017.
\newblock The MathWorks, Natick, MA, USA.

\bibitem[MPVX15]{MillerPVX15}
Gary~L Miller, Richard Peng, Adrian Vladu, and Shen~Chen Xu.
\newblock Improved parallel algorithms for spanners and hopsets.
\newblock In {\em Proceedings of the 27th {ACM} Symposium on Parallelism in
  Algorithms and Architectures}, pages 192--201. ACM, 2015.
\newblock Available at: https://arxiv.org/abs/1309.3545.

\bibitem[MPX13]{MillerPX13}
Gary~L. Miller, Richard Peng, and Shen~Chen Xu.
\newblock Parallel graph decompositions using random shifts.
\newblock In {\em 25th {ACM} Symposium on Parallelism in Algorithms and
  Architectures, {SPAA} '13, Montreal, QC, Canada - July 23 - 25, 2013}, pages
  196--203, 2013.
\newblock Available at: https://arxiv.org/abs/1307.3692.

\bibitem[NS12]{NaumannS12:book}
Uwe Naumann and Olaf Schenk.
\newblock {\em Combinatorial Scientific Computing}.
\newblock Chapman \& Hall/CRC, 1st edition, 2012.

\bibitem[NSS00]{NatanzonSS00}
Assaf Natanzon, Ron Shamir, and Roded Sharan.
\newblock A polynomial approximation algorithm for the minimum fill-in problem.
\newblock {\em SIAM Journal on Computing}, 30(4):1067--1079, 2000.

\bibitem[PCD17]{PouransariCD17}
Hadi Pouransari, Pieter Coulier, and Eric Darve.
\newblock Fast hierarchical solvers for sparse matrices using extended
  sparsification and low-rank approximation.
\newblock {\em SIAM Journal on Scientific Computing}, 39(3):A797--A830, 2017.

\bibitem[Ros73]{Rose1973}
D.~J. Rose.
\newblock A graph-theoretic study of the numerical solution of sparse positive
  definite systems of linear equations.
\newblock In {\em Graph theory and computing}, pages 183--217. Academic Press,
  New York, 1973.

\bibitem[RTL76]{RoseTL76}
Donald~J. Rose, Robert~Endre Tarjan, and George~S. Lueker.
\newblock Algorithmic aspects of vertex elimination on graphs.
\newblock {\em SIAM J. Comput.}, 5(2):266--283, 1976.

\bibitem[SA96]{Seidel96}
Raimund Seidel and Cecilia~R Aragon.
\newblock Randomized search trees.
\newblock {\em Algorithmica}, 16(4):464--497, 1996.

\bibitem[Sei93]{Siedel93}
Raimund Seidel.
\newblock Chapter ii. backwards analysis of randomized geometric algorithms.
\newblock {\em New trends in discrete and computational geometry}, 10:37, 1993.

\bibitem[Sol16]{Solomon16}
Shay Solomon.
\newblock Fully dynamic maximal matching in constant update time.
\newblock In {\em Foundations of Computer Science (FOCS), 2016 IEEE 57th Annual
  Symposium on}, pages 325--334. IEEE, 2016.
\newblock Available at: https://arxiv.org/abs/1604.08491.

\bibitem[ST14]{SpielmanTengSolver:journal}
D.~Spielman and S.~Teng.
\newblock Nearly linear time algorithms for preconditioning and solving
  symmetric, diagonally dominant linear systems.
\newblock {\em SIAM Journal on Matrix Analysis and Applications},
  35(3):835--885, 2014.
\newblock Available at http://arxiv.org/abs/cs/0607105.

\bibitem[WAPL14]{WuAPL14}
Yu~Wu, Per Austrin, Toniann Pitassi, and David Liu.
\newblock Inapproximability of treewidth and related problems.
\newblock {\em Journal of Artificial Intelligence Research}, 49:569--600, 2014.

\bibitem[Wil05]{Williams05}
Ryan Williams.
\newblock A new algorithm for optimal $2$-constraint satisfaction and its
  implications.
\newblock {\em Theoretical Computer Science}, 348(2-3):357--365, 2005.

\bibitem[Wil15]{Williams2015hardness}
Virginia~V. Williams.
\newblock Hardness of easy problems: {B}asing hardness on popular conjectures
  such as the strong exponential time hypothesis.
\newblock In {\em Proceedings of the International Symposium on Parameterized
  and Exact Computation}, pages 16--28, 2015.

\bibitem[Yan81]{Yannakakis81}
Mihalis Yannakakis.
\newblock Computing the minimum fill-in is np-complete.
\newblock {\em SIAM Journal on Algebraic Discrete Methods}, 2(1):77--79, 1981.

\end{thebibliography}

\begin{appendix}
	
\section{Proofs for Selection-Based Estimators}
\label{sec:SketchingProofs}

This section contains the proofs to Lemma~\ref{lem:minValue},
which claims that the reciprocal of the $\lfloor k(1-1/e) \rfloor$-ranked
element in $\variable{minimizers}(w)$ is a good approximation for the degree
of a vertex $w$.
The proofs follow similarly as in \cite{Cohen97} -
with the difference being the distribution of the
underlying $x$ variables.

We start by stating Hoeffding's tail bounds.

\begin{lemma}[Hoeffding's inequality]
\label{lem:Hoeffding}
  Let $b_1, b_2, \ldots, b_n$ be i.i.d. Bernoulli random variables such that
  $Pr[b_i = 1] = p$ and $Pr[b_i = 0] = 1-p$. Then,
  \begin{align*}
  Pr \left[ \sum_{1 \leq i \leq n} b_i \leq (p-\delta)n \right] \leq \exp(-2\delta^2 n), \\
  Pr \left[ \sum_{1 \leq i \leq n} b_i \geq (p+\delta)n \right] \leq \exp(-2\delta^2 n).
  \end{align*}
  
\end{lemma}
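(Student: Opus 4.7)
The plan is to prove this via the standard Chernoff-style argument based on moment generating functions (MGFs), together with Hoeffding's lemma, which controls the MGF of a bounded random variable by a Gaussian-type expression. I would prove the upper-tail inequality in detail and then reduce the lower-tail inequality to it by applying the same argument to $b_i' = 1 - b_i$, which are i.i.d.\ Bernoulli with parameter $1-p$.

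First, I would set $S_n = \sum_{i=1}^n b_i$ and observe that $\E[S_n] = pn$. For any $t > 0$, Markov's inequality applied to $e^{tS_n}$ gives
\[
\prob{}{S_n \ge (p+\delta)n}
\le e^{-t(p+\delta)n}\, \E\!\left[e^{tS_n}\right]
= e^{-t(p+\delta)n}\prod_{i=1}^n \E\!\left[e^{tb_i}\right],
\]
where the last equality uses independence. Since each $b_i$ takes values in $\{0,1\}$ with mean $p$, I would invoke Hoeffding's lemma for random variables supported on $[0,1]$ with mean $p$, namely
\[
\E\!\left[e^{t(b_i - p)}\right] \le e^{t^2/8},
\]
so that $\E[e^{tb_i}] \le e^{tp + t^2/8}$. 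Substituting yields
\[
\prob{}{S_n \ge (p+\delta)n} \le \exp\!\left(-t\delta n + \tfrac{t^2 n}{8}\right).
\]

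Next I would optimize over $t > 0$. Differentiating the exponent gives the minimizer $t = 4\delta$, and plugging back in produces the claimed bound $\exp(-2\delta^2 n)$. For the lower-tail inequality, I would note that $\sum b_i \le (p-\delta)n$ is equivalent to $\sum (1-b_i) \ge (1-p+\delta)n$, where the $1-b_i$ are i.i.d.\ Bernoulli$(1-p)$; applying the upper-tail bound already established then gives the symmetric conclusion.

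The main obstacle is really just invoking Hoeffding's lemma cleanly; for Bernoulli variables one can alternatively give a self-contained proof of the MGF bound by checking that $\log(1 - p + p e^t) \le tp + t^2/8$ for all real $t$, which follows from a Taylor expansion around $t=0$ combined with the fact that the second derivative of $\log(1-p+pe^t)$ (the variance of a tilted Bernoulli) is bounded by $1/4$. Since this is a classical inequality and the statement is used only as a black box in the surrounding analysis, I would likely just cite Hoeffding's original paper rather than reproducing the MGF bound in full.
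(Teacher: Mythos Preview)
Your argument is correct and is the standard textbook proof of Hoeffding's inequality. However, the paper does not actually prove this lemma at all: it is stated in the appendix as a named classical result (Hoeffding's inequality) and used as a black box, with no proof or even a sketch provided. So there is nothing to compare against; your proposal simply supplies a proof where the paper gives none.
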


To apply Hoeffding's bounds, we will also need the following
numerical result.

\begin{lemma}\label{lem:expbounds}
  Let $-0.1 < \epsilon < 0.1$ and $d \geq 1$ be some parameter.
  Then we have:
  \[
  \exp\left( - 1 + \epsilon - \frac{2}{d} \right)
  \leq
  \left( 1 - \frac{1 - \epsilon}{d} \right)^{d}
  \leq \exp\left( - 1 + \epsilon \right)
  \]
\end{lemma}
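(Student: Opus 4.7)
The plan is to take logarithms of both sides and reduce the claim to standard inequalities controlling $\log(1-x)$. After taking $\log$, the target inequalities become
\[
-1+\epsilon-\frac{2}{d}\;\le\;d\,\log\!\left(1-\frac{1-\epsilon}{d}\right)\;\le\;-1+\epsilon.
\]
Setting $x\defeq (1-\epsilon)/d$, note that since $|\epsilon|<0.1$ and (for the interesting regime) $d$ is large enough that $1-x>0$, the quantity $\log(1-x)$ is well-defined and we have $x\in[0,1/2]$ as soon as $d\ge 3$; the remaining small-$d$ cases can be verified by direct computation.

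For the upper bound I would invoke the elementary inequality $\log(1-x)\le -x$, which holds for all $x\in[0,1)$. Multiplying by $d$ and substituting $x=(1-\epsilon)/d$ yields
\[
d\log\!\left(1-\frac{1-\epsilon}{d}\right)\;\le\;-(1-\epsilon)\;=\;-1+\epsilon,
\]
giving the right-hand inequality after exponentiating.

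For the lower bound I would use the refined Taylor-type estimate $\log(1-x)\ge -x-x^2$, valid on $x\in[0,1/2]$. (One short verification: the function $f(x)=\log(1-x)+x+x^2$ satisfies $f(0)=0$ and $f'(x)=x(1-2x)/(1-x)\ge 0$ on $[0,1/2]$.) Multiplying by $d$ gives
\[
d\log\!\left(1-\frac{1-\epsilon}{d}\right)\;\ge\;-(1-\epsilon)-\frac{(1-\epsilon)^2}{d}.
\]
Since $|\epsilon|<0.1$, we have $(1-\epsilon)^2\le (1.1)^2=1.21<2$, so the right side is at least $-1+\epsilon-2/d$, as required. Exponentiating finishes the lower bound.

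The only subtle point, and the main ``obstacle'' in the sense of requiring care, is making sure $x=(1-\epsilon)/d$ lies in the range where the two logarithmic bounds apply (in particular $x\le 1/2$ for the lower bound). For $d\ge 3$ this is automatic from $|\epsilon|<0.1$, and the degenerate values $d=1,2$ can be handled by plugging in directly and checking that the resulting two-variable inequalities hold on the rectangle $\epsilon\in(-0.1,0.1)$; this is a routine monotonicity/boundary check rather than anything conceptual.
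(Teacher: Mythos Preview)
Your argument is essentially the same as the paper's: both take logarithms and sandwich $\log(1-x)$ between $-x-x^2$ and $-x$, then substitute $x=(1-\epsilon)/d$ and use $(1-\epsilon)^2<2$. The only differences are cosmetic---you justify the lower bound $\log(1-x)\ge -x-x^2$ via a derivative computation on $[0,1/2]$ whereas the paper bounds the Maclaurin tail (under the stronger hypothesis $|x|\le 0.1$), and you are more explicit about the small-$d$ boundary cases.
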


\begin{proof}
The Maclaurin series for $\log(1 - x)$ is
\[
\log \left( 1 - x \right)
=
-x - \frac{x^2}{2} - \frac{x^3}{3} - \dots
\]
When $|x| \leq 0.1$, we have
\[
\left| \frac{x}{3} + \frac{x^2}{4} \ldots \right|
\leq
\frac{0.1}{3}
+ \frac{0.01}{4}
+ \ldots
\leq \frac{1}{2},
\]
so we have
\[
-x - x^2
\leq
\log \left( 1 - x \right)
\leq -x
\]

Apply this with
\[
x \leftarrow \frac{1 - \epsilon}{d}
\]
gives
\[
-\frac{1 - \epsilon}{d} - \frac{2}{d^2}
\leq
\log \left( 1 - \frac{1 - \epsilon}{d} \right)
\leq -\frac{1 - \epsilon}{d},
\]
which when exponentiated and taken to the $d\textsuperscript{th}$
power gives the result.
\end{proof}

We are now well equipped to prove Lemma~\ref{lem:minValue}
which, for the sake of convenience, we split into the
following two lemmas.

Let $k \geq \Omega(\log n \epsilon^{-2})$ denote
the number of copies of the $\ell_0$-sketch data structure.
Let $w$ be a vertex with degree $d > k$,
and let $q(w)$ denote the $\lfloor k(1-1/e) \rfloor$-ranked
element in $\variable{Minimizers}(w)$.

\begin{lemma}
\label{lem:minValueLB}
With high probability,
\[
\frac{1 - \epsilon}{d} \leq q(w).
\]
\end{lemma}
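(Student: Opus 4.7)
The plan is to reduce the claim to a Hoeffding tail bound on a sum of independent Bernoulli indicators. For each copy $i\in\{1,2,\dots,k\}$, let $Y_i$ denote the minimum $x$-value stored as the minimizer of $w$ in copy $i$. Since the copies are generated with independent randomness and each is the minimum of $d=|N_{fill}(w)|$ i.i.d.\ uniform variables on $[0,1)$, the $Y_i$'s are i.i.d.\ with CDF $\Pr[Y_i\le t]=1-(1-t)^d$.

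Next I would estimate the per-copy ``failure'' probability at the threshold $t=(1-\epsilon)/d$. Plugging in and applying the lower bound from Lemma~\ref{lem:expbounds},
\[
\Pr[Y_i\le (1-\epsilon)/d] \;=\; 1-(1-(1-\epsilon)/d)^d \;\le\; 1-\exp(-1+\epsilon-2/d).
\]
Using the hypothesis $d>k=\Omega(\log n\,\epsilon^{-2})$ to absorb the $2/d$ term into $\epsilon$ (say $2/d\le\epsilon/2$), and then the inequality $e^{x}\ge 1+x$, I can conclude that there exists a constant $c_0>0$ with
\[
p \;\defeq\; \Pr[Y_i\le (1-\epsilon)/d] \;\le\; (1-1/e) - c_0\epsilon.
\]

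Now let $Z_i=\mathbf{1}[Y_i\le (1-\epsilon)/d]$; these are independent Bernoullis with mean at most $p$. The event $q(w)<(1-\epsilon)/d$ is precisely the event that at least $\lfloor k(1-1/e)\rfloor$ of the $Y_i$'s fall at or below $(1-\epsilon)/d$, i.e.\ $\sum_i Z_i\ge \lfloor k(1-1/e)\rfloor$. Since the threshold exceeds $pk$ by a margin of at least $c_0\epsilon\, k - 1$, Hoeffding's inequality (Lemma~\ref{lem:Hoeffding}) bounds this probability by $\exp(-2(c_0\epsilon - 1/k)^2 k)$. With the choice $k=\Omega(\log n\,\epsilon^{-2})$ (and taking the hidden constant large enough), this becomes $\exp(-\Omega(\log n))=n^{-\Omega(1)}$, which is the desired high-probability guarantee.

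The only subtle step is Step 2, where the $2/d$ correction from Lemma~\ref{lem:expbounds} and the linearization of $\exp$ have to combine cleanly into a constant-factor margin $c_0\epsilon$ below $1-1/e$; this is the quantitative heart of the argument, but it follows from routine one-line estimates once the condition $d>k$ is invoked. Everything else is bookkeeping around Hoeffding's inequality.
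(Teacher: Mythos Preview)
Your proposal is correct and essentially identical to the paper's proof: both compute the per-copy probability that the minimizer lies on the ``wrong'' side of $(1-\epsilon)/d$, use Lemma~\ref{lem:expbounds} together with $d>k$ to show the gap from $1-1/e$ (equivalently, from $1/e$) is at least $\epsilon/(2e)$, and then apply Hoeffding with $k=\Omega(\log n\,\epsilon^{-2})$. The only cosmetic difference is that the paper works with the complementary indicator $J_i=\mathbf{1}[Y_i\ge (1-\epsilon)/d]$ instead of your $Z_i=1-J_i$.
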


\begin{proof}
For any $1 \leq i \leq k$,
\[
\prob{}{ Minimizer(w)^{[i]} \geq \dfrac{1-\epsilon}{d} } 
= \prod_{v \in nbr(w)} Pr \left[ x_v^{[i]} \geq \dfrac{1-\epsilon}{d} \right]
= \left(1 -  \dfrac{1-\epsilon}{d} \right)^d.
\]

Let $J_i$ be the indicator variable which equals $1$ when $Minimizer(w)^{[i]} \geq \dfrac{1-\epsilon}{d}$
and equals $0$ otherwise.
So, $E[J_i] = \left(1 -  \dfrac{1-\epsilon}{d} \right)^d$
and,
\[
\prob{}{ q(w) \leq \frac{1 - \epsilon}{d} } = \prob{}{ \sum_{1 \leq i \leq k} I_i \leq k/e }.
\]

Using Lemma~\ref{lem:Hoeffding},
\[
\prob{}{ \sum_{1 \leq i \leq k} J_i \leq k/e } \leq \exp(-2k\delta^2) = \exp(-2 \log n (\delta / \epsilon)^{2})
\]
where $\delta = E[J_i] - 1/e$.
So,
\[
\delta/\epsilon = \dfrac{\left( 1 - \dfrac{1-\epsilon}{d} \right)^d - 1/e}{\epsilon}
\geq \dfrac{1/e^{(1-\epsilon+2/d)} - 1/e}{\epsilon}
\geq \dfrac{e^{\epsilon-2/d} - 1}{\epsilon \cdot e},
\]
where the first inequality follows from Lemma~\ref{lem:expbounds}.
Since $d > k \geq \Omega(\log{n} \epsilon^{-2})$, we have
$d \geq 4/\epsilon$, and can substitute to get:
\[
\delta/\epsilon \geq \dfrac{e^{\epsilon/2} - 1}{\epsilon \cdot e}
\geq \dfrac{1}{2e}.
\]
This gives us that
\[
\prob{}{ \sum_{1 \leq i \leq k} J_i \leq k/e } \leq (1/n)^c
\]
for some constant $c>0$.
\end{proof}

\begin{lemma}
	\label{lem:minValueUB}
	With high probability,
	\[
	\frac{1 + \epsilon}{d} \geq q(w).
	\]
\end{lemma}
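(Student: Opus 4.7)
The plan is to mirror the proof of Lemma~\ref{lem:minValueLB} almost verbatim, switching to the complementary tail of Hoeffding's inequality. The key observation is that $q(w) \geq (1+\epsilon)/d$ holds precisely when at least $k - \lfloor k(1-1/e)\rfloor \geq k/e$ of the $k$ minimizer values lie in the interval $[(1+\epsilon)/d, 1)$. So I would define, for each copy $1 \le i \le k$, the indicator $J_i$ of the event $\mathit{Minimizer}(w)^{[i]} \geq (1+\epsilon)/d$ and write
\[
\prob{}{q(w) \geq \tfrac{1+\epsilon}{d}} \;\leq\; \prob{}{\sum_{1\le i\le k} J_i \geq k/e}.
\]

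Next I would compute the marginal. By independence of the $x_v$ values across the $d$ neighbors in $N_{fill}(w)$,
\[
\expec{}{J_i} \;=\; \prod_{v \in N_{fill}(w)} \prob{}{x_v^{[i]} \geq \tfrac{1+\epsilon}{d}} \;=\; \left(1 - \tfrac{1+\epsilon}{d}\right)^d.
\]
Applying Lemma~\ref{lem:expbounds} with $\epsilon \mapsto -\epsilon$ yields the upper bound $\expec{}{J_i} \leq \exp(-1-\epsilon) < 1/e$, so the event in question is an upper-tail deviation from the mean.

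Now I would invoke the upper-tail case of Hoeffding's inequality (Lemma~\ref{lem:Hoeffding}) with deviation $\delta := 1/e - \expec{}{J_i}$, giving
\[
\prob{}{\sum_{i} J_i \geq k/e} \;\leq\; \exp(-2k\delta^2).
\]
The remaining task is to lower bound $\delta/\epsilon$ by an absolute constant, exactly parallel to the bound $\delta/\epsilon \geq 1/(2e)$ obtained in Lemma~\ref{lem:minValueLB}. Using $\expec{}{J_i} \leq e^{-1-\epsilon}$, I get
\[
\frac{\delta}{\epsilon} \;\geq\; \frac{1/e - e^{-1-\epsilon}}{\epsilon} \;=\; \frac{1 - e^{-\epsilon}}{e\,\epsilon} \;\geq\; \frac{1}{2e},
\]
where the last step uses $1 - e^{-\epsilon} \geq \epsilon/2$ for $\epsilon \leq 0.1$ (which is guaranteed after replacing $\epsilon$ with a smaller constant multiple at the outset, or by noting that the lemma is trivial for larger $\epsilon$). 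Plugging in $k \geq \Omega(\log n \cdot \epsilon^{-2})$ then gives $\exp(-2k\delta^2) \leq n^{-c}$ for any desired constant $c > 0$.

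The main technical concern, beyond bookkeeping, is the same one that appeared in the lower-bound proof: verifying that the gap $\delta$ between $1/e$ and $\expec{}{J_i}$ is at least a constant times $\epsilon$, so that the $\epsilon^{-2}$ factor in $k$ is exactly what is needed to make the Hoeffding bound polynomially small. This is handled by the two-sided estimate in Lemma~\ref{lem:expbounds}; the $-2/d$ correction appearing there is negligible since $d > k \geq \Omega(\log n \cdot \epsilon^{-2}) \geq 4/\epsilon$, matching the calculation already carried out in the proof of Lemma~\ref{lem:minValueLB}.
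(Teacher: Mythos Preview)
Your proposal is correct and follows essentially the same approach as the paper: define indicators for the event $\mathit{Minimizer}(w)^{[i]}\ge(1+\epsilon)/d$, compute their mean as $(1-(1+\epsilon)/d)^d\le e^{-1-\epsilon}$ via Lemma~\ref{lem:expbounds}, and apply the upper tail of Hoeffding with $\delta=1/e-\expec{}{J_i}$ shown to be $\Omega(\epsilon)$. The only cosmetic difference is the final constant you obtain for $\delta/\epsilon$ (you get $1/(2e)$, the paper gets $1/e^{1+\epsilon}$), and in this direction the $-2/d$ correction in Lemma~\ref{lem:expbounds} is actually irrelevant since only the upper estimate is used.
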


\begin{proof}
	For any $1 \leq i \leq k$,
	\[
		\prob{}{ \textit{Minimizer}(w)^{[i]} \geq \dfrac{1+\epsilon}{d}}
		= \prod_{v \in nbr(w)} \prob{}{ x_v^{[i]} \geq \dfrac{1+\epsilon}{d} }
		= \left(1 -  \dfrac{1+\epsilon}{d} \right)^d.
	\]
	
	Let $I_i$ be the indicator variable which equals $1$ when $\textit{Minimizer}(w)^{[i]} \geq \dfrac{1+\epsilon}{d}$
	and equals $0$ otherwise.
	So, $E[I_i] = \left(1 -  \dfrac{1+\epsilon}{d} \right)^d$
	and,
	\[
	\prob{}{ q(w) \geq \frac{1 + \epsilon}{d}  } = Pr \left[ \sum_{1 \leq i \leq k} I_i \geq k/e \right]
	\]
	
	Using Lemma~\ref{lem:Hoeffding},
	\[
	\prob{}{ \sum_{1 \leq i \leq k} I_i \geq k/e } \leq \exp(-2k\delta^2) = \exp(-2 \log n (\delta / \epsilon)^{2}),
	\]
	where $\delta = 1/e - E[I_i]$.
	So,
	\begin{align*}
		\delta/\epsilon & = \dfrac{1/e - \left( 1 - \dfrac{1+\epsilon}{d} \right)^d}{\epsilon}
		\geq \dfrac{1/e - 1/e^{1+\epsilon}}{\epsilon} 
		\geq \dfrac{e^{\epsilon} - 1}{\epsilon \cdot e^{1+\epsilon}}
		\geq \dfrac{1}{e^{1+\epsilon}},
	\end{align*}
	where the first inequality follows from Lemma~\ref{lem:expbounds}.
	So, 
	$$Pr \left[ \sum_{1 \leq i \leq k} I_i \geq k/e \right] \leq (1/n)^c$$
	for some constant $c>0$.
\end{proof}

\end{appendix}

\end{document}